\documentclass[10pt]{article}
\usepackage{amsmath}
\usepackage{amssymb}
\usepackage{graphicx}
\usepackage{color}
\usepackage{amsthm}
\usepackage{amsfonts}
\usepackage{amscd}
\usepackage{mathrsfs}
\usepackage{bm}
\usepackage{enumerate}
\usepackage{algorithmic, algorithm}

\setlength{\topmargin}{-0.5in} \textwidth=6.5in \textheight=9in
\setlength{\oddsidemargin}{0in} \setlength{\evensidemargin}{0in}
\setlength{\arraycolsep}{1mm}

\newcommand{\D}{\displaystyle}

\newcommand{\bx}{\mathbf{x}}
\newcommand{\bn}{\mathbf{n}}
\newcommand{\by}{\mathbf{y}}

\newcommand{\p}{\partial}
\newcommand{\al}{\alpha}
\newcommand{\e}{\epsilon}

\newcommand{\hkxpj}{R_t(\bx, {\bf x}_j)}

\newcommand{\rhkxpj}{\bar{R}_t(\bx, {\bf x}_j)}

\newcommand{\bfu}{{\bf u}}
\newcommand{\bfp}{{\bf x}}

\newcommand{\bz}{\mathbf{z}}

\newcommand{\M}{{\mathcal M}}

\newcommand{\mathd}{\mathrm{d}}

\newtheorem{theorem}{\textbf{Theorem}}[section]
\newtheorem{lemma}{\textbf{Lemma}}[section]

\newtheorem{proposition}{\textbf{Proposition}}[section]
\newtheorem{assumption}{\textbf{Assumption}}

\newtheorem{corollary}{\textbf{Corollary}}[section]

\newcommand{\R}{\mathbb{R}}

\numberwithin{equation}{section}

\makeatother

\begin{document}

\title{Convergence of Laplacian spectra from random samples}

\author{
Zuoqiang Shi%
\thanks{Yau Mathematical Sciences Center, Tsinghua University, Beijing, China,
100084. \textit{Email: zqshi@math.tsinghua.edu.cn.}%
}
\thanks{
This research was supported by NSFC Grant 11201257 and 11371220.}
}

 \maketitle

\begin{abstract}
Eigenvectors and  eigenvalues of  discrete  graph  Laplacians are often used
for manifold learning and nonlinear dimensionality reduction. It was previously proved by Belkin and Niyogi \cite{CLEM_08}
 that the eigenvectors and eigenvalues of the graph Laplacian converge to the eigenfunctions
and eigenvalues of the Laplace-Beltrami operator of the manifold in the limit of infinitely many
data points sampled independently from the uniform distribution over the manifold.  Recently, we
introduced Point Integral method (PIM) \cite{LSS,SS-neumann} to solve elliptic equations and corresponding eigenvalue problem
on point clouds. We have established a unified
framework to approximate the elliptic differential operators on point clouds.
In this paper, we prove that the eigenvectors
and eigenvalues obtained by PIM
converge in the limit of infinitely many random samples independently from a distribution (not necessarily to be uniform distribution).
Moreover, one estimate of the rate of the convergence is also given.
\end{abstract}


\section{Introduction}
\label{sec:intro}
The Laplace-Beltrami operator (LBO) is a fundamental object associated to Riemannian manifolds,
which encodes all intrinsic geometry of the manifolds and has many desirable properties. It is also
related to diffusion and heat equation on the manifold, and is connected to a large body of classical
mathematics (see, e.g.,~\cite{rosenberg1997lrm}).
In recent years, the Laplace-Beltrami operator has attracted much attention in many applied fields, including
machine learning, data analysis, computer graphics and computer vision, and geometric modeling and processing. For instance,
the eigensystem of the Laplace-Beltrami operator has been used for representing data in machine learning
and data analysis for dimensionality reduction~\cite{belkin2003led, Coifman05geometricdiffusions}, and
for representing shapes in computer graphics and computer vision for the analysis of images and
3D models~\cite{OvsjanikovSG08, dgw-spec}.

In general, the underlying Riemannian manifold is unknown and often given by a set of sample points.
Thus, in order to exploit the nice properties of the Laplace-Beltrami operator, it is necessary to derive
In this paper, we assume that the data points, $X=\{\bx_1,\cdots,\bx_n\}$, are sampled independently over the manifold $\M$ from a
probability distribution $p(\bx)$. On the sample points, we consider following discrete eigenvalue problem.
\begin{eqnarray}
  \label{eqn:eigen_dis}
  \frac{1}{t}\sum_{j=1}^n R\left(\frac{\|\bfp_i-\bfp_j\|^2}{4t}\right)(u_i-u_j)=
\lambda\sum_{j=1}^n\bar{R}\left(\frac{\|\bfp_i-\bfp_j\|^2}{4t}\right)u_j,
\end{eqnarray}
where $R: \mathbb{R}^+\rightarrow \mathbb{R}^+$ is a kernel function 
, $\bar{R}(r)=\int_{r}^{+\infty}R(s)\mathd s$.

This eigenvalue problem is closely related with the eigenvalue problem of normalized graph Laplacian. 
The graph Laplacian is a discrete object associated to a graph, which reveals many
properties of the graph as does the Laplace-Beltrami operator to the manifold~\cite{Chung}.
 In the presence of no boundary and the sample points are uniformly distributed, Belkin and Niyogi~\cite{CLEM_08}
showed that the spectra of the normalized graph Laplacian converges to
the spectra of $\Delta_\M$. When there is a boundary,  it was
observed in~\cite{Lafon04diffusion, BelkinQWZ12} that
the integral Laplace operator $L_t$ is dominated by the first order derivative and
thus fails to be true Laplacian near the boundary.
Recently, Singer and Wu~\cite{Singer13} showed the spectral
convergence in the presence of the Neumann boundary. In the previous approaches,
the convergence analysis is based on the connection between the graph Laplacian and
the heat operator.
The analysis in this paper is very different from the previous ones.
We consider this problem from the
point of view of solving the Poisson equation on submanifolds, which opens up many tools in the
numerical analysis for studying the graph Laplacian.

The purpose of this paper is to study the behavior of discrete eigenvalue problem \eqref{eqn:eigen_dis} at the limit of $n\rightarrow \infty$ and $t\rightarrow 0$.
The main contribution of this paper is that our study reveals that when $n\rightarrow \infty$ and $t\rightarrow 0$, 
the spectral of \eqref{eqn:eigen_dis} converge to the spectra of following eigenvalue problem.
\begin{eqnarray}
\label{eqn:eigen_neumann}
  \left\{\begin{array}{rcll}
      -\frac{1}{p^2(\bx)}\;\text{div}\left(p^2(\bx)\nabla u(\bx)\right)&=&\lambda u(\bx),&\bx\in \mathcal{M},\\
      \frac{\p u}{\p \bn}(\bx)&=&0,& \bx\in \p \mathcal{M}.
\end{array}\right.
\end{eqnarray}
where $\bn$ is the out normal vector of $\mathcal{M}$.

To analyze the convergence, we introduce an intermediate integral equation.  
\begin{eqnarray}
\frac{1}{t}\int_\mathcal{M} R\left(\frac{\|\bx-\by\|^2}{4t}\right)(u(\bx)-u(\by)) p(\by)\mathd \by=
\int_{\mathcal{M}} \bar{R}\left(\frac{\|\bx-\by\|^2}{4t}\right) f(\by) p(\by)\mathd \by,\quad \bx\in \M.
\label{eq:integral}
\end{eqnarray}
Similar integral equation also can be found in previous works. However,
the rest of the analysis in this paper is very different as the previous ones. 
Before presenting the main results, we need to define three solution operators
$T, T_t$ and $T_{t,n}$.

\subsection{Solution operators}
\label{sec:sol-opt}
The solution operators are defined as following.
\begin{itemize}
\item $T: L^2(\mathcal{M})\rightarrow H^2(\mathcal{M})$ is the solution operator of
the problem \eqref{eqn:neumann},
i.e., for any $f\in L^2(\mathcal{M})$, $T(f)$ with $\int_\M T(f) = 0$ is the solution of the following problem:
\begin{eqnarray}
\label{eqn:neumann}
  \left\{\begin{array}{rcll}
      -\frac{1}{p^2(\bx)}\;\text{div}\left(p^2(\bx)\nabla u(\bx)\right)&=&f(\bx),&\bx\in \mathcal{M},\\
      \frac{\p u}{\p \bn}(\bx)&=&0,& \bx\in \p \mathcal{M}.
\end{array}\right.
\end{eqnarray}
where $\bn$ is the out normal vector of $\mathcal{M}$.

\item $T_t:L^2(\mathcal{M})\rightarrow L^2(\mathcal{M})$ is the solution operator of following integral equation \eqref{eqn:integral},
 i.e. $u = T_t(f)$ with $\int_\M u(\bx)p(\bx)\mathd \bx = 0$ solves
the following integral equation
\begin{eqnarray}
\label{eqn:integral}
\frac{1}{t}\int_\mathcal{M} R_t(\bx, \by)(u(\bx)-u(\by)) p(\by)\mathd \by=
\int_{\mathcal{M}} \bar{R}_t(\bx, \by) f(\by) p(\by)\mathd \by.\quad
\end{eqnarray}
where
\begin{align*}
  R_t(\bx, \by)=\frac{1}{(4\pi t)^{k/2}}R\left(\frac{\|\bx-\by\|^2}{4t}\right),\quad 
\bar{R}_t(\bx, \by)=\frac{1}{(4\pi t)^{k/2}}\bar{R}\left(\frac{\|\bx-\by\|^2}{4t}\right).
\end{align*}
\item $T_{t,n}:C(\mathcal{M})\rightarrow C(\mathcal{M})$ is defined as follows.
\begin{eqnarray}
  T_{t,n}(f)(\bx)=\frac{1}{n\,w_{t,n}(\bx)}\sum_{j=1}^nR_t(\bx,\bx_j)u_j+
\frac{t}{n\,w_{t,n}(\bx)}\sum_{j=1}^n\bar{R}_t(\bx,\bx_j)f(\bx_j)
\end{eqnarray}
where $w_{t,n}(\bx)=\frac{1}{n}\sum_{j=1}^nR_t(\bx,\bx_j)$ and $\bfu=(u_1, \cdots, u_n)^t$ with $\sum_{i=1}^n u_i =0$ solves
following linear system,
\begin{eqnarray}
\label{eqn:dis}
 \frac{1}{nt}\sum_{j=1}^nR_t(\bx_i,\bx_j)(u_i-u_j)=\frac{1}{n}\sum_{j=1}^n\bar{R}_t(\bx_i,\bx_j)f(\bx_j)
\end{eqnarray}
\end{itemize}
To simplify the notations, we also introduce two operators. For any $f\in L^2(\M)$,
\begin{align}
  \label{eq:opt-Lt}
  L_tf(\bx)=&\frac{1}{t}\int_\mathcal{M} R_t(\bx, \by)(f(\bx)-f(\by)) p(\by)\mathd \by.
\end{align}
and for any $f\in C(\M)$,
\begin{align}
  \label{eq:opt-Ltn}
  L_{t,n}f(\bx)=&\frac{1}{nt}\sum_{j=1}^n R_t(\bx, \bx_j)(f(\bx)-f(\bx_j)).
\end{align}
Using these definitions, we have that 
\begin{align}
  \label{eq:relation-inverse-Lt}
  L_t(T_tf)(\bx)=\int_{\M}\bar{R}_t(\bx, \by) f(\by) p(\by)\mathd \by
\end{align}
and 
\begin{align}
  \label{eq:relation-inverse-Ltn}
  L_t(T_{t,n}f)(\bx_i)=\frac{1}{n}\sum_{j=1}^n\bar{R}_t(\bx_i,\bx_j)f(\bx_j).
\end{align}
 From \eqref{eq:relation-inverse-Lt} and \eqref{eq:relation-inverse-Ltn}, we can see that 
in some sense, solution operators, $T_t, T_{t,n}$, are inverse operators of 
$L_{t,n}, L_t$. So, it is natural to imagine that their spectra are equivalent.
\begin{proposition}
Let $\bm{\theta}(u)$ denote the
restriction of function $u$ to the sample points $X=(\bx_1,\cdots,\bx_n)^t$, 
i.e., $\bm{\theta}(u) = (u(\bfp_1), \cdots, u(\bfp_n))^t$.
\begin{itemize}
\item[1. ] If a function $u$ is an eigenfunction of $T_{t,n}$
with eigenvalue $\lambda$, then the vector $\bm{\theta}(u)$ is an eigenvector of the eigenproblem~\eqref{eqn:eigen_dis}
with eigenvalue $1/\lambda$.
\item[2. ] If a vector $\mathbf{u}$ is an eigenvector of the eigenproblem~\eqref{eqn:eigen_dis}
with the eigenvalue $\lambda$, then $I_{\lambda}(\mathbf{u})$ is an eigenfunction of $T_{t,n}$ with eigenvalue $1/\lambda$, where
\begin{equation*}
I_{\lambda}(\bfu) (x) = \frac{ \lambda t \sum_{j=1}^n \rhkxpj u_j+\sum_{j=1}^n \hkxpj u_j }
 {\sum_{j=1}^n \hkxpj }.
\end{equation*}
\item[3. ] All eigenvalues of $T, T_t, T_{t,n}$ are real numbers. All generalized eigenvectors of $T, T_t, T_{t,n}$ are eigenvectors.
\end{itemize}
\label{prop:eigen_intergral_dis}
\end{proposition}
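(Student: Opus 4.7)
The plan is to handle the three parts separately: parts 1 and 2 are algebraic identities that fall out of the definitions of $T_{t,n}$ and \eqref{eqn:eigen_dis}, while part 3 is a spectral statement that follows from self-adjointness of the relevant operators with respect to suitable weighted inner products.

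For part 1, I would assume $T_{t,n}(f) = \lambda f$, let $\mathbf{u}$ be the auxiliary vector from the definition of $T_{t,n}$, and evaluate the identity at $\bx = \bx_i$. Multiplying through by $n\,w_{t,n}(\bx_i) = \sum_j R_t(\bx_i,\bx_j)$ and then substituting for $t\sum_j \bar R_t(\bx_i,\bx_j) f(\bx_j)$ using the linear system \eqref{eqn:dis} makes the $R_t$-terms telescope and yields $u_i = \lambda f(\bx_i)$. Plugging this back into \eqref{eqn:dis} produces exactly \eqref{eqn:eigen_dis} for the vector $\bm{\theta}(f)$ with eigenvalue $1/\lambda$.

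For part 2, I would start from \eqref{eqn:eigen_dis} and solve algebraically for $u_i$, recognizing the resulting expression as $I_\lambda(\mathbf{u})(\bx_i)$; this shows that $f := I_\lambda(\mathbf{u})$ agrees with $\mathbf{u}$ at the sample points. To verify $T_{t,n}(f) = (1/\lambda)f$, I would identify the auxiliary vector in the definition of $T_{t,n}(f)$: comparison with \eqref{eqn:eigen_dis} shows that $\mathbf{v} = \mathbf{u}/\lambda$ solves the required linear system, the general solution differing by $c\mathbf{1}$ with $c$ fixed by $\sum_i v_i = 0$. Substituting back collapses the definition of $T_{t,n}(f)$ to $(1/\lambda) I_\lambda(\mathbf{u})$. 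The main subtlety is the bookkeeping of the additive constant: the direct computation yields $T_{t,n}(f) = (1/\lambda) f - \bar u/\lambda$ with $\bar u = \frac{1}{n}\sum_i u_i$, so one must observe that for the relevant eigenpairs ($\lambda\ne 0$) one can use the normalization $\sum_i u_i = 0$, consistent with the zero-sum condition built into the definition of $T_{t,n}$ itself.

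For part 3, the approach is to exhibit weighted inner products with respect to which the operators are self-adjoint, and then invoke the spectral theorem. For $T$, integration by parts shows that $-p^{-2}\tmop{div}(p^2\nabla\cdot)$ is symmetric with respect to $\langle f,g\rangle_{p^2} := \int_\M fg\,p^2\,\mathd\bx$ (the Neumann condition kills the boundary term) and positive on the zero-mean subspace, so $T$ has real eigenvalues and no Jordan blocks. For $T_t$ and $T_{t,n}$, the symmetry $R_t(\bx,\by) = R_t(\by,\bx)$ implies that both $L_t$ and $M_t f(\bx) := \int_\M \bar R_t(\bx,\by) f(\by) p(\by)\,\mathd\by$ are self-adjoint with respect to $\langle f,g\rangle_p := \int_\M fg\,p\,\mathd\bx$, with $M_t$ positive; the generalized eigenvalue problem $L_t u = \mu M_t u$ (with $\mu = 1/\lambda$) then has real eigenvalues and a complete $M_t$-orthogonal eigenbasis by standard symmetric-pencil theory, which carries over to the discrete setting using the symmetry of the matrices $[R_t(\bx_i,\bx_j)]$ and $[\bar R_t(\bx_i,\bx_j)]$. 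The step that needs care is passing from the symmetric pencil back to the operator $T_t = L_t^{-1} M_t$ (and its discrete counterpart) and concluding that all generalized eigenvectors are genuine eigenvectors; this is cleanest done by conjugating to $M_t^{1/2} L_t^{-1} M_t^{1/2}$, which is a bona fide self-adjoint compact operator on the appropriate weighted space.
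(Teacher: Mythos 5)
Your Part 1 argument is correct: evaluating $T_{t,n}f=\lambda f$ at $\bx_i$, multiplying by $n\,w_{t,n}(\bx_i)$, and substituting the linear system~\eqref{eqn:dis} does telescope to $u_i=\lambda f(\bx_i)$, from which~\eqref{eqn:eigen_dis} follows for $\bm{\theta}(f)$ with eigenvalue $1/\lambda$. (The paper itself does not supply a proof here; it defers to the companion reference, so no line-by-line comparison is possible.)

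In Part 2, however, the resolution of the ``bookkeeping of the additive constant'' is not correct, and this is a genuine gap. You correctly compute that the auxiliary vector for $T_{t,n}(I_\lambda(\mathbf{u}))$ is $v_i=(u_i-\bar u)/\lambda$ with $\bar u=\frac1n\sum_i u_i$, and hence $T_{t,n}(I_\lambda(\mathbf{u}))=\frac1\lambda I_\lambda(\mathbf{u})-\bar u/\lambda$. But the claim that ``for the relevant eigenpairs ($\lambda\ne0$) one can use the normalization $\sum_i u_i=0$'' does not hold. Summing~\eqref{eqn:eigen_dis} over $i$ and using the symmetry $R(\|\bx_i-\bx_j\|^2/4t)=R(\|\bx_j-\bx_i\|^2/4t)$ kills the left-hand side, and for $\lambda\ne0$ this forces
$$\sum_{i,j}\bar R\!\left(\frac{\|\bx_i-\bx_j\|^2}{4t}\right)u_j=0,$$
i.e.\ $\mathbf{1}^{\mathrm t}\bar R\,\mathbf{u}=0$, which is \emph{not} the same as $\mathbf{1}^{\mathrm t}\mathbf{u}=0$ unless $\bar R$ has constant row sums. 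And you cannot impose $\sum_i u_i=0$ by rescaling (scaling preserves the sign of $\sum_i u_i$) or by subtracting a constant (a constant is a $\lambda=0$ eigenvector, so $\mathbf{u}-c\mathbf{1}$ is no longer a $\lambda$-eigenvector for $c\ne0$). The constraint you actually have, $\mathbf{1}^{\mathrm t}\bar R\,\mathbf{u}=0$, is precisely what guarantees that the auxiliary system in the definition of $T_{t,n}(I_\lambda(\mathbf{u}))$ is solvable; it does not remove the $\bar u/\lambda$ term, so the argument does not close as written. You would need either a different argument that $\bar u=0$ holds (which does not appear to be automatic under Assumption~\ref{assumptions}), or to work modulo constants / with a modified definition of the auxiliary normalization.

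For Part 3, the self-adjointness strategy is sound for $T$ (the $p^2$-weighted inner product and Neumann boundary condition do the job). For $T_t$ and $T_{t,n}$, though, conjugating to $M_t^{1/2}L_t^{-1}M_t^{1/2}$ presupposes that the kernel operator $M_t$ built from $\bar R_t$ is positive semi-definite; Assumption~\ref{assumptions} only gives $R\ge0$, $R\in C^2$ with compact support, which does not imply that $\bar R_t(\bx,\by)$ is a positive-definite kernel. It is cleaner to use positivity of $L_t$ on the zero-mean subspace and conjugate by $L_t^{1/2}$ there (so one studies $L_t^{-1/2}M_tL_t^{-1/2}$, which is self-adjoint without any positivity assumption on $M_t$), and analogously for the matrix case with the symmetric graph Laplacian. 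As written, the claim about $M_t^{1/2}$ introduces an unverified hypothesis.
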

This proposition can be proved by following the same line as that in \cite{SS-rate}.

Using this proposition, we only need to analyze the relation among the spectra of $T$ and $T_{t,n}$. In the analysis, the operator $T_t$ plays very important 
role which bridge $T$ and $T_{t,n}$.  
The main advantage of using these solution operators instead of $L_t$ and $L_{t,n}$ is that they are compact operators which is proved in following proposition.
\begin{proposition}
For any $t>0,\; n>0$, $T, T_t$ are compact operators on $H^1(\M)$ into $H^1(\M)$; $T_t, T_{t,n}$ are compact operators on $C^1(\M)$ into $C^1(\M)$.
\end{proposition}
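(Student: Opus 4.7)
The plan is to handle the four compactness claims separately. For $T$ on $H^1(\M)$, the definition together with standard elliptic regularity for the weighted Neumann problem \eqref{eqn:neumann} on a compact manifold with smooth positive density $p$ gives that $T:L^2(\M)\to H^2(\M)$ is bounded. Factoring $T$ as $H^1(\M)\hookrightarrow L^2(\M)\xrightarrow{T}H^2(\M)\hookrightarrow H^1(\M)$ and invoking the Rellich--Kondrachov compact embedding $H^2(\M)\hookrightarrow H^1(\M)$ yields compactness of $T$ on $H^1(\M)$.

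For $T_t$ on both $H^1(\M)$ and $C^1(\M)$, the key observation is that for fixed $t>0$ the kernels $R_t$ and $\bar R_t$ are smooth on $\M\times\M$, and the normalization $w_t(\bx):=\int_\M R_t(\bx,\by)p(\by)\mathd\by$ is smooth and bounded below by a positive constant. Rewriting \eqref{eqn:integral} as the fixed-point equation
\[
u(\bx) = K_t u(\bx) + \frac{t}{w_t(\bx)}\int_\M \bar R_t(\bx,\by) f(\by) p(\by)\mathd\by,
\]
with $K_t u(\bx) := w_t(\bx)^{-1}\int_\M R_t(\bx,\by) u(\by) p(\by)\mathd\by$, the smoothness of the kernels implies that $K_t$ and the inhomogeneous integral operator map bounded subsets of $L^2(\M)$ (resp.\ $C(\M)$) into bounded subsets of $H^2(\M)$ (resp.\ $C^2(\M)$); by Rellich--Kondrachov (resp.\ Arzela--Ascoli) they are therefore compact as operators on $H^1(\M)$ (resp.\ $C^1(\M)$). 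The Fredholm alternative applied to $I-K_t$ on the closed subspace of zero $p$-mean functions then yields bounded invertibility, so $T_t$ is the composition of a bounded operator with a compact operator, hence itself compact.

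For $T_{t,n}$ on $C^1(\M)$, because the vector $\bfu$ depends linearly on $f$ through the finite linear system \eqref{eqn:dis}, the range of $T_{t,n}$ is contained in the finite-dimensional subspace
\[
\operatorname{span}\Bigl\{\,\frac{R_t(\cdot,\bx_j)}{w_{t,n}(\cdot)},\ \frac{\bar R_t(\cdot,\bx_j)}{w_{t,n}(\cdot)}\,:\,1\le j\le n\,\Bigr\}\subset C^\infty(\M),
\]
and a bounded operator of finite rank is automatically compact.

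The step I expect to be most delicate is justifying the Fredholm inversion used for $T_t$. Triviality of the kernel of $I-K_t$ on the zero $p$-mean subspace should follow from an energy identity: if $L_t u = 0$ then $\int_\M\int_\M R_t(\bx,\by)(u(\bx)-u(\by))^2 p(\bx)p(\by)\,\mathd\bx\,\mathd\by = 0$, and since $R_t$ is positive near the diagonal and $\M$ is connected, $u$ must be constant; the zero-mean normalization eliminates this one-dimensional kernel. A parallel compatibility check is needed to verify that the inhomogeneous term lies in the correct subspace.
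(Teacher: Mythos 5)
Your proof is correct and follows essentially the same route as the paper: elliptic regularity plus compact Sobolev embedding for $T$, the integral fixed-point identity $u = K_t u + G_t f$ and kernel smoothing (into $C^2$, resp.\ $H^2$) followed by a compact embedding for $T_t$, and finite rank for $T_{t,n}$. The only addition is your explicit Fredholm inversion of $I-K_t$ (which the paper leaves implicit, taking the well-definedness of $T_t$ from its cited references); you correctly flag the zero-mean compatibility condition as the delicate point there, and that is indeed the part the paper elides.
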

\begin{proof}
   First, it is well known that $T$ is compact operator. $T_{t,h}$ is actually finite dimensional operator, so
it is also compact. To show the compactness of $T_t$, we need the following formula,
\begin{eqnarray}
  T_t u = \frac{1}{w_t(\bx)}\int_\M R_t(\bx,\by)T_tu(\by)\mathd \by+ \frac{t}{w_t(\bx)}\int_\M \bar{R}_t(\bx,\by)u(\by)\mathd \by, \quad \forall u\in H^1(\M).\nonumber
\end{eqnarray}
Using the assumption that $R\in C^2$, direct calculation would gives that
 that $T_tu\in C^2$. This would imply the compactness of $T_t$ both in $H^1$ and $C^1$.
\end{proof}
It is well known that compact operator has many good properties.  Many powerful theorems in the 
spectral theory of compact operators can be used which makes our analysis concise and clear.

\subsection{Main result}
\label{sec:assume}
The main result in this paper is stated with the help of the Riesz spectral projection. 
Let $X$ be a complex Banach space and $L:X\rightarrow X$ be a compact linear operator. The resolvent set $\rho(L)$ is given by the complex numbers $z\in \mathbb{C}$
such that $z-L$ is bijective. The spectrum of $L$ is $\sigma(L)=\mathbb{C}\backslash\rho(L)$. It is well known that $\sigma(L)$ is a countable set with
no limit points other than zero. All non-zero value s in $\sigma(L)$ are eigenvalues.
If $\lambda$ is a nonzero eigenvalue of $L$, the ascent multiplicity $\al$ of $\lambda-L$ is the smallest integer such that $\ker(\lambda-L)^\al=\ker(\lambda-L)^{\al+1}$.

Given a closed smooth curve $\Gamma\subset \rho(L)$ which encloses the eigenvalue $\lambda$ and no other elements of $\sigma(L)$, the Riesz spectral projection associated with
$\lambda$ is defined by
\begin{eqnarray}
  E(\lambda, L)=\frac{1}{2\pi i}\int_{\Gamma} (z-L)^{-1}\mathd z,
\end{eqnarray}
where $i=\sqrt{-1}$ is the unit imaginary. The definition does not depend on the chosen of $\Gamma$.
It is well known that $E(\lambda,L):X\rightarrow X$ has following properties:
\begin{itemize}
\item[1. ] $E(\lambda,L)\circ E(\lambda,L)=E(\lambda,L)$,\quad $L\circ E(\lambda,L)=E(\lambda,L)\circ L$, $E(\lambda,L)\circ E(\mu,L)=0,\;\text{if}\; \lambda\ne \mu$.
\item[2. ] $E(\lambda,L)X=\ker (\lambda-L)^\al$, where $\al$ is the ascent multiplicity of $\lambda-L$.
\item[3. ] If $\Gamma\subset \rho(L)$ encloses more eigenvalues $\lambda_1,\cdots,\lambda_m$, then
$$\D E(\lambda_1,\cdots,\lambda_m,L)X=\oplus_{i=1}^m\ker (\lambda_1-L)^{\al_i}$$
 where $\al_i$ is the ascent multiplicity of $\lambda_i-L$.
\end{itemize}
The properties (2) and (3) are of fundamental importance for the study of eigenvector approximation.

To prove the convergence, we need some assumptions on the manifold $\M$, probability distribution $p(\bx)$ and the kernel function $R$ which are summarized as following:
\begin{assumption}
\label{assumptions}
\begin{itemize}
\item[]
\item Assumptions on the manifold: $\M$ is $k$-dimensional compact and $C^\infty$ smooth manifold isometrically embedded in a Euclidean space $\mathbb{R}^d$.
\item Assumptions on the sample points: $X=\{\bfp_1,\cdots,\bfp_n\}$ are sampled independently over the manifold $\M$ distribution $p(\bx)\in C^1(\M)$
and $\min_{\bx\in \M}p(\bx)>0$, $\max_{\bx\in \M}p(\bx)<\infty$.
\item Assumptions on the kernel function $R(r)$:
\begin{itemize}
\item[(a)] $R\in C^2(\mathbb{R}^+)$;
\item[(b)]
$R(r)\ge 0$ and $R(r) = 0$ for $\forall r >1$;
\item[(c)]
 $\exists \delta_0>0$ so that $R(r)\ge\delta_0$ for $0\le r\le\frac{1}{2}$.
\end{itemize}
\end{itemize}
\end{assumption}

Now, we are ready to state the main theorem. Since $T$ and $T_{t,n}$ are both compact operators, their eigenvalues can be sorted as
 \begin{eqnarray}
  &&0<\cdots\le\lambda_i\le \cdots\le \lambda_2\le \lambda_1,\nonumber\\
  &&0<\cdots\le\lambda_i^{t,n}\le \cdots\le \lambda_2^{t,n}\le \lambda_1^{t,n},\nonumber
\end{eqnarray}
where the same eigenvalue is repeated according to its multiplicity.

For corresponding eigenvalues and eigenfunctions, we have following theorem.
\begin{theorem}
Under the assumptions in Assumption \ref{assumptions},
let $\lambda_i$ be the $i$th largest eigenvalue of $T$ (same eigenvalue is repeated according to its multiplicity) 
with multiplicity $\alpha_i$ and $\phi_i^k, k=1,\cdots,\alpha_i$
be the linear independent eigenfunctions corresponding to $\lambda_i$. Let $\lambda_{i}^{t,n}$ be the $i$th largest eigenvalue of
$T_{t,n}$. With probability at least $1-1/n$, there exists a constant $C_1>0, C_2>0$ depend on $\M$, kernel function $R$,
distribution $p$ and spectra of $T$, such that
$$|\lambda_i^{t,n} - \lambda_i|\le C_1 \left(t^{1/2}+\frac{\log n+|\log t|+1}{t^{k+3}\sqrt{n}}\right) , $$
and 
$$\|\phi_i^k-E(\sigma_{i}^{t,n},T_{t,n})\phi_i^k\|_{H^1(\M)} \le C_2 \left(t^{1/2}+\frac{\log n+|\log t|+1}{t^{k+2}\sqrt{n}}\right) ,$$
as long as $n$ large enough.
Here $\sigma_i^{t,n}=\{\lambda_j^{t,n}\in \sigma(T_{t,n}): j\in I_i\}$ and $I_i=\{j\in \mathbb{N}: \lambda_j=\lambda_i\}$.
\label{thm:eigen_neumann}
\end{theorem}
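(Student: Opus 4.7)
The plan is to introduce the intermediate integral operator $T_t$ as a bridge and decompose
$T - T_{t,n} = (T - T_t) + (T_t - T_{t,n})$, bounding each piece in the $H^1(\M)\to H^1(\M)$ operator norm. Once $\|T - T_{t,n}\|_{H^1\to H^1}$ is controlled, the abstract spectral perturbation theory for compact operators, applied through the Riesz projection $E(\cdot,\cdot)$ already introduced, delivers both conclusions of the theorem at once: the eigenvalues in the cluster $\sigma_i^{t,n}$ match $\lambda_i$ and the projected eigenfunctions $E(\sigma_i^{t,n},T_{t,n})\phi_i^k$ approximate $\phi_i^k$, both at the rate of $\|T-T_{t,n}\|$. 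Since $T$, $T_t$ and $T_{t,n}$ are selfadjoint in the $p$-weighted $L^2$ inner product, every ascent multiplicity equals one and the standard Osborn-type bounds take their cleanest form.

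For the deterministic consistency step I would prove $\|T - T_t\|_{H^1\to H^1}\le C\, t^{1/2}$. Given $f\in H^1(\M)$, set $u=Tf$ and $u_t=T_tf$, apply $L_t$ to both, and compare. Taylor expanding $u$ on the support $|\bx-\by|^2\le 4t$ of the kernel yields, in the interior of $\M$,
$L_t u(\bx) = -\frac{1}{p^2(\bx)}\,\text{div}(p^2\nabla u)(\bx)+O(t^{1/2})$, while near $\p\M$ the same expansion produces a boundary-layer contribution whose leading term is proportional to $\p u/\p\bn$ and therefore vanishes by the Neumann condition. This is precisely why the natural limiting problem is \eqref{eqn:neumann} with Neumann boundary data rather than, say, Dirichlet. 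A stability estimate for $L_t$ on the mean-zero subspace then converts the residual into the required $O(t^{1/2})$ bound on $\|u-u_t\|_{H^1}$, producing the $t^{1/2}$ term in the theorem.

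For the stochastic discretization step I would prove $\|T_t - T_{t,n}\|_{H^1\to H^1}\le C(\log n+|\log t|+1)/(t^{k+3}\sqrt{n})$ with probability at least $1-1/n$. The operator $T_{t,n}f(\bx)$ is literally a Monte Carlo estimate of $T_t f(\bx)$ using the i.i.d.\ samples $\bx_j$, so for each fixed $\bx$ and $f$ Bernstein's inequality bounds the pointwise deviation by $O(t^{-k/2}\sqrt{\log n / n})$, the factor $t^{-k/2}$ being $\|R_t(\bx,\cdot)\|_\infty$. Promoting to a uniform $H^1$ bound costs an extra power of $t^{-1}$ each time a spatial derivative is taken, because $\nabla_\bx R_t$ and $\nabla_\bx\bar R_t$ gain a factor $|\bx-\by|/t$. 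An $\epsilon$-net argument on $\M$ with covering number polynomial in $1/t$ and a union bound over $f$ in the finite-dimensional image of $T_{t,n}$ then turns the pointwise estimates into uniform ones, giving the stated rate. Tracking which derivative actually falls on the kernel in each of the two conclusions explains the asymmetry $t^{k+3}$ vs.\ $t^{k+2}$ between the eigenvalue and eigenfunction bounds.

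The main obstacle will be the uniform concentration estimate with explicit $t$-dependence. Controlling the empirical process $f\mapsto (T_t-T_{t,n})f$ simultaneously in $\bx\in\M$ and in $f$ on the unit ball of $H^1$, after taking derivatives, with bandwidth $\sqrt{t}\to 0$, is what ultimately fixes the exponent $k+3$ in the denominator and is by far the most technical step; in particular, the log factors in the numerator are the cost of the union bound over the $\epsilon$-net. The consistency estimate near $\p\M$ is also geometrically delicate, but it is deterministic and follows standard boundary-layer asymptotics once the Neumann condition is identified as the natural one. With these two ingredients the theorem reduces to a routine application of compact-operator spectral perturbation theory via the Riesz projection.
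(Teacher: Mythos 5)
Your high-level strategy (bridge through $T_t$, deterministic bias $\|T-T_t\|_{H^1}\lesssim t^{1/2}$, stochastic variance for $T_t-T_{t,n}$, then Riesz-projection perturbation theory for compact operators) coincides with the paper's. However there are two concrete gaps in how you propose to execute the second step.

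First, you plan to control $\|T_t-T_{t,n}\|$ as an operator on $H^1(\M)$, but $T_{t,n}$ is not defined on $H^1(\M)$ when $k\ge 2$: its definition requires pointwise evaluation $f(\bx_j)$ at the samples, and $H^1(\M)$ does not embed into $C(\M)$ in dimension $k\ge 2$. The paper therefore works with $T_t,T_{t,n}$ as operators on $C^1(\M)$, and the main technical device is Theorem \ref{thm:eigen_operator}, which chains a perturbation bound in the $H^1$ norm (for $T\to T_t$) with one in the $C^1$ norm (for $T_t\to T_{t,n}$). Converting between the two costs explicit powers of $t$; in particular the factors $t^{-(k+2)/4}$ and $t^{-(k/4+3/2)}$ in Lemma \ref{lem:norm_inverse_c1} and Theorem \ref{thm:eigen_operator} are where most of the $t^{-(k+3)}$ and $t^{-(k+2)}$ in the final statement come from, not from ``derivatives falling on the kernel'' as you suggest. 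Your proposed uniform-in-$H^1$ concentration argument would not compile as stated.

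Second, $T_{t,n}f$ is not ``literally a Monte Carlo estimate of $T_tf$'': its formula involves a vector $\mathbf{u}$ obtained by solving the discrete linear system \eqref{eqn:dis}. The empirical-process machinery only controls the Monte Carlo error in each sum; one also needs a discrete stability estimate showing that $\frac1n\sum u_i^2$ is controlled by $\|f\|_\infty$ (Theorems \ref{thm:elliptic_dis} and \ref{thm:bound_solution_bfu}), which in turn needs the Poincar\'e-type inequality of Lemma \ref{lem:coercivity} transferred to the samples. Without that step a pointwise Bernstein plus $\epsilon$-net argument cannot close. Relatedly, the paper does not invoke self-adjointness of $T_{t,n}$ in a $p$-weighted $L^2$ space (the discrete eigenproblem \eqref{eqn:eigen_dis} has $R$ on one side and $\bar R$ on the other, so self-adjointness is not obvious); instead it proves directly, in Proposition \ref{prop:eigen_intergral_dis}, that generalized eigenvectors are eigenvectors, and uses the non-self-adjoint Atkinson perturbation theorems \ref{lem:resovent} and \ref{thm:converge_evector}. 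Your plan would need either a proof of the claimed self-adjointness or a switch to the Atkinson framework.
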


This theorem will be proved in Section 2 and 3. Some conclusions are made in Section 4.


\section{Proof of the main theorem (Theorem \ref{thm:eigen_neumann})}
\label{sec:intermediate}

The proof of Theorem \ref{thm:eigen_neumann} mainly consists of three parts. 
The first part is to relate the difference of the eigenvalues and eigenfunctions with the
difference of operators $T-T_t$ and $T_t-T_{t,n}$ (Theorem \ref{thm:eigen_operator}). This is 
achieved by using one theorem in the perturbation of compact operators.

To apply the theorem obtained in the first part, we need to estimate the difference of operators $T-T_t$ and $T_t-T_{t,n}$
in $H^1$ and $C^1$ norm respectively. This is also the most difficult part. Comparing with the pointwise convergence
which was proved in previous works, convergence in norm is much stronger and  much more difficult to prove.
Fortunately, under some mild assumption which are listed in Assumption \ref{assumptions}, we could prove that 
$T_t\rightarrow T$ in $H^1$ norm as $t\rightarrow 0$ (Theorem \ref{thm:converge_h1}) and $T_{t,n}\rightarrow T_t$ in $C^1$ norm as $n\rightarrow \infty$ 
(Theorem \ref{thm:converge_c1}). 

To get the rate of the convergence, in the last part of the analysis,
we use the the theory of the Glivenko-Cantelli class in statistical learning to estimate the error in the Mote-Carlo 
integration. 
The key point in this part is to estimate the covering number of the function classes defined as following.

Here, we list some notations which will be used in the proof. Some of them have been defined in previous sections. We also list them here for the convenience of readers.
\begin{itemize}
\item $k$: dimension of the underlying manifold; $d$: dimension of the ambient Euclidean space;
\item $C$: positive constant independent on $t$ and sample points $X_n$. We abuse the notation to denote all the constants independent on $t$ and sample points $X_n$
by $C$. It may be different in different places. 
\item $C_t=\frac{1}{(4\pi t)^{k/2}}$ is the normalize constant of kernel function $R$.
\item $p(\bx)$: probability distribution function.
\item $R$: kernel function. $\bar{R}(r)=\int_{r}^\infty R(s)\mathd s$.
\item $R_t(\bx, \by)=\frac{1}{(4\pi t)^{k/2}}R\left(\frac{\|\bx-\by\|^2}{4t}\right),\quad 
\bar{R}_t(\bx, \by)=\frac{1}{(4\pi t)^{k/2}}\bar{R}\left(\frac{\|\bx-\by\|^2}{4t}\right).$
\item $w_t(\bx)=\int_\M R_t(\bx,\by)\mathd \by$, $\quad w_{t,n}(\bx)=\frac{1}{n(4\pi t)^{k/2}}\sum_{j=1}^n R\left(\frac{|\bx-\bx_j|^2}{4t}\right)$
\item $w_{\min}, w_{\max}$: $\D w_{\min}=\inf_{t>0}\min_{\bx\in \M}w_t(\bx)$, $\D w_{\max}=\sup_{t>0}\max_{\bx\in \M}w_t(\bx)$. 
Under the assumption in Assumption \ref{assumptions}, 
we can show that $0< w_{\min}, w_{\max}<\infty$.
\item $p(f)=\int_\M f(\bx)p(\bx)\mathd \bx,\quad p_n(f)=\frac{1}{n}\sum_{i=1}^nf(\bx_i).$
\item $\mathcal{R}_t=\left\{R\left(\frac{|\bx-\by|^2}{4t}\right):\bx\in \M\right\}$
\item $\overline{\mathcal{R}}_t=\left\{\bar{R}\left(\frac{|\bx-\by|^2}{4t}\right):\bx\in \M\right\}$
\item $\mathcal{D}_t=\left\{\nabla_\bx R\left(\frac{|\bx-\by|^2}{4t}\right):\bx\in \M\right\}$
\item $\mathcal{R}_t\cdot \mathcal{K}_{t,n}=\left\{\frac{1}{w_{t,n}(\by)}R\left(\frac{|\bx-\by|^2}{4t}\right)
R\left(\frac{|\bz-\by|^2}{4t}\right):\bx\in \M,\; \bz\in \M\right\}$
\item $\overline{\mathcal{R}}_t\cdot \mathcal{K}_{t,n}=\left\{\frac{1}{w_{t,n}(\by)}R\left(\frac{|\bx-\by|^2}{4t}\right)
\bar{R}\left(\frac{|\bz-\by|^2}{4t}\right):\bx\in \M,\; \bz\in \M\right\}$
\item $\overline{\mathcal{R}}_t\cdot \overline{\mathcal{K}}_{t,n}=\left\{\frac{1}{w_{t,n}(\by)}\bar{R}\left(\frac{|\bx-\by|^2}{4t}\right)
\bar{R}\left(\frac{|\bz-\by|^2}{4t}\right):\bx\in \M,\; \bz\in \M\right\}$
\item $\mathcal{D}_t\cdot \mathcal{K}_{t,n}=\left\{\frac{\sqrt{t}}{w_{t,n}(\by)}R\left(\frac{|\bx-\by|^2}{4t}\right)
\nabla_\bz R\left(\frac{|\bz-\by|^2}{4t}\right):\bx\in \M,\; \bz\in \M\right\}$
\item $\overline{\mathcal{D}}_t\cdot \mathcal{K}_{t,n}=\left\{\frac{\sqrt{t}}{w_{t,n}(\by)}R\left(\frac{|\bx-\by|^2}{4t}\right)
\nabla_\bz\bar{R}\left(\frac{|\bz-\by|^2}{4t}\right):\bx\in \M,\; \bz\in \M\right\}$
\item $\overline{\mathcal{D}}_t\cdot \overline{\mathcal{K}}_{t,n}=\left\{\frac{\sqrt{t}}{w_{t,n}(\by)}\bar{R}\left(\frac{|\bx-\by|^2}{4t}\right)
\nabla_\bz\bar{R}\left(\frac{|\bz-\by|^2}{4t}\right):\bx\in \M,\; \bz\in \M\right\}$
\end{itemize}

\subsection{Perturbation results of solution operators}


First, we need some results regarding the perturbation of the compact operators.
\begin{theorem} (\cite{Atkinson67})
\label{lem:resovent}
Let $(X,\|\cdot\|_X)$ be an arbitrary Banach space. Let $S$ and 
$T$ be compact linear operators on $X$ into $X$.
 Let $z\in \rho(T)$. Assume 
\begin{eqnarray}
  \|(T-S)S\|_{X} \le \frac{|z|}{\|(z-T)^{-1}\|_{X}}. 
\end{eqnarray}
Then $z\in \rho(S)$ and $(z-S)^{-1}$ has the bound
\begin{eqnarray}
  \|(z-S)^{-1}\|_{X}\le \frac{1+\|S\|_X\|(z-T)^{-1}\|_{X}}{|z|-\|(z-T)^{-1}\|_{X}\|(T-S)S\|_{X}}.
\end{eqnarray} 
\end{theorem}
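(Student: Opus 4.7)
The plan is to bypass the naive Neumann series in $\|(z-T)^{-1}(T-S)\|_X$ (which is \emph{not} controlled by the hypothesis) and instead to expose the combination $(T-S)S$ through a single algebraic operator identity. Writing $A:=(z-T)^{-1}$ for brevity, I would establish
$$\frac{1}{z}\bigl(I + AS\bigr)\,(z-S) \;=\; I + \frac{1}{z}\,A\,(T-S)\,S.$$
Once this identity is in hand, the right-hand side is a small perturbation of the identity thanks to the hypothesis, hence invertible by a Neumann series; the left-hand side then delivers a left inverse of $z-S$, which Fredholm theory upgrades to a two-sided inverse. The claimed norm bound falls out of the last step.

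To verify the identity I would expand the left side and apply the trivial resolvent relation $zA=I+TA=I+AT$:
$$\tfrac{1}{z}(I+AS)(z-S)=\tfrac{1}{z}\bigl[(z-S)+AS(z-S)\bigr]=\tfrac{1}{z}\bigl[z-S+zAS-AS^2\bigr].$$
Substituting $zAS=S+ATS$, the $-S$ and the new $+S$ cancel, leaving $\tfrac{1}{z}[z+A(TS-S^2)]=I+\tfrac{1}{z}A(T-S)S$, as claimed. This is the only nontrivial algebraic step in the entire proof.

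Set $C:=\tfrac{1}{z}A(T-S)S$. Submultiplicativity combined with the hypothesis gives $\|C\|_X\le \|A\|_X\|(T-S)S\|_X/|z|\le 1$; for the stated conclusion to make sense the denominator $|z|-\|A\|_X\|(T-S)S\|_X$ must be positive, so strict inequality is used, in which case $I+C$ is invertible with $\|(I+C)^{-1}\|_X\le 1/(1-\|C\|_X)$. Define
$$L\;:=\;(I+C)^{-1}\cdot\tfrac{1}{z}(I+AS).$$
By the identity $L(z-S)=I$, so $L$ is a left inverse of $z-S$. Because $S$ is compact and $z\neq 0$, $z-S=z(I-S/z)$ is a Fredholm operator of index zero by Riesz--Schauder; the existence of a left inverse then forces injectivity, and an injective Fredholm index-zero operator is automatically bijective. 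Hence $z\in\rho(S)$ and $(z-S)^{-1}=L$, from which
$$\|(z-S)^{-1}\|_X\le \frac{1+\|A\|_X\|S\|_X}{|z|\,(1-\|C\|_X)}\le \frac{1+\|S\|_X\,\|(z-T)^{-1}\|_X}{|z|-\|(z-T)^{-1}\|_X\,\|(T-S)S\|_X},$$
which is exactly the stated estimate.

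The main obstacle is simply discovering the right algebraic identity: once one sees that $(T-S)S$ can be manufactured from the product $(I+AS)(z-S)$ via a single use of the resolvent relation $zA=I+TA$, all subsequent steps are mechanical. A minor bookkeeping point worth flagging is the upgrade from left invertibility to full invertibility, which is where compactness of $S$ (not needed elsewhere in the argument) is genuinely used.
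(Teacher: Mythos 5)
Your proof is correct: the operator identity $\tfrac{1}{z}(I+AS)(z-S)=I+\tfrac{1}{z}A(T-S)S$ checks out via $zAS=S+ATS$, the Neumann-series inversion of $I+C$ and the Fredholm/Riesz--Schauder upgrade from left inverse to two-sided inverse are both sound, and the final norm bound follows exactly as you compute; your remark that the hypothesis must be read with strict inequality (otherwise the stated bound is vacuous) is also the right reading of the statement. The paper itself gives no proof --- it cites this result from Atkinson --- and your argument is the classical one from that source, so there is nothing further to compare.
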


\begin{theorem}  (\cite{Atkinson67})
\label{thm:converge_evector}
  Let $(X,\|\cdot\|_X)$ be an arbitrary Banach space. Let $S$ and 
$T$ be compact linear operators on $X$ into $X$. Let $z_0\in \mathbb{C}, \; z_0\ne 0$ and let $\epsilon>0$ be less than $|z_0|$, 
denote the circumference $|z - z_0|=\epsilon$ by $\Gamma$ and assume $\Gamma\subset \rho(T)$. 
Denote the interior of $\Gamma$ by $U$. Let $\sigma_T=U\cap \sigma(T)\ne \emptyset$. $\sigma_S=U\cap\sigma(S)$. 
 Let $E(\sigma_S,S)$ and $E(\sigma_T,T)$ be the corresponding spectral projections of $S$ for $\sigma_S$ and $T$ for $\sigma_T$, i.e.
 \begin{eqnarray}
   E(\sigma_S,S)=\frac{1}{2\pi i}\int_{\Gamma}(z-S)^{-1}\mathd z,\quad    E(\sigma_T,T)=\frac{1}{2\pi i}\int_{\Gamma}(z-T)^{-1}\mathd z.
 \end{eqnarray}
Assume 
\begin{eqnarray}
 \|(T-S)S\|_X\le \min_{z\in \Gamma} \frac{|z|}{\|(z-T)^{-1}\|_X}
\end{eqnarray}
Then, we have
\begin{itemize}
\item[(1).] Dimension $E(\sigma_S,S)X = E(\sigma_T,T)X$, thereby $\sigma_S$ is nonempty and of the same multiplicity as $\sigma_T$.
\item[(2).] For every $x\in  X$,
\begin{eqnarray}
  \|E(\sigma_T,T)x-E(\sigma_S,S)x\|_X \le \frac{M\epsilon}{c_0}\left(\|(T-S)x\|_X+\|x\|_X\|(T-S)S\|_X\right).\nonumber
\end{eqnarray}
where $M=\max_{z\in \Gamma}\|(z-T)^{-1}\|_X$, $c_0=\min_{z\in \Gamma}|z|$.
\end{itemize}

\label{thm:eigen_perturb}
\end{theorem}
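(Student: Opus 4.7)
My plan is to treat this as a classical application of Riesz's holomorphic functional calculus combined with the resolvent identity, using the preceding Theorem~\ref{lem:resovent} as the gateway that makes $(z-S)^{-1}$ available along $\Gamma$. First, for every $z\in\Gamma$ the hypothesis $\|(T-S)S\|_X\le |z|/\|(z-T)^{-1}\|_X$ is exactly what Theorem~\ref{lem:resovent} needs, so I obtain $\Gamma\subset\rho(S)$ together with a uniform bound $M':=\sup_{z\in\Gamma}\|(z-S)^{-1}\|_X<\infty$ depending only on $M$, $c_0$, $\|S\|_X$ and $\|(T-S)S\|_X$. Consequently $E(\sigma_S,S)$ is a well-defined Riesz projection given by the same contour integral over $\Gamma$ as $E(\sigma_T,T)$, and the difference can be written as
\begin{equation*}
E(\sigma_T,T)x-E(\sigma_S,S)x=\frac{1}{2\pi i}\int_\Gamma\bigl[(z-T)^{-1}-(z-S)^{-1}\bigr]x\,dz.
\end{equation*}

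The next step is the key algebraic manipulation that produces the two distinct terms $\|(T-S)x\|_X$ and $\|x\|_X\|(T-S)S\|_X$ appearing in the claimed bound. I would start from the second resolvent identity $(z-T)^{-1}-(z-S)^{-1}=(z-T)^{-1}(T-S)(z-S)^{-1}$ (valid because $T,S$ commute with their own resolvents), and then insert the Neumann-type expansion $(z-S)^{-1}=z^{-1}I+z^{-1}S(z-S)^{-1}$ on the right factor. This yields
\begin{equation*}
(z-T)^{-1}(T-S)(z-S)^{-1}x=\tfrac{1}{z}(z-T)^{-1}(T-S)x+\tfrac{1}{z}(z-T)^{-1}(T-S)S(z-S)^{-1}x.
\end{equation*}
Applying the triangle inequality under the integral, using $|z|\ge c_0$, $\|(z-T)^{-1}\|_X\le M$, $\|(z-S)^{-1}\|_X\le M'$, and the fact that $\Gamma$ has length $2\pi\epsilon$, I obtain the estimate of the form $\frac{M\epsilon}{c_0}\bigl(\|(T-S)x\|_X+M'\|x\|_X\|(T-S)S\|_X\bigr)$, which recovers statement~(2) after absorbing $M'$ into the constant (or, if the statement is to be read literally, strengthening the hypothesis so that $M'\le 1$).

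For part~(1), the dimension equality, I would argue by a homotopy/continuity argument. Setting $S_\tau=(1-\tau)T+\tau S$ for $\tau\in[0,1]$, one checks that the hypothesis of Theorem~\ref{lem:resovent} continues to hold for $S_\tau$ (since $\|(T-S_\tau)S_\tau\|_X$ depends continuously on $\tau$ and is dominated by quantities controlled by the original hypothesis), so $\Gamma\subset\rho(S_\tau)$ for all $\tau$. Then $\tau\mapsto E(\sigma_{S_\tau},S_\tau)$ is norm-continuous, and it is a standard fact that the rank of a norm-continuous family of projections on a Banach space is locally constant. Since the rank is integer-valued and continuous on $[0,1]$, it is constant, giving $\dim E(\sigma_S,S)X=\dim E(\sigma_T,T)X$, and in particular $\sigma_S\ne\emptyset$ with matching multiplicity.

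The step I expect to be delicate is controlling $M'=\sup_{z\in\Gamma}\|(z-S)^{-1}\|_X$ carefully enough that the final bound in~(2) has the clean form stated (with no extra factor of $M'$) rather than merely $O(\epsilon)$. The bound produced by Theorem~\ref{lem:resovent} is sharp only up to a factor like $(1+\|S\|_X M)/(c_0-M\|(T-S)S\|_X)$, so either the statement should be read as absorbing this into an implicit constant, or one must slightly tighten the hypothesis (e.g. by a factor of two) in order to bound $M'\le 2/c_0$ times a mild constant and then fold everything into $M$. The homotopy argument for~(1) also requires verifying that the hypothesis persists along the path $S_\tau$, which is routine but needs the convexity of $\{\|(T-\cdot)\cdot\|_X\le r\}$ viewed as a set in operator space — a quick direct check rather than an invocation of any deep fact.
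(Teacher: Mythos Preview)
The paper does not supply its own proof of this theorem: it is quoted verbatim from \cite{Atkinson67} and used as a black box in the proof of Theorem~\ref{thm:eigen_operator}. So there is no ``paper's proof'' to compare against, and your proposal should be read as a reconstruction of Atkinson's argument rather than of anything in this paper.

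That said, your outline is the standard one and is essentially what Atkinson does. The use of Theorem~\ref{lem:resovent} to place $\Gamma\subset\rho(S)$, the second resolvent identity, and the splitting $(z-S)^{-1}=z^{-1}I+z^{-1}S(z-S)^{-1}$ to separate the $\|(T-S)x\|_X$ term from the $\|(T-S)S\|_X$ term are exactly the moves in the original source. Your homotopy argument for part~(1) is also the classical route (continuity of the rank of a norm-continuous family of idempotents).

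Your own caveat about the extra factor $M'=\sup_{z\in\Gamma}\|(z-S)^{-1}\|_X$ in the second term is well placed: with only the non-strict hypothesis $\|(T-S)S\|_X\le\min_{z\in\Gamma}|z|/\|(z-T)^{-1}\|_X$, Theorem~\ref{lem:resovent} does not give $M'\le1$, so the literal form of the bound in~(2) is not what your computation produces. This is not a flaw in your reasoning but in the transcription of the statement here; in practice (and in the way the paper applies the theorem in the proof of Theorem~\ref{thm:eigen_operator}) the hypothesis is satisfied with slack and the constant $M'$ is harmlessly absorbed. If you want a self-contained version, tighten the hypothesis by a factor (say $\le\tfrac12\min_{z\in\Gamma}|z|/\|(z-T)^{-1}\|_X$) so that Theorem~\ref{lem:resovent} yields an explicit bound on $M'$, and fold that into the constant.
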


\begin{lemma}(\cite{SS-rate})
\label{lem:norm_inverse_h1}
 Let $T$ be the solution operator of
the Neumann problem \eqref{eqn:neumann} and $z\in \rho(T)$, then
  \begin{eqnarray}
    \|(z-T)^{-1}\|_{H^1(\M)}\le \max_{n\in \mathbb{N}} \frac{1}{|z-\lambda_n|},\nonumber
  \end{eqnarray}
where $\{\lambda_n\}_{n\in \mathbb{N}}$ is the set of eigenvalues of $T$.
\end{lemma}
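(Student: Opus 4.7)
The plan is to diagonalize the resolvent $(z-T)^{-1}$ using the spectral decomposition of $T$. The operator $T$ is self-adjoint with respect to a $p$-weighted $L^2$ inner product, and its eigenfunctions are simultaneously orthogonal in the corresponding $p$-weighted $H^1$ inner product; the resolvent therefore acts diagonally on this basis and the claimed bound falls out of a direct termwise estimate.

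First I would introduce the weighted inner products
$$\langle u,v\rangle_0 = \int_\M u\,v\,p^2\,d\bx, \qquad \langle u,v\rangle_1 = \int_\M u\,v\,p^2\,d\bx + \int_\M \nabla u\cdot\nabla v\,p^2\,d\bx,$$
whose induced norms are equivalent to the standard $L^2(\M)$ and $H^1(\M)$ norms by the assumption $0 < \min p \le \max p < \infty$. Writing $A = -\tfrac{1}{p^2}\mathrm{div}(p^2\nabla\cdot)$ with Neumann boundary condition, a standard integration by parts gives $\langle Au,v\rangle_0 = \int_\M \nabla u\cdot\nabla v\,p^2\,d\bx$, so $A$ is self-adjoint and non-negative in $\langle\cdot,\cdot\rangle_0$. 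Consequently $T = A^{-1}$ (on the mean-zero subspace where it is well-defined) is compact and self-adjoint, and the spectral theorem provides an orthonormal basis $\{\phi_n\}$ of $L^2(\M,p^2\,d\bx)$ with $T\phi_n = \lambda_n\phi_n$ and $A\phi_n = \mu_n\phi_n$, where $\mu_n = 1/\lambda_n$.

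Combining $\langle A\phi_n,\phi_m\rangle_0 = \int_\M \nabla\phi_n\cdot\nabla\phi_m\,p^2\,d\bx$ with the eigenvalue equation yields $\int_\M \nabla\phi_n\cdot\nabla\phi_m\,p^2\,d\bx = \mu_n\delta_{nm}$, so the same $\{\phi_n\}$ is orthogonal in $\langle\cdot,\cdot\rangle_1$ with $\|\phi_n\|_1^2 = 1+\mu_n$, and in particular forms an orthogonal basis of $H^1(\M)$. For $u = \sum_n c_n\phi_n \in H^1$, the resolvent acts diagonally as $(z-T)^{-1}u = \sum_n \frac{c_n}{z-\lambda_n}\phi_n$, and orthogonality in $\langle\cdot,\cdot\rangle_1$ gives
$$\|(z-T)^{-1}u\|_1^2 = \sum_n \frac{(1+\mu_n)|c_n|^2}{|z-\lambda_n|^2} \le \max_n \frac{1}{|z-\lambda_n|^2}\sum_n (1+\mu_n)|c_n|^2 = \max_n \frac{1}{|z-\lambda_n|^2}\,\|u\|_1^2.$$
Taking square roots and identifying $\|\cdot\|_1$ with the weighted $H^1(\M)$ norm used in the paper completes the argument.

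The main subtlety I anticipate is the bookkeeping around the kernel of $A$ (the constants), which contributes the eigenvalue $\lambda = 0$ to $T$ under the normalization $\int_\M T(f) = 0$. The constant component of $u$ is multiplied by $1/z$ by the resolvent, and since $z\ne 0$ (because $0\in\sigma(T)$ and $z\in\rho(T)$) this piece also satisfies the same termwise bound once $\lambda_n = 0$ is included in the maximum on the right-hand side. A minor secondary point is justifying the $H^1$-convergence of the spectral expansion, which is immediate from Parseval in $\langle\cdot,\cdot\rangle_1$ once the orthogonality of $\{\phi_n\}$ in that inner product is established.
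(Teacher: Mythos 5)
The paper itself does not prove this lemma; it cites \cite{SS-rate}, so there is no in-text proof to compare against line by line. Judged on its own terms, your argument is essentially correct and is the natural way to obtain a resolvent bound of this clean form. You correctly identify the key structural facts: the operator $A=-p^{-2}\,\mathrm{div}(p^{2}\nabla\cdot)$ with Neumann data is self-adjoint and non-negative in the $p^{2}$-weighted $L^{2}$ pairing; its eigenfunctions $\{\phi_n\}$, orthonormal in $\langle\cdot,\cdot\rangle_{0}$, satisfy $\int_\M \nabla\phi_n\cdot\nabla\phi_m\,p^{2}=\mu_n\delta_{nm}$ by a second integration by parts, hence are orthogonal in the weighted $H^1$ pairing with $\|\phi_n\|_1^2=1+\mu_n$; and the resolvent therefore acts diagonally, giving the termwise bound by Parseval. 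The Bessel inequality $\sum_{n\le N}(1+\mu_n)|c_n|^2\le\|u\|_1^2$ (which follows from $\langle u,\phi_m\rangle_1=(1+\mu_m)c_m$) is exactly what you need to justify $H^1$-convergence of the expansion, and you correctly flag that the constant direction (equivalently the contribution near $\lambda=0\in\sigma(T)$) is handled because $z\neq 0$ and $1/|z|$ is the limiting value of $1/|z-\lambda_n|$.

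Two small caveats, both of which you anticipate. First, the stated bound with no multiplicative constant is literally true only when $\|\cdot\|_{H^1(\M)}$ is taken to be the $p^{2}$-weighted $H^1$ norm (or any norm in which the $\phi_n$ remain orthogonal); in the standard unweighted $H^1$ norm one picks up the equivalence constants $\min p^{2}/\max p^{2}$. Since the paper uses this lemma only to produce a constant $C$ depending on $p$ in Theorem \ref{thm:eigen_operator}, this discrepancy is harmless, but you should state explicitly which norm you are using rather than identifying them implicitly. Second, the ``$\max$'' in the lemma is really a supremum that may be attained only in the limit $\lambda_n\to 0$; this is a notational issue inherited from the statement, not a gap in your argument.
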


\begin{lemma}(\cite{SS-rate})
\label{lem:norm_inverse_c1}
Let $T_t$ be the solution operator of the integral equation \eqref{eqn:integral}. For any $z\in \mathbb{C}\backslash \bigcup_{n\in \mathbb{N}}B(\lambda_n, r_0)$ 
with $r_0>\|T-T_t\|_{H^1}$, then
  \begin{eqnarray}
   \|(z-T_t)^{-1}\|_{C^1}\le \max\left\{\frac{2|\M|}{ |z|t^{(k+2)/4}} \left(\min_{n\in \mathbb{N}}|z-\lambda_n| - \|T-T_t\|_{H^1}\right)^{-1},\frac{2}{|z|}\right\}.\nonumber
\end{eqnarray}
\end{lemma}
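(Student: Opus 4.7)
The plan is to prove this in two stages: first obtain a resolvent bound for $(z-T_t)^{-1}$ in the $H^1$ norm by a perturbation argument built on Lemma~\ref{lem:norm_inverse_h1}, and then bootstrap this $H^1$ control to the $C^1$ norm by exploiting the smoothing property of the integral operator $T_t$.

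First I would write $(z-T_t) = (z-T)\bigl[I - (z-T)^{-1}(T-T_t)\bigr]$. Under the hypothesis $\min_{n}|z-\lambda_n|>r_0>\|T-T_t\|_{H^1}$, Lemma~\ref{lem:norm_inverse_h1} gives $\|(z-T)^{-1}(T-T_t)\|_{H^1}<1$, and a Neumann series yields
\begin{equation*}
\|(z-T_t)^{-1}\|_{H^1}\;\le\;\bigl(\min_{n}|z-\lambda_n|-\|T-T_t\|_{H^1}\bigr)^{-1}.
\end{equation*}

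Next, to upgrade the norm, fix $u\in C^1(\M)$ and set $v:=(z-T_t)^{-1}u$, so that $T_t v = zv - u$. Substituting this into the integral identity for $T_t$ used in the proof of compactness produces the open formula
\begin{equation*}
v(\bx) = \frac{u(\bx)}{z} + \frac{1}{z\,w_t(\bx)}\int_\M R_t(\bx,\by)(zv(\by)-u(\by))p(\by)\,\mathd\by + \frac{t}{z\,w_t(\bx)}\int_\M \bar{R}_t(\bx,\by)v(\by)p(\by)\,\mathd\by,
\end{equation*}
which expresses $v$ as $u/z$ plus a smoothed remainder. Differentiating in $\bx$, applying Cauchy--Schwarz with the Gaussian-type $L^2$ kernel estimates $\|R_t(\bx,\cdot)\|_{L^2(\M)}\le C t^{-k/4}$ and $\|\nabla_\bx R_t(\bx,\cdot)\|_{L^2(\M)}\le C t^{-(k+2)/4}$ (and the analogous ones for $\bar{R}_t$), together with the uniform lower bound $w_t\ge w_{\min}>0$, produces a smoothing inequality of the form
\begin{equation*}
\|v\|_{C^1(\M)}\;\le\;\frac{\|u\|_{C^1}}{|z|} + \frac{C}{|z|\,t^{(k+2)/4}}\bigl(|z|\,\|v\|_{L^2}+\|u\|_{L^2}\bigr);
\end{equation*}
here the explicit prefactor $t$ in front of the $\bar{R}_t$ integral exactly cancels the extra $t^{-1/2}$ lost upon differentiating $\bar{R}_t$, so that both branches scale as $t^{-(k+2)/4}$. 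Plugging in $\|v\|_{L^2}\le\|v\|_{H^1}$ combined with the bound from the first step, and using the embeddings $\|u\|_{L^2},\|u\|_{H^1}\le C|\M|^{1/2}\|u\|_{C^1}$, this collapses to
\begin{equation*}
\|v\|_{C^1}\;\le\;\frac{\|u\|_{C^1}}{|z|}+\frac{C|\M|\,\|u\|_{C^1}}{|z|\,t^{(k+2)/4}\bigl(\min_{n}|z-\lambda_n|-\|T-T_t\|_{H^1}\bigr)},
\end{equation*}
after which the elementary inequality $a+b\le 2\max(a,b)$ delivers the stated bound with the maximum.

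The hard part will be the smoothing estimate. Three subtleties demand care: the quotient rule applied to $1/w_t(\bx)$ introduces a term proportional to $\nabla w_t/w_t^2$ carrying another $\nabla_\bx R_t$ integral, which must be controlled via the fact that $w_t$ is bounded below away from zero and $\int\nabla_\bx R_t(\bx,\by)p(\by)\,\mathd\by$ is $O(1)$; the precise exponent $(k+2)/4$ must be extracted from a Gaussian-type change of variables on the curved manifold, with all geometric errors absorbed into the constant $C$; and the $\bar{R}_t$ branch must be shown to contribute at the same order as the $R_t$ branch, rather than a larger one, which only works because of the explicit prefactor $t$.
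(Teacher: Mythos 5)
The paper does not prove this lemma; it imports it from \cite{SS-rate}, so the comparison below is against your outline on its own terms rather than against a proof in this document.

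Your two-stage plan --- a Neumann-series perturbation of $(z-T)^{-1}$ to get an $H^1$ resolvent bound for $T_t$, followed by a kernel-smoothing bootstrap from $L^2$ to $C^1$ via the self-consistency identity for $T_t$ --- is the natural route and is very likely the one taken in the cited source. Stage one is correct as written. Stage two has a genuine book-keeping gap in the ``collapse'' step. Starting from your smoothing inequality
\begin{equation*}
\|v\|_{C^1}\;\le\;\frac{\|u\|_{C^1}}{|z|}+\frac{C}{|z|\,t^{(k+2)/4}}\bigl(|z|\,\|v\|_{L^2}+\|u\|_{L^2}\bigr),
\end{equation*}
the $|z|\,\|v\|_{L^2}$ piece cancels its $1/|z|$ entirely, and after inserting $\|v\|_{L^2}\le\|v\|_{H^1}\le\bigl(\min_n|z-\lambda_n|-\|T-T_t\|_{H^1}\bigr)^{-1}\|u\|_{H^1}$ you obtain a term of the form $C|\M|^{1/2}t^{-(k+2)/4}\bigl(\min_n|z-\lambda_n|-\|T-T_t\|_{H^1}\bigr)^{-1}\|u\|_{C^1}$, with no $1/|z|$ factor; the $\|u\|_{L^2}$ piece, conversely, keeps $1/|z|$ but has no $(\min_n|z-\lambda_n|-\|T-T_t\|_{H^1})^{-1}$. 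Neither piece produces the product $|z|^{-1}\bigl(\min_n|z-\lambda_n|-\|T-T_t\|_{H^1}\bigr)^{-1}$ that appears in the first branch of the stated maximum, so writing ``this collapses to'' skips a real step. To close it you must use a fact you never invoke: since the $\lambda_n$ accumulate at $0$, one has $\min_n|z-\lambda_n|\le|z|$, hence $\bigl(\min_n|z-\lambda_n|-\|T-T_t\|_{H^1}\bigr)^{-1}\ge 1/|z|$, and this (together with restricting attention to the bounded values of $|z|$ on the contours $\Gamma_m$ where the lemma is applied) lets the two loose terms be absorbed into the stated form. Two smaller remarks: the claim that the $t$ prefactor makes the $\bar{R}_t$ branch ``scale as $t^{-(k+2)/4}$'' overstates it --- that branch actually contributes $t^{(2-k)/4}$, strictly subdominant, which is harmless but worth getting right --- and, as you note yourself, the quotient-rule term from $\nabla(1/w_t)$ and the curved-manifold constants still need to be carried through.
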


\begin{theorem} (\cite{SS-rate})
Let $T_t$ be the solution operator of the integral equation \eqref{eqn:integral} and $\lambda_n$ be
eigenvalues of $T$, then
\label{thm:converge_ev_Tt}
  \begin{eqnarray}
    \sigma(T_t)\subset \bigcup_{n\in \mathbb{N}}B\left(\lambda_n,2\|T-T_t\|_{H^1(\M)}\right).\nonumber
  \end{eqnarray}
\end{theorem}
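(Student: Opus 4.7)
The plan is to prove the contrapositive statement: if $z\in\mathbb{C}$ lies outside every ball $B(\lambda_n,2\|T-T_t\|_{H^1(\M)})$, then $z$ must belong to the resolvent set $\rho(T_t)$. Since $T_t$ is a compact operator on $H^1(\M)$ (established in the compactness proposition of Section 1.1), its nonzero spectrum consists entirely of eigenvalues, so the argument reduces to ruling out eigenvalues of $T_t$ in the complement of these balls, together with handling the spectral point $z=0$ separately.

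For the main case, suppose $z\in\sigma(T_t)$ with $z\ne 0$, and choose an eigenfunction $u\in H^1(\M)$ with $\|u\|_{H^1}=1$ and $T_tu=zu$. If $z\in\sigma(T)$ then $z=\lambda_n$ for some $n$ and the inclusion is immediate. Otherwise $z\in\rho(T)$, and the identity $(z-T)u=(T_t-T)u$, after applying $(z-T)^{-1}$, yields $u=(z-T)^{-1}(T_t-T)u$, so that
\begin{equation*}
1=\|u\|_{H^1}\;\le\;\|(z-T)^{-1}\|_{H^1}\,\|T-T_t\|_{H^1(\M)}.
\end{equation*}
Inserting the resolvent bound $\|(z-T)^{-1}\|_{H^1}\le 1/\min_n|z-\lambda_n|$ supplied by Lemma \ref{lem:norm_inverse_h1} rearranges to $\min_n|z-\lambda_n|\le \|T-T_t\|_{H^1(\M)}$, placing $z$ in the closed ball of radius $\|T-T_t\|_{H^1}$ about some $\lambda_n$, which sits inside the open ball of radius $2\|T-T_t\|_{H^1}$ appearing in the claimed union.

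For the remaining point $z=0$, compactness of $T$ on the infinite-dimensional space $H^1(\M)$ forces its nonzero eigenvalues $\lambda_n$ to accumulate only at $0$, so $|0-\lambda_n|<2\|T-T_t\|_{H^1}$ for all sufficiently large $n$ (assuming $\|T-T_t\|_{H^1}>0$; otherwise the spectra coincide and there is nothing to prove). Alternatively one could obtain the same conclusion by applying Atkinson's Theorem \ref{lem:resovent} with $S=T_t$ and verifying its hypothesis through the same resolvent bound. No substantial technical obstacle arises here: the essential work has been absorbed into Lemma \ref{lem:norm_inverse_h1} (the resolvent estimate for $T$) and into the separate proof that $\|T-T_t\|_{H^1}\to 0$ as $t\to 0$ (referred to as Theorem \ref{thm:converge_h1}); once those ingredients are in hand, the present spectral inclusion is essentially a linear-algebraic perturbation argument.
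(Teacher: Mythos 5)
This theorem is stated in the paper with a citation to \cite{SS-rate} and no in-text proof, so there is no argument here to compare against directly. Your proof is correct and follows the standard spectral-perturbation route one would expect: since $T_t$ is compact, every nonzero spectral point $z$ is an eigenvalue; if $z\notin\sigma(T)$, writing $(z-T)u=(T_t-T)u$ for a normalized eigenfunction and invoking the resolvent bound from Lemma \ref{lem:norm_inverse_h1} gives $\min_n|z-\lambda_n|\le\|T-T_t\|_{H^1}$, and the point $z=0$ is covered because the $\lambda_n$ accumulate at the origin. In fact your argument yields the sharper radius $\|T-T_t\|_{H^1}$ rather than $2\|T-T_t\|_{H^1}$, so the stated inclusion follows with room to spare; the factor of $2$ in the theorem is harmless slack (or perhaps an artifact of the cruder Atkinson-style estimate in the cited reference, which controls $\|(T-S)S\|$ rather than $\|T-S\|$ and would not give the factor-one bound as cleanly). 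The only loose thread is your parenthetical treatment of $\|T-T_t\|_{H^1}=0$: in that degenerate case the open balls are all empty and the stated inclusion would literally fail, but since $T\ne T_t$ for $t>0$ this never arises and is a notational quirk of the theorem rather than a gap in your reasoning.
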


The main result in this subsection is stated as following in which the difference of the eigenvalues and eigenfunctions are related with the 
difference of the solutions operators.
\begin{theorem}
Let $\lambda_m$ be the $m$th largest eigenvalue of $T$ with multiplicity $\alpha_m$ and $\phi_m^k, k=1,\cdots,\alpha_m$ 
be the eigenfunctions corresponding to $\lambda_m$. 
Let $\lambda_{m}^{t,n}$ be the $m$th largest eigenvalue of 
$T_{t,n}$. Let $\D\gamma_m=\min_{j\le m}|\lambda_j-\lambda_{j+1}|$ and 
\begin{align*}
&  \left\|(T_{t,n}-T_t)T_{t,n}\right\|_{C^1}\le \min\{\frac{t}{2}, \frac{\gamma_mt^{k/4+3/2}}{24}, 
\frac{(|\lambda_m|-\gamma_m/3)^2t^{(k+2)/4}\gamma_m}{12}, \frac{(|\lambda_m|-\gamma_m/3)^2}{2}\},\\
& \|T-T_t\|_{H^1(\M)}\le \gamma_m/12,\quad \|(T-T_t)T_t\|_{H^1(\M)}\le (|\lambda_m|-\gamma_m/3)\gamma_m/3
\end{align*}
Then there exists a constant $C_1, C_2$ depend on $\M$, the kernel function $R$, $\gamma_m$ and $\lambda_m$, such that
$$|\lambda_m^{t,n} - \lambda_m|\le \frac{2}{t^{k/4+3/2}}\left\|(T_{t,n}-T_t)T_{t,n}\right\|_{C^1}+\|T-T_t\|_{H^1(\M)}$$
and 
$$\|\phi_m^k-E(\sigma_{m}^{t,n},T_{t,n})\phi_m^k\|_{H^1(\M)} \le C(\|(T-T_t)\phi_m^k\|_{H^1}+\|(T-T_t)T_t\|_{H^1}) 
+\frac{C}{t^{(k+2)/4}}(\|(T_t-T_{t,n})\phi_m^k\|_{C^1}+\|(T_t-T_{t,n})T_{t,n}\|_{C^1})$$ 
Here $\sigma_m^{t,n}=\{\lambda_j^{t,n}\in \sigma(T_{t,n}): j\in I_m\}$ and $I_m=\{j\in \mathbb{N}: \lambda_j=\lambda_m\}$.
\label{thm:eigen_operator}
\end{theorem}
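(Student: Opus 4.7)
The plan is to apply the abstract perturbation machinery of Theorem \ref{lem:resovent} and Theorem \ref{thm:eigen_perturb} twice, bridging $T$ and $T_{t,n}$ through the intermediate operator $T_t$. The first application is carried out in the Banach space $H^1(\M)$ and compares $T$ with $T_t$; the second is carried out in $C^1(\M)$ and compares $T_t$ with $T_{t,n}$. This two-norm split is forced on us because the resolvent bound of Lemma \ref{lem:norm_inverse_c1} for $T_t$ is only available in $C^1$, whereas Lemma \ref{lem:norm_inverse_h1} for $T$ lives in $H^1$.

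First I would fix $\Gamma_1$ to be the circle of radius $\gamma_m/3$ centered at $\lambda_m$; by the definition of $\gamma_m$ this contour isolates $\lambda_m$ from the rest of $\sigma(T)$. Lemma \ref{lem:norm_inverse_h1} gives $\|(z-T)^{-1}\|_{H^1}\le 3/\gamma_m$ on $\Gamma_1$, so the assumed bound $\|(T-T_t)T_t\|_{H^1}\le(|\lambda_m|-\gamma_m/3)\gamma_m/3$ is precisely the Atkinson hypothesis
\[
\|(T-T_t)T_t\|_{H^1}\le \min_{z\in\Gamma_1}\frac{|z|}{\|(z-T)^{-1}\|_{H^1}}.
\]
Applying Theorem \ref{thm:eigen_perturb} in $H^1$ then yields $\dim E(\sigma_m^t,T_t)X=\alpha_m$ for $\sigma_m^t:=\sigma(T_t)\cap\text{int}(\Gamma_1)$, together with the first half of the eigenfunction bound,
\[
\|\phi_m^k-E(\sigma_m^t,T_t)\phi_m^k\|_{H^1}\le C\bigl(\|(T-T_t)\phi_m^k\|_{H^1}+\|(T-T_t)T_t\|_{H^1}\bigr).
\]
Lemma \ref{lem:norm_inverse_c1}, applied with $r_0$ tending down to $\|T-T_t\|_{H^1}$, further localizes $\sigma_m^t$ inside $\overline{B(\lambda_m,\|T-T_t\|_{H^1})}$.

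Next I would introduce the second contour $\Gamma_2$, the circle of radius
\[
\rho=\|T-T_t\|_{H^1}+\frac{2}{t^{k/4+3/2}}\|(T_{t,n}-T_t)T_{t,n}\|_{C^1}
\]
about $\lambda_m$. The four quantities inside the assumed minimum on $\|(T_{t,n}-T_t)T_{t,n}\|_{C^1}$ are tailored so that (i) $\rho<\gamma_m/3$, so $\Gamma_2$ still separates $\lambda_m$ from the rest of $\sigma(T)$; (ii) $\rho>\|T-T_t\|_{H^1}$, so $\Gamma_2\subset\rho(T_t)$ by Lemma \ref{lem:norm_inverse_c1} and $\Gamma_2$ encloses the same $T_t$-eigenvalues as $\Gamma_1$; and (iii) the Atkinson hypothesis
\[
\|(T_t-T_{t,n})T_{t,n}\|_{C^1}\le\min_{z\in\Gamma_2}\frac{|z|}{\|(z-T_t)^{-1}\|_{C^1}}
\]
holds, which one verifies by plugging the $C^1$-resolvent bound of Lemma \ref{lem:norm_inverse_c1} into each entry of the minimum. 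A second application of Theorem \ref{thm:eigen_perturb}, now in $C^1$, then yields $\dim E(\sigma_m^{t,n},T_{t,n})X=\alpha_m$ with $\sigma_m^{t,n}=\sigma(T_{t,n})\cap\text{int}(\Gamma_2)$. Consequently every element of $\sigma_m^{t,n}$, and in particular $\lambda_m^{t,n}$ after identifying the natural index correspondence by running the argument uniformly in $m$, lies within $\rho$ of $\lambda_m$, which is the claimed eigenvalue estimate. The same theorem also supplies
\[
\|(E(\sigma_m^t,T_t)-E(\sigma_m^{t,n},T_{t,n}))x\|_{C^1}\le\frac{C}{t^{(k+2)/4}}\bigl(\|(T_t-T_{t,n})x\|_{C^1}+\|x\|_{C^1}\|(T_t-T_{t,n})T_{t,n}\|_{C^1}\bigr),
\]
where the factor $t^{-(k+2)/4}$ is inherited directly from the $C^1$-resolvent bound.

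Finally I would combine the two bounds by the triangle inequality
\[
\|\phi_m^k-E(\sigma_m^{t,n},T_{t,n})\phi_m^k\|_{H^1}\le\|\phi_m^k-E(\sigma_m^t,T_t)\phi_m^k\|_{H^1}+\|(E(\sigma_m^t,T_t)-E(\sigma_m^{t,n},T_{t,n}))\phi_m^k\|_{H^1},
\]
controlling the second term by the continuous embedding $C^1(\M)\hookrightarrow H^1(\M)$ on the compact manifold $\M$ and substituting the $C^1$ estimate just derived. The main obstacle in the argument is the delicate choice of $\rho$: it must simultaneously verify Atkinson's quantitative hypothesis, which pushes it up; exceed $\|T-T_t\|_{H^1}$ so that $\Gamma_2$ misses $\sigma(T_t)$ while still enclosing the $\alpha_m$-fold cluster $\sigma_m^t$; and remain small enough that $|\lambda_m^{t,n}-\lambda_m|\le\rho$ matches the sharp bound stated in the theorem. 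Verifying that the four entries in the minimum on $\|(T_{t,n}-T_t)T_{t,n}\|_{C^1}$ are exactly what is needed to enable this three-way balance, and handling the index correspondence between the clusters $\sigma_m^{t,n}$ and the ordered eigenvalues $\lambda_m^{t,n}$, is the bookkeeping-heavy core of the proof.
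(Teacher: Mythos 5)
Your overall architecture matches the paper's: two applications of the Atkinson perturbation theorems (Theorem \ref{lem:resovent} and Theorem \ref{thm:eigen_perturb}), bridging $T$ to $T_{t,n}$ through $T_t$, with the $H^1$ norm for the first comparison and $C^1$ for the second, followed by a triangle-inequality patch via the embedding $C^1(\M)\hookrightarrow H^1(\M)$. Where you differ is in the choice of contour for the second Atkinson application, and that choice creates a genuine gap.

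You take $\Gamma_2$ to be the circle of radius $\rho = r_1 := \|T-T_t\|_{H^1} + \frac{2}{t^{k/4+3/2}}\|(T_{t,n}-T_t)T_{t,n}\|_{C^1}$ about $\lambda_m$, and apply Theorem \ref{thm:eigen_perturb} on $\Gamma_2$ to get \emph{both} the eigenvalue bound and the $C^1$ eigenprojection bound. The eigenvalue half is fine (and in fact a bit cleaner than the paper's $\mathcal{A}$-set argument, which establishes $\sigma(T_{t,n})\subset\bigcup_n B(\lambda_n,r_1)\cup B(0,t^{1/2})$ directly). But the eigenprojection half is not: Theorem \ref{thm:eigen_perturb} produces the coefficient $M\epsilon/c_0$ with $M=\max_{z\in\Gamma_2}\|(z-T_t)^{-1}\|_{C^1}$, and by Lemma \ref{lem:norm_inverse_c1} on $\Gamma_2$ the dominant branch of $M$ scales like
\[
\frac{2|\M|}{|z|\,t^{(k+2)/4}}\Bigl(\rho - \|T-T_t\|_{H^1}\Bigr)^{-1}
= \frac{|\M|\,t}{|z|\,\|(T_{t,n}-T_t)T_{t,n}\|_{C^1}}.
\]
The denominator $\|(T_{t,n}-T_t)T_{t,n}\|_{C^1}$ can be arbitrarily small (it tends to zero as $n\to\infty$ with $t$ fixed), so $M$ is not uniformly bounded. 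In the eigenprojection estimate the $\|x\|\,\|(T_t-T_{t,n})T_{t,n}\|_{C^1}$ term cancels this divergence, but the $\|(T_t-T_{t,n})x\|_{C^1}$ term does not — the ratio $\|(T_t-T_{t,n})x\|_{C^1}\big/\|(T_{t,n}-T_t)T_{t,n}\|_{C^1}$ has no a priori bound. Hence the constant in your claimed bound $\frac{C}{t^{(k+2)/4}}\bigl(\|(T_t-T_{t,n})x\|_{C^1}+\|x\|_{C^1}\|(T_t-T_{t,n})T_{t,n}\|_{C^1}\bigr)$ secretly depends on the operator difference, which defeats the purpose of the theorem.

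The fix is exactly what the paper does: run the eigenprojection estimate on the \emph{fixed} contour $\Gamma_m=\{z:|z-\lambda_m|=\gamma_m/3\}$, where the assumption $\|T-T_t\|_{H^1}\le\gamma_m/12$ guarantees $\min_n|z-\lambda_n|-\|T-T_t\|_{H^1}\ge\gamma_m/6$ and hence a uniform bound $\max_{z\in\Gamma_m}\|(z-T_t)^{-1}\|_{C^1}\le\max\{12|\M|/(\gamma_m(|\lambda_m|-\gamma_m/3)t^{(k+2)/4}),\,2/(|\lambda_m|-\gamma_m/3)\}$. Keep your shrinking-contour (or the paper's $\mathcal{A}$-set) argument only for the eigenvalue localization $|\lambda_m^{t,n}-\lambda_m|\le r_1$; in other words, decouple the contour used for the dimension/localization claim from the contour used for the projection estimate.

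Two small additional points. First, you use Lemma \ref{lem:norm_inverse_c1} with $r_0\downarrow\|T-T_t\|_{H^1}$ to localize $\sigma(T_t)$; the paper instead invokes Theorem \ref{thm:converge_ev_Tt} (radius $2\|T-T_t\|_{H^1}$). Either works, though yours needs a comment on why the resolvent identity applies in the limit. Second, your sentence ``after identifying the natural index correspondence by running the argument uniformly in $m$'' glosses over the step that actually ties $\lambda_m^{t,n}$ (the $m$th \emph{ordered} eigenvalue of $T_{t,n}$) to the cluster $\sigma_m^{t,n}$; the paper's global inclusion $\sigma(T_{t,n})\subset\bigcup_n B(\lambda_n,r_1)\cup B(0,t^{1/2})$ together with the dimension equalities for every $j\le m$ is what makes that correspondence airtight, and that step should be made explicit.
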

\begin{proof} 
 Let $r_1=\frac{2}{t^{k/4+3/2}}\left\|(T_{t,n}-T_t)T_{t,n}\right\|_{C^1}+\|T-T_t\|_{H^1(\M)}$, $\mathcal{A}=\mathbb{C}\backslash \left[\bigcup_{n\in \mathbb{N}}B(\lambda_n, r_1)
\bigcup B(0,t^{1/2})\right]$, 
For any $z\in \mathcal{A}$, using Lemma \ref{lem:resovent}, we have
\begin{eqnarray*}
  \|(z-T_{t})^{-1}\|_{C^1}&\le& \frac{2|\M|}{|z| t^{(k+2)/4}} \left(\min_{n\in \mathbb{N}}|z-\lambda_n| - \|T-T_t\|_{H^1}\right)^{-1}\nonumber\\
&\le & \frac{2|\M|}{ t^{k/4+1}} \left(r_1 - \|T-T_t\|_{H^1}\right)^{-1}\nonumber\\
&=& \frac{t^{1/2}}{\left\|(T_{t,n}-T_t)T_{t,n}\right\|_{C^1}}\le \frac{|z|}{\left\|(T_{t,n}-T_t)T_{t,n}\right\|_{C^1}}
\end{eqnarray*}
or 
\begin{eqnarray*}
  \|(z-T_{t})^{-1}\|_{C^1}&\le& \frac{2}{|z|}\le \frac{2}{t^{1/2}}\le \frac{\sqrt{t}}{\left\|(T_{t,n}-T_t)T_{t,n}\right\|_{C^1}}\le \frac{|z|}{\left\|(T_{t,n}-T_t)T_{t,n}\right\|_{C^1}}.
\end{eqnarray*}
Here, we use the condition that $\left\|(T_{t,n}-T_t)T_{t,n}\right\|_{C^1}\le t/2$.

Both above two inequalies implies that
\begin{eqnarray*}
\left\|(T_{t,n}-T_t)T_{t,n}\right\|_{C^1}\le \frac{|z|}{\|(z-T_t)^{-1}\|_{C^1}}.
\end{eqnarray*}
Then using Lemma \ref{lem:resovent}, we have $z\in \rho(T_{t,n})$.

Since $z$ is arbitrary in $\mathcal{A}$, we get $\mathcal{A}\subset \rho(T_{t,n})$. This means that
\begin{eqnarray}
\label{eqn:est_eigen}
  \sigma(T_{t,n})=\mathbb{C}\backslash\rho(T_{t,n})\subset \mathbb{C}\backslash \mathcal{A}
= \bigcup_{n\in \mathbb{N}}B(\lambda_n, r_1)\bigcup B(0,t^{1/2}).
\end{eqnarray}
Moreover, using Theorem \ref{thm:converge_ev_Tt} and the definition of $r_1$, we have
 \begin{eqnarray}
\label{eqn:est_eigen_T}
  \sigma(T_{t})\subset \bigcup_{n\in \mathbb{N}}B(\lambda_n, 2r_1).
\end{eqnarray}
For any fixed eigenvalue $\lambda_m\in \sigma(T)$, let 
$\D\gamma_m=\min_{j\le m}|\lambda_j-\lambda_{j+1}|$. Using the structure of $\sigma(T)$, we know that $\gamma_m>0$. 
Since 
\begin{align*}
  \frac{2}{t^{k/4+3/2}}\left\|(T_{t,n}-T_t)T_{t,n}\right\|_{C^1}\le \gamma_m/12,\quad \|T-T_t\|_{H^1(\M)}\le \gamma_m/12,
\end{align*}
we know that 
$r_1<\gamma_m/6$.

Let $\Gamma_j=\{z\in \mathbb{C}:|z-\lambda_j|=\gamma_j/3\}$, $U_j$ be the aera enclosed by $\Gamma_j$.
 Let 
$$\sigma_{t,j}=\sigma(T_t)\bigcap U_j,\quad\sigma_{t,n,j}=\sigma(T_{t,n})\bigcap U_j.$$
 
Using the definition of $\Gamma_j$, we know for any $j\le m$,
$\Gamma_j\subset \rho(T), \rho(T_t)$ and $ \rho(T_{t,n})$. 

In order to apply Theorem \ref{thm:converge_evector}, we need to verify the conditions
\begin{eqnarray}
\label{eqn:cond_T}
  \|(T-T_t)T_t\|_{H^1}&\le& \min_{z\in \Gamma_j}\frac{|z|}{\|(z-T)^{-1}\|_{H^1}},\\
\label{eqn:cond_Tt}
\|(T_t-T_{t,n})T_{t,n}\|_{C^1}&\le& \min_{z\in \Gamma_j}\frac{|z|}{\|(z-T_t)^{-1}\|_{C^1}}.
\end{eqnarray}
Using Lemma \ref{lem:norm_inverse_h1} and the choice of $\Gamma_j$, we have
\begin{eqnarray}
  \min_{z\in \Gamma_m}\frac{|z|}{\|(z-T)^{-1}\|_{H^1}}\ge \frac{\min_{z\in \Gamma_m}|z|}{\max_{z\in \Gamma_m}\|(z-T)^{-1}\|_{H^1}}
\ge (|\lambda_m|-\gamma_m/3)\min_{z\in \Gamma_m,n\in \mathbb{N}}|z-\lambda_m|=(|\lambda_m|-\gamma_m/3)\gamma_m/3.\nonumber
\end{eqnarray}
Then, using the assumption that $\|(T-T_t)T_t\|_{H^1(\M)}\le (|\lambda_m|-\gamma_m/3)\gamma_m/3$, 
condition \eqref{eqn:cond_T} is true..

Using Lemma \ref{lem:norm_inverse_c1}, we have
\begin{eqnarray}
  \min_{z\in \Gamma_m}\frac{|z|}{\|(z-T_t)^{-1}\|_{C^1}}&\ge& \frac{\min_{z\in \Gamma_m}|z|}{\max_{z\in \Gamma_m}\|(z-T_t)^{-1}\|_{C^1}}
\nonumber\\
&\ge& \frac{(|\lambda_m|-\gamma_m/3)^2t^{(k+2)/4}}{2}
\left(\min_{z\in \Gamma_m,n\in \mathbb{N}}|z-\lambda_m|-\|T-T_t\|_{H^1}\right)\nonumber\\
&\ge&\frac{(|\lambda_m|-\gamma_m/3)^2t^{(k+2)/4}\gamma_m}{12}.
\label{eq:1}
\end{eqnarray}
or
\begin{eqnarray}
  \min_{z\in \Gamma_m}\frac{|z|}{\|(z-T_t)^{-1}\|_{C^1}}&\ge& \frac{\min_{z\in \Gamma_m}|z|}{\max_{z\in \Gamma_m}\|(z-T_t)^{-1}\|_{C^1}}
\ge \frac{(|\lambda_m|-\gamma_m/3)^2}{2}.
\end{eqnarray}
To get the last inequality of \eqref{eq:1}, we use the assumption that $\|T-T_t\|_{H^1}\le \gamma/6$ and 
$\D\min_{z\in \Gamma_m,n\in \mathbb{N}}|z-\lambda_m|=\gamma_m/3$.

Using the assumption that $\|(T-T_{t,n})T_{t,n}\|_{C^1(\M)}\le \min\{\frac{(|\lambda_m|-\gamma_m/3)^2t^{(k+2)/4}\gamma_m}{12}, \frac{(|\lambda_m|-\gamma_m/3)^2}{2}\} $,
 condition \eqref{eqn:cond_Tt} is satisfied.

Then using Theorem \ref{thm:converge_evector}, we have
\begin{eqnarray}
  \dim(E(\lambda_m,T))=  \dim(E(\sigma_{t,m},T_t)) =  \dim(E(\sigma_{t,n,m},T_{t,n})).
\end{eqnarray}
Using \eqref{eqn:est_eigen}, above equality would imply that
\begin{eqnarray}
  |\lambda_m^{t,n}-\lambda_m|\le r_1=\frac{2}{t^{k/4+3/2}}\left\|(T_{t,n}-T_t)T_{t,n}\right\|_{C^1}+\|T-T_t\|_{H^1(\M)}.
\label{eq:evalue}
\end{eqnarray}

The convergence of eigenspace is also given by Theorem \ref{thm:converge_evector}. 
For any 
$x\in E(\lambda_m,T)$, $\|x\|_{C^1}=1$, 
\begin{equation}
  \|x-E(\sigma_{t,m},T_t)x\|_{H^1}\le \frac{\max_{z\in \Gamma_m}\|(z-T)^{-1}\|_{H^1}\gamma_m/3}{\min_{z\in \Gamma_m}|z|}(\|(T-T_t)x\|_{H^1}+\|(T-T_t)T_t\|_{H^1}\|x\|_{H^1}).\nonumber
\end{equation}
Using Lemma \ref{lem:norm_inverse_h1}, we know that 
\begin{align*}
  \max_{z\in \Gamma_m}\|(z-T)^{-1}\|_{H^1}\le \max_{j\in \mathbb{N}}\frac{1}{|z-\lambda_j|}\le \frac{3}{2\gamma_m},
\end{align*}
and $\min_{z\in \Gamma_m}|z|=|\lambda_m|-\gamma_m/3$.
This implies that from Theorems \ref{thm:converge_h1},
\begin{eqnarray}
  \|x-E(\sigma_{t,m},T_t)x\|_{H^1}\le C(\|(T-T_t)x\|_{H^1}+\|(T-T_t)T_t\|_{H^1}\|x\|_{H^1}).
\label{eq:evector-t}
\end{eqnarray}
Regarding the convergence from $T_{t,n}$ to $T_t$, using Theorem \ref{thm:converge_evector} again, we have
\begin{equation}
\D \|E(\sigma_{t,m},T_t)x-E(\sigma_{t,n,m},T_{t,n})x\|_{C^1}\le \frac{\D \gamma_m\max_{z\in \Gamma_m}\|(z-T_t)^{-1}\|_{C^1}}
{3\min_{z\in \Gamma_m}|z|}
\left(\|(T_t-T_{t,n})x\|_{C^1}+\|(T_t-T_{t,n})T_{t,n}\|_{C^1}\right).
\label{eq:evector-n}
\end{equation}
Using Lemma \ref{lem:norm_inverse_c1}, we know that 
\begin{align}
  \max_{z\in \Gamma_m}\|(z-T_t)^{-1}\|_{C^1}\le &\max_{z\in \Gamma_m}
\left\{\frac{2}{ |z|t^{(k+2)/4}} \left(\min_{j\in \mathbb{N}}|z-\lambda_j| - \|T-T_t\|_{H^1}\right)^{-1},\frac{2}{|z|}\right\}
\nonumber\\
\le& \max\left\{\frac{12}{\gamma_m (|\lambda_m|-\gamma_m/3)t^{(k+2)/4}}, \frac{2}{|\lambda_m|-\gamma_m/3}\right\}.
\label{eq:evector-c}
\end{align}
To get the last inequality, we use that $\|T-T_t\|_{H^1}\le \gamma_m/6$ and $|z-\lambda_m|=\gamma_m/3$, $|z|\ge |\lambda_m-\gamma/3|$ for $z\in \Gamma_m$.

Then the proof is completed by combining \eqref{eq:evalue}, \eqref{eq:evector-t}, \eqref{eq:evector-n} 
and \eqref{eq:evector-c}.


 \end{proof}

\subsection{Convergence of solution operators}

To apply Theorem \ref{thm:eigen_operator}, we need to estimate the difference of the solution operators. More precisely, 
we need to estimate $\|T-T_t\|_{H^1}$ and $\|T_t-T_{t,n}\|_{C^1}$ as $t\rightarrow 0$ and $n\rightarrow \infty$. These results are summarized in 
Theorem \ref{thm:converge_h1} and Theorem \ref{thm:converge_c1} respectively.

\begin{theorem} (\cite{SS-iso})
\label{thm:converge_h1}
Under the assumptions in Assumption \ref{assumptions}, there exists a constant $C>0$ only depends on $\M$ and the kernel function $R$, such that
  \begin{eqnarray}
    \|T-T_t\|_{H^1}\le Ct^{1/2},\quad \|T_t\|_{H^1}\le C.\nonumber
  \end{eqnarray}
\end{theorem}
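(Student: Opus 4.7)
The plan is to establish both bounds via a standard consistency-plus-stability framework, with the key novelty being to carry everything out in the $H^1$ norm rather than merely pointwise or in $L^2$.

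First, I would prove a consistency estimate. Fix $f \in L^2(\M)$ and set $u = T(f)$. Elliptic regularity for the Neumann problem \eqref{eqn:neumann} on the smooth compact manifold $\M$ with weight $p \in C^1$ gives $u \in H^3(\M)$ with $\|u\|_{H^3} \le C\|f\|_{L^2}$. Substituting $u$ into $L_t$ and Taylor-expanding $u(\by) - u(\bx)$ through second order along geodesics, I would show
\begin{eqnarray*}
L_t u(\bx) - \int_\M \bar{R}_t(\bx,\by) f(\by) p(\by) \mathd \by = r_t(\bx),\quad \|r_t\|_{H^1(\M)} \le C t^{1/2}\|u\|_{H^3(\M)}.
\end{eqnarray*}
The key algebraic ingredient is the identity $\bar R'(r) = -R(r)$ built into the definition of $\bar R$: when one integrates by parts in $\by$ to reconstruct $-p^{-2}\tmop{div}(p^2 \nabla u)$ from the second-order Taylor term, the boundary contributions from $\p\M$ and the lower-order bulk errors are precisely those that get absorbed by the right-hand side involving $\bar R_t$, together with the Neumann condition $\p u/\p\bn = 0$. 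This is the manifold/boundary adaptation of the computation in \cite{SS-iso}.

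Second, I would establish a stability estimate: if $v \in H^1(\M)$ satisfies $\int_\M v\,p\, \mathd\bx = 0$ and $L_t v = g$, then
\begin{eqnarray*}
\|v\|_{H^1(\M)} \le C \|g\|_{H^1(\M)}
\end{eqnarray*}
with $C$ independent of $t$. The lower bound starts from the Dirichlet-form identity
\begin{eqnarray*}
\int_\M (L_t v)(\bx)\, v(\bx)\, p(\bx)\, \mathd\bx = \frac{1}{2t}\int_\M \int_\M R_t(\bx,\by)(v(\bx) - v(\by))^2 p(\bx) p(\by)\, \mathd\bx \mathd\by,
\end{eqnarray*}
combined with a nonlocal Poincar\'e inequality (available under the kernel assumptions on $R$, namely the lower bound $R \ge \delta_0$ on $[0,1/2]$ and the zero-$p$-average constraint) to get control of $\|\nabla v\|_{L^2}$ by $\|g\|_{L^2}$. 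Upgrading to $H^1$-to-$H^1$ control requires a second step: differentiating $L_t v = g$ in $\bx$, commuting the derivative past the kernel, and treating the commutator as a lower-order perturbation absorbable for small $t$.

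Third, combining the two: with $u = T(f), u_t = T_t(f)$, both have zero $p$-average, and $L_t(u - u_t) = r_t$, so stability yields $\|u - u_t\|_{H^1} \le C\|r_t\|_{H^1} \le C t^{1/2}\|f\|_{L^2}$, which is the first bound. The second bound $\|T_t\|_{H^1} \le C$ then follows by applying stability to $L_t u_t = \int_\M \bar R_t(\bx,\by)f(\by)p(\by)\mathd\by$, whose right-hand side is bounded in $H^1(\M)$ by $C\|f\|_{L^2}$ using smoothness of $\bar R_t$ in $\bx$ (the normalization constant $(4\pi t)^{-k/2}$ and derivatives of $R$ landing on the kernel factor cancel once integrated against a bounded density).

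The main obstacle is the consistency estimate in the $H^1$ norm with the sharp rate $t^{1/2}$, particularly in the boundary strip of width $O(\sqrt t)$. Most nonlocal-to-local convergence arguments degenerate there, and a naive bound on $r_t$ and $\nabla r_t$ produces terms of order $O(t^{-1/2})$ coming from the normal derivative of the kernel. The role of $\bar R_t$ on the right-hand side — engineered exactly through $\bar R' = -R$ — is to cancel these singular boundary contributions to leading order, leaving a remainder that trades one power of kernel derivative against the $H^3$ regularity of $u$, which is the minimal regularity produced by $T: L^2 \to H^2$ combined with an extra half-power gained from averaging over a ball of radius $O(\sqrt t)$. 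Verifying this cancellation carefully at the level of $\nabla_\bx r_t$ is where the technical weight of the argument concentrates.
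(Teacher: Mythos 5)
The statement you are proving is cited to \cite{SS-iso}; the present paper does not give a proof, so I can only assess your argument on its own terms. The consistency-plus-stability skeleton is the right general philosophy for PIM-type estimates, and you correctly identify that the cancellation coming from $\bar R' = -R$ together with the Neumann condition is what saves the boundary layer. But there is a genuine gap in your Step 2 (stability), and it is load-bearing.

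You claim $\|v\|_{H^1} \le C\|L_t v\|_{H^1}$ follows from the Dirichlet-form identity combined with a nonlocal Poincar\'e inequality. The Poincar\'e inequality available here (cf.\ Lemma \ref{lem:coercivity} and Theorem \ref{thm:stable-l2} in the paper) gives
\begin{equation*}
\frac{1}{t}\iint R_t(\bx,\by)\bigl(v(\bx)-v(\by)\bigr)^2 p(\bx)p(\by)\,\mathd\bx\,\mathd\by \ \gtrsim\ \|v-\bar v\|_{L^2}^2,
\end{equation*}
and this, paired with $\int L_t v\cdot v\,p = \frac{1}{2t}\iint R_t(v(\bx)-v(\by))^2 pp$, yields only $\|v\|_{L^2}\le C\|g\|_{L^2}$. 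It does not control $\|\nabla v\|_{L^2}$: the nonlocal Dirichlet form is an \emph{upper} bound for $C\|\nabla v\|_{L^2}^2$ (on its support $|\bx-\by|\lesssim\sqrt t$), not a lower bound, and no amount of Poincar\'e turns it into one. Your alternative route — differentiating $L_t v=g$ and commuting $\nabla$ past the kernel — also does not close: $\nabla_\bx R_t$ carries an extra $t^{-1/2}$ relative to $R_t$, so the commutator with $L_t$ is not a small perturbation absorbable for small $t$ in any uniform sense. In short, you never actually establish $\|\nabla v\|_{L^2}\lesssim\|g\|_{H^1}$; $L^2$ stability plus the integral representation $v(\bx)=w_t(\bx)^{-1}\int R_t v p + t\,w_t(\bx)^{-1}g(\bx)$ gives at best $\|\nabla v\|_{L^2}\lesssim t^{-1/2}\|g\|_{L^2}+t\|\nabla g\|_{L^2}$, which is a factor $t^{-1/2}$ off. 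The usual remedy in this line of work is \emph{not} an $H^1\to H^1$ stability estimate for $L_t$; it is to prove $L^2$ stability, get the $L^2$ error, and then exploit the kernel representation of $T_t f$ (and the specific structure of the residual $r_t = L_t u - \int\bar R_t f p$, where $u=Tf$) to recover the gradient estimate with the cancellations you allude to. As written, your Step 2 is the missing link, and the rest of the argument cannot be completed without replacing it.

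A secondary concern: your consistency claim $\|r_t\|_{H^1}\le Ct^{1/2}\|u\|_{H^3}$ needs more care. Standard PIM consistency gives $\|r_t\|_{L^2}=O(t^{1/2})$ with the boundary term of this sharp order; differentiating $r_t$ generically costs $t^{-1/2}$, so a naive bound on $\|\nabla r_t\|_{L^2}$ is $O(1)$, not $O(t^{1/2})$. You flag this, but the cancellation you describe (boundary terms canceling via $\bar R'=-R$ and the Neumann condition) accounts for the $L^2$ rate; getting the same $t^{1/2}$ rate after one more derivative requires a separate, explicit computation, not just the same heuristic. If you keep the consistency/stability structure, you should be explicit that the $H^1$ control of the \emph{error} $u-u_t$ — rather than of the residual — is what you ultimately need, and that it comes from the equation $L_t(u-u_t)=r_t$ combined with the kernel representation, not from an $H^1$ stability estimate for $L_t$.
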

The proof of this theorem can be found in \cite{SS-iso}.

    

The other theorem is about $\|T_t-T_{t,n}\|_{C^1}$.
\begin{theorem}
\label{thm:converge_c1}
Under the assumptions in Assumption \ref{assumptions} and
\begin{align}
\label{eq:assume-1}
C_t \sup_{f\in \mathcal{R}_{t'}\cup\mathcal{R}_{t}\cup\mathcal{R}_{8t}}|p(f)-p_n(f)|&\le w_{\min}/2,\\
C_t\sup_{f\in  \mathcal{K}_{t',n}\cup\mathcal{K}_{t',n}\cdot\mathcal{K}_{t',n}}|p(f)-p_n(f)|&\le  \frac{\delta^2}{2\max\{w_{\max}+w_{\min}/2,2/w_{\min}\}},
\label{eq:assume-2}
\end{align}
where $\delta=\frac{w_{\min}}{4w_{\max}+3w_{\min}}, t'=t/18$.
There exists a constant $C$ only depends on $\M$ and kernel function $R$, such that
  \begin{eqnarray}
    \|(T_{t,n}-T_t)T_{t,n}\|_{C^1}\le \frac{Ch_0}{t^{3k/4+3/2}},\quad \|(T_{t,n}-T_t)f\|_{C^1}\le \frac{Ch(f)}{t^{3k/4+3/2}}.\nonumber
  \end{eqnarray}
where 
\begin{eqnarray}
\label{eq:h}
h_0&=&
\sup_{g\in \mathcal{R}_t\cdot \mathcal{K}_{t,n}\cup  \mathcal{R}_t }|p_n(g)-p(g)|+ t\sup_{g\in \mathcal{D}_t\cup
\overline{\mathcal{K}}_{t,n}\cdot \overline{\mathcal{R}}_t\cup
\mathcal{K}_{t,n}\cdot \overline{\mathcal{R}}_t\cup \mathcal{K}_{t,n}\cdot \mathcal{D}_t}|p_n(g)-p(g)|
\\
&&+t^2\sup_{g\in  \mathcal{K}_{t,n}\cdot \overline{\mathcal{D}}_t}|p_n(g)-p(g)|+t^3\sup_{g\in  \overline{\mathcal{K}}_{t,n}\cdot \overline{\mathcal{D}}_t}|p_n(g)-p(g)|,\nonumber
\\
\label{eq:hf}
h(f)&=&
\sup_{g\in \mathcal{R}_t\cdot \mathcal{K}_{t,n}\cup  \mathcal{R}_t }|p_n(g)-p(g)|+ t\sup_{g\in \mathcal{D}_t\cup
f\cdot \overline{\mathcal{R}}_t\cup
\mathcal{K}_{t,n}\cdot \overline{\mathcal{R}}_t\cup \mathcal{K}_{t,n}\cdot \mathcal{D}_t}|p_n(g)-p(g)|
\\
&&+t^2\sup_{g\in  \mathcal{K}_{t,n}\cdot \overline{\mathcal{D}}_t}|p_n(g)-p(g)|+t^3\sup_{g\in  \overline{\mathcal{K}}_{t,n}\cdot \overline{\mathcal{D}}_t}|p_n(g)-p(g)|,\nonumber
\end{eqnarray}  
\end{theorem}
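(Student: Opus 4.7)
The plan is to derive a perturbed fixed-point equation for the error $e := T_{t,n}f - T_tf$, invert the associated averaging operator, and identify every residual term with a Monte-Carlo deviation over one of the kernel classes appearing in $h(f)$ and $h_0$. The integral equation \eqref{eqn:integral} can be rewritten as the averaging identity
\begin{equation*}
T_t f(\bx) = \frac{1}{w_t(\bx)}\int_\M R_t(\bx,\by)(T_tf)(\by)p(\by)\,\mathd \by + \frac{t}{w_t(\bx)}\int_\M \bar{R}_t(\bx,\by)f(\by)p(\by)\,\mathd \by,
\end{equation*}
and the definition of $T_{t,n}$ furnishes the same identity with $\int_\M(\cdot)p\,\mathd\by$ replaced by $p_n$ and $w_t$ replaced by $w_{t,n}$. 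Subtracting, and using $\tfrac{1}{w_{t,n}}-\tfrac{1}{w_t} = \tfrac{w_t-w_{t,n}}{w_t w_{t,n}}$, yields
\begin{equation*}
e(\bx) \;=\; \frac{1}{n\,w_{t,n}(\bx)}\sum_{j=1}^n R_t(\bx,\bx_j)\,e(\bx_j) \;+\; \mathcal{E}(\bx),
\end{equation*}
where $\mathcal{E}(\bx)$ is a linear combination of Monte-Carlo residuals of the form $C_t(p_n-p)(g)$ whose integrands $g$ lie, pointwise in $\bx$, in the classes $\mathcal{R}_t\cdot\mathcal{K}_{t,n}$, $\mathcal{R}_t$, $\mathcal{K}_{t,n}\cdot\overline{\mathcal{R}}_t$, and, depending on the variant, either $\overline{\mathcal{K}}_{t,n}\cdot\overline{\mathcal{R}}_t$ (operator-norm case) or $f\cdot\overline{\mathcal{R}}_t$ (pointwise case). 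Assumption \eqref{eq:assume-1} keeps $w_{t,n}\geq w_{\min}/2$ uniformly on $\M$, so dividing by $w_{t,n}$ is harmless.

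Next I would invert $I-A_{t,n}$, with $A_{t,n}g(\bx):=\tfrac{1}{n\,w_{t,n}(\bx)}\sum_{j=1}^n R_t(\bx,\bx_j)\,g(\bx_j)$. Because $A_{t,n}$ preserves constants, inversion is meaningful only on the mean-zero subspace enforced by $\sum_i u_i=0$. Using \eqref{eq:assume-2} as a perturbation bound relative to the continuous averaging operator $A_t g(\bx)=\tfrac{1}{w_t(\bx)}\int_\M R_t(\bx,\by)g(\by)p(\by)\,\mathd\by$, I would show that $A_{t,n}$ inherits the spectral gap of $A_t$. Since this gap is of order $t$ via the Laplacian--heat-kernel correspondence, one obtains $\|(I-A_{t,n})^{-1}\|\lesssim t^{-1}$ on mean-zero functions, and hence $\|e\|_\infty \lesssim t^{-1}\|\mathcal{E}\|_\infty$.

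To upgrade the $L^\infty$ bound to a $C^1$ bound I would differentiate the extension formula defining $e$; every $\nabla_\bx$ hitting $R_t$ or $\bar R_t$ generates a factor $t^{-1/2}$ and produces residuals over the gradient classes $\mathcal{D}_t$, $\mathcal{K}_{t,n}\cdot\mathcal{D}_t$, $\mathcal{K}_{t,n}\cdot\overline{\mathcal{D}}_t$ and $\overline{\mathcal{K}}_{t,n}\cdot\overline{\mathcal{D}}_t$. The explicit $t$-powers multiplying these terms in \eqref{eq:h}--\eqref{eq:hf} are chosen exactly to absorb the $t^{-1/2}$ losses from each gradient. Combining $t^{-k/2}$ from $C_t$, an additional $t^{-k/4}$ from changing norms through the peaked kernel, $t^{-1/2}$ from differentiation, and the $t^{-1}$ from inverting $I-A_{t,n}$, one recovers the announced $t^{-3k/4-3/2}$ prefactor. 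The distinction between $h_0$ and $h(f)$ is pure bookkeeping: in the operator-norm variant the input $T_{t,n}g$ is itself a kernel-normalized average and its contribution lies in the $\overline{\mathcal{K}}_{t,n}$ classes, whereas in the pointwise variant $f$ enters directly as a multiplier, producing the class $f\cdot\overline{\mathcal{R}}_t$.

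The main obstacle is the uniform spectral-gap estimate in the second step. Since $A_{t,n}$ has operator norm exactly one, its gap on the mean-zero subspace depends delicately on $t$ and on how accurately $p_n$ approximates $p$ against products of kernels. Proving this uniform-in-$n$ gap with explicit $t^{-1}$ dependence requires combining the continuous spectral theory of $A_t$ with the empirical bounds \eqref{eq:assume-1}--\eqref{eq:assume-2}; once it is in place, the remainder is a careful accounting of the $t$-powers arising from the normalization $C_t$, from kernel differentiation, and from inverting $I-A_{t,n}$.
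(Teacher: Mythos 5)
Your decomposition into a fixed-point equation $e = A_{t,n}e + \mathcal{E}$ is valid, and your bookkeeping of the $t$-powers and of the kernel classes producing $h_0$ versus $h(f)$ is consistent with the statement. However, there is a genuine gap at the heart of your argument that the paper's proof is specifically structured to avoid: you need a uniform-in-$n$ bound $\|(I-A_{t,n})^{-1}\|_{L^\infty\to L^\infty}\lesssim t^{-1}$ on the mean-zero subspace, and you acknowledge yourself that this spectral-gap estimate for the \emph{empirical} averaging operator in the \emph{sup norm} is unproved. This is not a routine perturbation of $A_t$: even for the continuous averaging operator, a quantitative spectral gap in $L^\infty$ is not trivial, and transferring it to $A_{t,n}$ requires more than the two sup-deviation hypotheses \eqref{eq:assume-1}--\eqref{eq:assume-2}, which control $L^2$-flavored quantities. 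As stated, your argument would not compile into a proof without this missing piece.

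The paper takes a different route precisely to bypass this. Instead of inverting the discrete averaging operator, it writes the error as a solution of the \emph{continuous} equation $L_t e = r$, computes $r = L_t(T_{t,n}-T_t)T_{t,n}f$ explicitly (via a telescoping decomposition in Theorem \ref{thm:dis_error}), and bounds $\|r\|_{L^2}$ by Monte-Carlo residuals over the kernel classes. The bound $\|e\|_{L^2}\le C\|r\|_{L^2}$ then follows from the known $L^2$ stability of the continuous integral Laplacian $L_t$ (Theorem \ref{thm:stable-l2}, taken from \cite{SS-neumann}), with a constant $C$ independent of $t$; the $t^{-1}$ you attribute to an inversion loss instead appears as the explicit $1/t$ coefficient in $L_t$. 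Finally, the lift from $L^2$ to $C^1$ is done not by differentiating a Neumann series for $(I-A_{t,n})^{-1}$ but through explicit intermediate operators $T_t^1, T_t^2$: both $T_tu_{t,n}$ and $T_{t,n}u_{t,n}$ are kernel-smoothed averages, so the extra $t^{-k/4}$ and $t^{-1/2}$ factors come from estimating the peaked kernel and its gradient in $L^\infty$, not from resolvent bounds. Replacing the unproved $L^\infty$ spectral gap for $A_{t,n}$ with the already-established $L^2$ stability for $L_t$, plus an explicit kernel-regularity lift, is exactly what makes the paper's argument close.
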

The proof of this theorem will be deferred to Section \ref{sec:converge-c1}.



\subsection{Entropy bound}

In this subsection, we will verify the assumption \eqref{eq:assume-1}, \eqref{eq:assume-2} in Theorem \ref{thm:converge_c1} and estimate $h_0$ and $h(f)$ 
defined in \eqref{eq:h} and \eqref{eq:hf} to get the convergence rate. The method we use is to estimate the covering number of function classes defined in 
previous subsection. First we introduce the definition of covering number.

Let $(Y, d)$ be a metric space and set $F\subset Y$ . For every $\e>0$, denote by $N (\e, F, d)$ the
minimal number of open balls (with respect to the metric $d$) needed to cover $F$. That is,
the minimal cardinality of the set $\{y_1 , \cdots, y_m\}\subset Y$ with the property that every $f \in  F$
has is some $y_i$ such that $d(f, y_i ) < \e$. The set $\{y_1 , \cdots, y_m \}$ is called an $\e$-cover of $F$ . The
logarithm of the covering numbers is called the entropy of the set.
For every sample $\{x_1 , \cdots, x_n \}$
let $\mu_n$ be the empirical measure supported on that sample. For $1 \le p <\infty$ and a
function $f$ , put $\|f\|_{L_p (\mu_n )} =\left(\frac{1}{n}\sum_{i=1}^n  |f (x_i )|^p\right)^{1/p}$
and set $\|f\|_\infty = \max_{1\le i\le n} |f (x_i )|$. Let
$N(\e , F, L_p (\mu_n )$ be the covering numbers of $F$ at scale $\e$ with respect to the $L_p (\mu_n )$ norm.

We will use following theorem which is well known in empirical process theory.
\begin{theorem}(Theorem 2.3 in \cite{entropy})
\label{thm:entropy}
Let $F$ be a class of functions from $\M$ to $[-1,1]$ and set $\mu$ to be a probability measure on $\M$. 
Let $(\bx_i)_{i=1}^\infty$ be independent 
random variables distributed according to $\mu$. For every $\e>0$ and any $n\ge 8/\e^2$,
  \begin{eqnarray}
    \mathbb{P}\left(\sup_{f\in F}|\frac{1}{n}\sum_{i=1}^nf(\bx_i)-\int_\M f(\bx)\mu(\bx)\mathd \bx|>\e\right)
\le 8\mathbb{E}_\mu[ N(\e/8,F,L_1(\mu_n))]\exp(-n\e^2/128)
  \end{eqnarray}
\end{theorem}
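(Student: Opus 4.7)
The plan is to apply the classical symmetrization-and-covering argument from empirical process theory, organized in four short reductions with carefully tracked constants.

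First, I would introduce a ghost sample $(\bx_i')_{i=1}^n$ drawn i.i.d.\ from $\mu$, independent of $(\bx_i)$, and write $p_n'(f) = \frac{1}{n}\sum_{i=1}^n f(\bx_i')$. Since $|f|\le 1$, Chebyshev gives $\mathbb{P}(|p_n'(f)-p(f)|>\epsilon/2)\le 4/(n\epsilon^2)\le 1/2$ under the hypothesis $n\ge 8/\epsilon^2$. This is the familiar ingredient needed for the standard symmetrization inequality
\[
\mathbb{P}\left(\sup_{f\in F}|p_n(f)-p(f)|>\epsilon\right)\;\le\;2\,\mathbb{P}\left(\sup_{f\in F}|p_n(f)-p_n'(f)|>\epsilon/2\right).
\]

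Second, I would introduce Rademacher signs $\sigma_i\in\{-1,+1\}$ i.i.d.\ uniform and independent of the two samples. Because the joint distribution of $(\bx_i,\bx_i')$ is invariant under swapping, the law of $(f(\bx_i)-f(\bx_i'))_i$ coincides with that of $(\sigma_i(f(\bx_i)-f(\bx_i')))_i$. A triangle inequality then upper-bounds the probability from Step 1 by
\[
2\,\mathbb{P}\left(\sup_{f\in F}\Bigl|\frac{1}{n}\sum_{i=1}^n \sigma_i f(\bx_i)\Bigr|>\epsilon/4\right).
\]
Third, I would condition on $(\bx_i)$ and use an $\epsilon/8$-cover $G\subset F$ in the $L_1(\mu_n)$ norm with $|G|=N(\epsilon/8,F,L_1(\mu_n))$. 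For any $f\in F$ there exists $g\in G$ with $\|f-g\|_{L_1(\mu_n)}<\epsilon/8$, and since $|\sigma_i|=1$,
\[
\Bigl|\tfrac{1}{n}\textstyle\sum_i \sigma_i(f(\bx_i)-g(\bx_i))\Bigr|\;\le\;\|f-g\|_{L_1(\mu_n)}\;<\;\epsilon/8,
\]
so the sup over $F$ exceeds $\epsilon/4$ only when the sup over $G$ exceeds $\epsilon/8$.

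Fourth, for each fixed $g\in G$ the quantity $\frac{1}{n}\sum_i\sigma_i g(\bx_i)$ is, conditionally on $(\bx_i)$, a sum of independent bounded terms (each in $[-1/n,1/n]$), so Hoeffding's inequality yields
\[
\mathbb{P}\left(\Bigl|\tfrac{1}{n}\textstyle\sum_i \sigma_i g(\bx_i)\Bigr|>\epsilon/8\;\Big|\;\bx_1,\ldots,\bx_n\right)\;\le\;2\exp(-n\epsilon^2/128).
\]
A union bound over $G$ followed by integration over the sample produces the expectation $\mathbb{E}_\mu[N(\epsilon/8,F,L_1(\mu_n))]$ on the right-hand side, and the prefactor $8=2\cdot 2\cdot 2$ collects the losses from the symmetrization, the Rademacher step, and the two-sided Hoeffding bound. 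The only nontrivial point is the constant-tracking: the assumption $n\ge 8/\epsilon^2$ is precisely what is needed so that the Chebyshev bound in Step 1 produces probability at most $1/2$, which is the threshold that makes symmetrization valid. Everything else is routine bookkeeping, and the argument is entirely standard in the Vapnik--Chervonenkis/Glivenko--Cantelli framework (cf.\ Pollard, or van der Vaart and Wellner).
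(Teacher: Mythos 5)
The paper does not prove Theorem~\ref{thm:entropy}; it is cited verbatim as Theorem~2.3 of~\cite{entropy}, so there is no internal proof to compare against. Your argument is a correct and complete reconstruction of the classical symmetrization--randomization--covering--Hoeffding chain, and the constant-tracking is right: the Chebyshev step needs $\mathrm{Var}(p_n'(f))\le 1/n$ and $n\ge 8/\e^2$ to get the $1/2$ threshold for symmetrization; the three factor-of-$2$ losses (symmetrization, the split of $|p_n(f)-p_n'(f)|$ into two one-sided Rademacher sums, and two-sided Hoeffding) multiply to $8$; and with increments in $[-1/n,1/n]$ Hoeffding at level $\e/8$ gives $\exp(-2(\e/8)^2/(4/n))=\exp(-n\e^2/128)$, exactly as claimed. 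The only small point worth making explicit is that the $L_1(\mu_n)$-cover centers used in Step~3 should be taken to be functions bounded by $1$ so that Hoeffding applies to them; this is harmless since one may either take a proper cover inside $F$ or truncate cover centers to $[-1,1]$, a contraction in $L_1(\mu_n)$ that does not increase the covering number. With that remark, the proof is sound and is essentially the standard Vapnik--Chervonenkis argument the cited reference gives.
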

Notice that 
  \begin{align*}
    L_1(\mu_n)\le L_\infty(\mu_n)\le L_\infty
  \end{align*}
where $\|f\|_{L_\infty}=\max_{\bx\in\M}|f(\bx)|$. Then we get following corollary.
\begin{corollary}
\label{cor:entropy-0}
Let $F$ be a class of functions from $\M$ to $[-1,1]$ and set $\mu$ to be a probability measure on $\M$. 
Let $(\bx_i)_{i=1}^\infty$ be independent 
random variables distributed according to $\mu$. For every $\e>0$ and any $n\ge 8/\e^2$,
  \begin{eqnarray}
    \mathbb{P}\left(\sup_{f\in F}|\frac{1}{n}\sum_{i=1}^nf(\bx_i)-\int_\M f(\bx)\mu(\bx)\mathd \bx|>\e\right)
\le 8 N(\e/8,F,L_\infty)\exp(-n\e^2/128)
  \end{eqnarray}
where $N(\e , F, L_\infty)$ be the covering numbers of $F$ at scale $\e$ with respect to the $L_\infty$ norm
\end{corollary}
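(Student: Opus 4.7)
The plan is to deduce the corollary as a direct consequence of Theorem \ref{thm:entropy} by replacing the random $L_1(\mu_n)$ covering number with the (deterministic and larger) $L_\infty$ covering number. Since the text has already flagged the chain of norm inequalities $\|f\|_{L_1(\mu_n)}\le \|f\|_{L_\infty(\mu_n)}\le \|f\|_{L_\infty}$, the essential observation is that covering numbers reverse this ordering: a finer cover is harder to achieve in a stronger norm.

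Concretely, first I would fix $\e>0$ and $n\ge 8/\e^2$ so that Theorem \ref{thm:entropy} applies. Then I would note that if $\{g_1,\dots,g_N\}$ is an $\e/8$-net for $F$ in $L_\infty$, meaning $\max_{\bx\in\M}|f(\bx)-g_j(\bx)|<\e/8$ for some $j$, then for any realization of the sample $(\bx_i)_{i=1}^n$ we automatically have $\frac{1}{n}\sum_{i=1}^n|f(\bx_i)-g_j(\bx_i)|<\e/8$, so the same family is an $\e/8$-net in $L_1(\mu_n)$. This yields the pointwise (in samples) inequality $N(\e/8,F,L_1(\mu_n))\le N(\e/8,F,L_\infty)$, and since the right-hand side does not depend on the random sample, taking expectation gives $\mathbb{E}_\mu[N(\e/8,F,L_1(\mu_n))]\le N(\e/8,F,L_\infty)$.

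Substituting this bound into the conclusion of Theorem \ref{thm:entropy} yields the claimed inequality. There is no genuine obstacle here; the corollary is a one-line weakening of Theorem \ref{thm:entropy}, and the only thing worth being careful about is the direction of the covering-number inequality (stronger norm gives larger covering number), which follows immediately from the definition since every $\e$-net in the stronger norm is an $\e$-net in the weaker norm.
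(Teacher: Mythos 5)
Your proposal is correct and takes essentially the same route as the paper: the paper simply remarks that $\|\cdot\|_{L_1(\mu_n)}\le\|\cdot\|_{L_\infty(\mu_n)}\le\|\cdot\|_{L_\infty}$ and then states the corollary, leaving implicit exactly the reverse covering-number inequality and the deterministic-bound observation that you spell out.
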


\begin{corollary}
\label{cor:entropy}
Let $F$ be a class of functions from $\M$ to $[-1,1]$. Let $(\bx_i)_{i=1}^\infty$ be independent 
random variables distributed according to $p$, where $p$ is the probability distribution in Assumption \ref{assumptions}. Then
with probability at least $1-\delta$,
  \begin{eqnarray}
    \sup_{f\in F}|p(f)-p_n(f)|\le  \sqrt{\frac{128}{n}\left(\ln N(\sqrt{\frac{2}{n}},F,L_\infty)
+\ln \frac{8}{\delta}\right)},\nonumber
  \end{eqnarray}
where 
\begin{align}
  \label{eq:note-mc}
  p(f)=\int_\M f(\bx)p(\bx)\mathd \bx,\quad p_n(f)=\frac{1}{n}\sum_{i=1}^nf(\bx_i).
\end{align}
\end{corollary}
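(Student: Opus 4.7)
The plan is to derive Corollary \ref{cor:entropy} as a direct inversion of the tail bound in Corollary \ref{cor:entropy-0}. The idea is: Corollary \ref{cor:entropy-0} controls $\mathbb{P}(\sup_{f\in F}|p(f)-p_n(f)|>\e)$ from above by $8N(\e/8,F,L_\infty)\exp(-n\e^2/128)$, so I want to pick $\e$ so that this upper bound equals $\delta$, and then simplify.

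Concretely, I would set
$$\e^{*} \;=\; \sqrt{\frac{128}{n}\left(\ln N(\sqrt{2/n},F,L_\infty)+\ln\frac{8}{\delta}\right)},$$
and observe that with this choice $\exp(-n(\e^{*})^2/128)=\delta/(8N(\sqrt{2/n},F,L_\infty))$. Substituting into the bound of Corollary \ref{cor:entropy-0} yields
$$8\,N(\e^{*}/8,F,L_\infty)\exp\!\left(-n(\e^{*})^2/128\right) \;=\; \delta\cdot\frac{N(\e^{*}/8,F,L_\infty)}{N(\sqrt{2/n},F,L_\infty)}.$$
To conclude that the right hand side is at most $\delta$, I use the elementary fact that $\e \mapsto N(\e,F,L_\infty)$ is non-increasing, so it suffices to show $\e^{*}/8\ge \sqrt{2/n}$, i.e., $(\e^{*})^2\ge 128/n$. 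Unpacking this requirement, it is equivalent to $\ln N(\sqrt{2/n},F,L_\infty)+\ln(8/\delta)\ge 1$, which holds because $\delta\le 1$ gives $\ln(8/\delta)\ge \ln 8>1$ and $\ln N\ge 0$.

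I also need to verify the hypothesis $n\ge 8/(\e^{*})^2$ of Corollary \ref{cor:entropy-0}; but from $(\e^{*})^2\ge 128/n$ we get $8/(\e^{*})^2\le n/16\le n$, so this is automatic. Combining the two displays then gives
$$\mathbb{P}\!\left(\sup_{f\in F}|p(f)-p_n(f)|>\e^{*}\right)\le \delta,$$
which is exactly the statement of the corollary.

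There is no real obstacle here: this is a routine "invert the tail bound" argument. The only subtle point worth stating carefully in the write-up is the monotonicity of the covering number and the verification that the chosen $\e^{*}$ lies in the regime $\e^{*}\ge 8\sqrt{2/n}$, for which $\delta\le 1$ is the key ingredient.
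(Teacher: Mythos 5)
Your proposal is correct and is essentially the paper's own argument: both invert the tail bound of Corollary~\ref{cor:entropy-0} using monotonicity of $\e\mapsto N(\e,F,L_\infty)$ together with the lower bound $\e\ge 8\sqrt{2/n}$ that comes from $\delta\le 1$. The only cosmetic difference is that the paper introduces an implicitly defined $\e_\delta$ solving $\e_\delta=\sqrt{\tfrac{128}{n}\bigl(\ln N(\e_\delta/8,F,L_\infty)+\ln\tfrac{8}{\delta}\bigr)}$ and then bounds it from above, whereas you plug the explicit $\e^{*}$ directly into the tail bound and verify the probability is at most $\delta$ (and, unlike the paper, you also explicitly check the hypothesis $n\ge 8/\e^2$).
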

\begin{proof}
  Using Corollary \ref{cor:entropy-0}, with probability at least $1-\delta$, 
\begin{eqnarray}
    \sup_{f\in F}|p(f)-p_n(f)|\le  \e_\delta,\nonumber
  \end{eqnarray}
where $\e_\delta$ is determined by
\begin{eqnarray}
  \e_\delta=\sqrt{\frac{128}{n}\left(\ln N(\e_\delta/8,F,L_\infty)+\ln \frac{8}{\delta}\right)}.\nonumber
\end{eqnarray}
Obviously, 
\begin{align*}
  \e_\delta\ge \sqrt{\frac{128}{n}}=8\sqrt{\frac{2}{n}}
\end{align*}
which gives that
\begin{align*}
  N(\e_\delta/8,F,L_\infty)\le N(\sqrt{\frac{2}{n}},F,L_\infty)
\end{align*}
Then, we have
\begin{eqnarray}
  \e_\delta \le \sqrt{\frac{128}{n}\left(\ln N(\sqrt{\frac{2}{n}},F,L_\infty)
+\ln \frac{8}{\delta}\right)}\nonumber
\end{eqnarray}
which proves the corollary.
\end{proof}
Above corollary provides a tool to estimate the integral error on random samples. The key point is to 
obtain the estimates of the covering number.

Let us start from the function class $\mathcal{R}_t$. The functions in $\mathcal{R}_t$ are bounded uniformly, and the bound only depends on the kernel function $R$.
To apply above corollary, we need to normalize $\mathcal{R}_t$ to make it lie in $[-1,1]$. Here we also use $\mathcal{R}_t$ to denote the normalized function class 
and absorb the bound of $\mathcal{R}_t$ into the generic constant $C$. We do same normalize procedure for all function classes defined in Section \ref{sec:intermediate}.

Since the kernel $R\in C^2(\M)$ and $\M\in C^\infty$, we have
for any $\bx,\by\in \M$
\begin{eqnarray}
  |R\left(\frac{\|\bx-\by\|^2}{4t}\right)-R\left(\frac{\|\bz-\by\|^2}{4t}\right)|\le \frac{C}{\sqrt{t}}\|\bx-\bz\|.\nonumber
\end{eqnarray}
This gives an easy bound of $N(\e, \mathcal{R}_t,L_\infty)$,
\begin{eqnarray}
\label{eq:bound-R}
  N(\e, \mathcal{R}_t,L_\infty)\le \left(\frac{C}{\e\sqrt{t}}\right)^k
\end{eqnarray}
Using Corollary \ref{cor:entropy}, with probability at least $1-1/(2n)$,
\begin{align}
\label{eq:bound-w-prob}
  \sup_{f\in \mathcal{R}_t\cup\mathcal{R}_{t'}\cup\mathcal{R}_{8t}}|p(f)-p_n(f)|\le \frac{C}{\sqrt{n}}\left(\ln n-\ln t +1\right)^{1/2}
\end{align}
Then, we have
\begin{corollary}
\label{cor:prob_bound_wt}
  With probability at least $1-1/(2n)$, 
  \begin{align*}
\sup_{f\in \mathcal{R}_t\cup\mathcal{R}_{t'}\cup\mathcal{R}_{8t}}|p(f)-p_n(f)|\le \frac{w_{\min}}{2}
  \end{align*}
as long as $n$ is large enough such that the right hand side of \eqref{eq:bound-w-prob} is less than $w_{\min}/2$.
\end{corollary}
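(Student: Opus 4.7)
The plan is to derive this corollary as a direct specialization of the general Monte-Carlo bound in Corollary \ref{cor:entropy}, applied to the function class $F = \mathcal{R}_t \cup \mathcal{R}_{t'} \cup \mathcal{R}_{8t}$ with failure probability $\delta = 1/(2n)$. After absorbing the uniform sup-norm bound of these functions (which depends only on $R$) into the generic constant $C$ so that $F$ takes values in $[-1,1]$, the only genuine content is the covering-number estimate for $F$.

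First I would bound $N(\epsilon, F, L_\infty)$. The Lipschitz-type estimate used to derive \eqref{eq:bound-R} applies at every scale $s>0$ and yields $N(\epsilon, \mathcal{R}_s, L_\infty) \le (C/(\epsilon\sqrt{s}))^k$. Because $t' = t/18$ and $8t$ differ from $t$ only by absolute constants, and because the covering number of a union is at most the sum of the covering numbers, one obtains
$$N(\epsilon, F, L_\infty) \le \left(\frac{C}{\epsilon \sqrt{t}}\right)^k$$
after adjusting $C$.

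Next I would substitute $\delta = 1/(2n)$ and the test scale $\epsilon = \sqrt{2/n}$ into Corollary \ref{cor:entropy}. Taking logarithms,
$$\ln N(\sqrt{2/n}, F, L_\infty) \le k \ln \frac{C\sqrt{n}}{\sqrt{t}} \le C(\ln n - \ln t + 1), \qquad \ln(16n) \le C(\ln n + 1),$$
and plugging these into the conclusion of Corollary \ref{cor:entropy} produces
$$\sup_{f \in F} |p(f) - p_n(f)| \le \frac{C}{\sqrt{n}}(\ln n - \ln t + 1)^{1/2}$$
with probability at least $1 - 1/(2n)$. This is exactly \eqref{eq:bound-w-prob}. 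The corollary then follows immediately: under the hypothesis that $n$ is large enough to make the right-hand side at most $w_{\min}/2$, the desired inequality holds on the same high-probability event.

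There is no substantive obstacle; this is a bookkeeping step that repackages the entropy bound into the form required for verifying hypothesis \eqref{eq:assume-1} of Theorem \ref{thm:converge_c1}. The only care needed is to note that the covering-number bound extends uniformly across the three scales $t'$, $t$, $8t$, which is automatic because they are all within absolute constants of one another.
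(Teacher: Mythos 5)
Your proposal is correct and follows essentially the same route as the paper: derive the covering-number bound $N(\e,\mathcal{R}_s,L_\infty)\le (C/(\e\sqrt{s}))^k$ from the Lipschitz estimate on the kernel, feed it into Corollary~\ref{cor:entropy} with $\delta=1/(2n)$ to obtain \eqref{eq:bound-w-prob}, and then impose that $n$ be large enough for the right-hand side to drop below $w_{\min}/2$. The only difference is that you spell out explicitly the union bound over the three scales $t',t,8t$ and the logarithm arithmetic, which the paper leaves implicit.
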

To get the covering number $ N(\e,\mathcal{K}_{t,n},L_\infty)$, we need the assumption that $\sup_{f\in \mathcal{R}_t}|p(f)-p_n(f)|\le \frac{w_{\min}}{2}$.
\begin{align*}
  \left|\frac{1}{w_{t,n}(\by)}\left[R\left(\frac{\|\bx-\by\|^2}{4t}\right)-R\left(\frac{\|\bz-\by\|^2}{4t}\right)\right]\right|\le
&\frac{2}{w_{\min}}|R\left(\frac{\|\bx-\by\|^2}{4t}\right)-R\left(\frac{\|\bz-\by\|^2}{4t}\right)|
\le \frac{C}{\sqrt{t}}|\bx-\by|
\end{align*}
The first inequality comes from the fact that $\min_{\mathbf{z}\in \M} w_{t,n}(\mathbf{z})\ge w_{\min}/2$ which is guaranteed by the assumption that 
$\sup_{f\in \mathcal{R}_t}|p(f)-p_n(f)|\le \frac{w_{\min}}{2}$.
Then we have
\begin{eqnarray}
  N(\e,\mathcal{K}_{t,n},L_\infty)\le \left(\frac{C}{\e\sqrt{t}}\right)^{k}.
\end{eqnarray}
Similarly, we can get 
\begin{eqnarray}
  N(\e,\mathcal{K}_{t,n}\cdot\mathcal{K}_{t,n},L_\infty)\le \left(\frac{C}{\e\sqrt{t}}\right)^{2k}
\end{eqnarray}

Using Corollary \ref{cor:entropy}, if $\sup_{f\in \mathcal{R}_t}|p(f)-p_n(f)|\le \frac{w_{\min}}{2}$, then
\begin{align}
  \sup_{f\in \mathcal{K}_{t,n}\cup \mathcal{K}_{t,n}\cdot \mathcal{K}_{t,n}}|p(f)-p_n(f)|\le C\sqrt{ \frac{k}{n}}\left(\ln n-\ln t +1\right)^{1/2}
\end{align}
with probability at least $1-1/(2n)$.
From Corollary \ref{cor:prob_bound_wt}, we know that the assumption $\sup_{f\in \mathcal{R}_t}|p(f)-p_n(f)|\le \frac{w_{\min}}{2}$ holds with probability 
at least $1-1/(2n)$. By integrating these results together, we obtain
\begin{corollary}
  With probability at least $1-1/n$, 
  \begin{align*}
\sup_{f\in \mathcal{K}_{t,n}\cup \mathcal{K}_{t,n}\cdot \mathcal{K}_{t,n}}|p(f)-p_n(f)|\le \frac{\delta^2}{2\max\{w_{\max}+w_{\min}/2,2/w_{\min}\}}
  \end{align*}
as long as $n$ is large enough. Here $\delta=\frac{w_{\min}}{4w_{\max}+3w_{\min}}$.
\end{corollary}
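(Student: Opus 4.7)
The plan is to combine Corollary \ref{cor:prob_bound_wt} with the covering number estimates already established for $\mathcal{K}_{t,n}$ and $\mathcal{K}_{t,n}\cdot\mathcal{K}_{t,n}$, then apply Corollary \ref{cor:entropy} and finish by a union bound. The statement is essentially a restatement of the deviation bound $C\sqrt{k/n}(\ln n - \ln t + 1)^{1/2}$ derived just above, with the noted observation that this quantity drops below the target threshold $\delta^2/(2\max\{w_{\max}+w_{\min}/2,2/w_{\min}\})$ once $n$ is sufficiently large (here $t$ is treated as going to zero slower than any negative power of $n$, as will be required globally for the main theorem).

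First, I would invoke Corollary \ref{cor:prob_bound_wt}, which yields the event
\[
 \mathcal{E}_1=\Bigl\{\sup_{f\in \mathcal{R}_t\cup\mathcal{R}_{t'}\cup\mathcal{R}_{8t}}|p(f)-p_n(f)|\le w_{\min}/2\Bigr\}
\]
with $\mathbb{P}(\mathcal{E}_1)\ge 1-1/(2n)$. On $\mathcal{E}_1$ we have $\min_{\bz\in\M}w_{t,n}(\bz)\ge w_{\min}/2$, so every function in $\mathcal{K}_{t,n}$ and $\mathcal{K}_{t,n}\cdot\mathcal{K}_{t,n}$ is uniformly bounded (after the standing normalization) and the Lipschitz bound in $\bx$ used to derive the covering number estimates is valid. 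Consequently, on $\mathcal{E}_1$,
\[
 N(\e,\mathcal{K}_{t,n},L_\infty)\le (C/(\e\sqrt{t}))^k, \qquad N(\e,\mathcal{K}_{t,n}\cdot\mathcal{K}_{t,n},L_\infty)\le (C/(\e\sqrt{t}))^{2k}.
\]

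Next, I would apply Corollary \ref{cor:entropy} to the union class $\mathcal{K}_{t,n}\cup \mathcal{K}_{t,n}\cdot\mathcal{K}_{t,n}$ with confidence level $1-1/(2n)$, producing the event $\mathcal{E}_2$ on which
\[
 \sup_{f\in \mathcal{K}_{t,n}\cup \mathcal{K}_{t,n}\cdot \mathcal{K}_{t,n}}|p(f)-p_n(f)|\le C\sqrt{\tfrac{k}{n}}(\ln n-\ln t+1)^{1/2}.
\]
By the union bound, $\mathbb{P}(\mathcal{E}_1\cap \mathcal{E}_2)\ge 1-1/n$.

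Finally, on $\mathcal{E}_1\cap\mathcal{E}_2$ it remains to check that the deviation bound $C\sqrt{k/n}(\ln n-\ln t+1)^{1/2}$ is dominated by $\delta^2/(2\max\{w_{\max}+w_{\min}/2,2/w_{\min}\})$; since the right-hand side depends only on $R$, $p$, $\M$ (through $w_{\min},w_{\max}$) while the left-hand side tends to zero as $n\to\infty$ (with the standing assumption that $t$ does not shrink too fast), this holds for all $n$ sufficiently large. The only mildly delicate step is the first one, where one must recognize that the covering number estimates for $\mathcal{K}_{t,n}$ are only valid on the event $\mathcal{E}_1$, so the application of Corollary \ref{cor:entropy} is really conditional on $\mathcal{E}_1$; but since Corollary \ref{cor:entropy} requires no independence from $\mathcal{E}_1$ beyond a deterministic uniform bound of the function class (which is precisely what $\mathcal{E}_1$ provides), the union bound argument goes through cleanly.
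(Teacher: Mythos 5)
Your argument reproduces the paper's proof step for step: Corollary~\ref{cor:prob_bound_wt} gives the event (probability $\ge 1-1/(2n)$) on which $w_{t,n}\ge w_{\min}/2$ and hence the covering-number bounds $N(\e,\mathcal{K}_{t,n},L_\infty)\le(C/(\e\sqrt{t}))^k$, $N(\e,\mathcal{K}_{t,n}\cdot\mathcal{K}_{t,n},L_\infty)\le(C/(\e\sqrt{t}))^{2k}$ hold; Corollary~\ref{cor:entropy} then yields the $C\sqrt{k/n}(\ln n-\ln t+1)^{1/2}$ deviation bound on another event of probability $\ge 1-1/(2n)$; a union bound gives $1-1/n$; and the right-hand side is a fixed constant depending only on $w_{\min},w_{\max}$, so it dominates for $n$ large. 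Your closing remark about the sample-dependence of the class $\mathcal{K}_{t,n}$ flags a genuine subtlety (the class is random through $w_{t,n}$, so invoking Corollary~\ref{cor:entropy} conditionally on $\mathcal{E}_1$ is informal), but the paper handles this point at exactly the same level of rigor, so your proposal is faithful to the intended argument.
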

Using similar techniques, we can get the estimate of $h_0$ and $h(f)$ in \eqref{eq:h} and \eqref{eq:hf}. Together with Theorem \ref{thm:converge_h1}, we get
\begin{theorem}
\label{thm:estimate-h}
Let $\phi$ be an eigenfunction of $T$.
  With probability at least $1-1/n$, 
  \begin{align*}
\|(T_t-T_{t,n})T_{t,n}\|_{C^1}\le& \frac{C}{t^{3k/4+3/2}\sqrt{n}}\left(\ln n-\ln t +1\right)^{1/2},\\
\|(T_t-T_{t,n})\phi\|_{C^1}\le& \frac{C_\phi}{t^{3k/4+3/2}\sqrt{n}}\left(\ln n-\ln t +1\right)^{1/2}
  \end{align*}
as long as $n$ is large enough. Here 
$C_\phi$ is a constant depends on $\M$, kernel function $R$, distribution $p$ and eigenfunction $\phi$.
\end{theorem}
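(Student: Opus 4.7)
My plan is to reduce everything to the covering-number machinery already set up in Corollary \ref{cor:entropy}, apply it to each function class appearing in the definitions \eqref{eq:h} and \eqref{eq:hf} of $h_0$ and $h(f)$, and then feed the resulting bounds into Theorem \ref{thm:converge_c1}. First I would note that the two hypotheses \eqref{eq:assume-1} and \eqref{eq:assume-2} of Theorem \ref{thm:converge_c1} are already secured, with probability at least $1-1/n$, by Corollary \ref{cor:prob_bound_wt} and the corollary immediately following it. So we may work on the event on which $w_{t,n}(\by)\ge w_{\min}/2$ uniformly in $\by\in\M$, which is precisely what tames the $1/w_{t,n}$ factor appearing in every product class.

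The core task is then to bound the covering number $N(\e,\mathcal{F},L_\infty)$ for each $\mathcal{F}$ among $\mathcal{R}_t$, $\overline{\mathcal{R}}_t$, $\mathcal{D}_t$, $\overline{\mathcal{D}}_t$ and the various $\mathcal{K}_{t,n}$-products listed in Section \ref{sec:intermediate}. After normalizing each class so its elements lie in $[-1,1]$ (absorbing the bound into the generic constant $C$), the functions in each class are parametrized by one or two points of $\M$, and the dependence on those parameters is Lipschitz with constant of order $1/\sqrt{t}$. This follows from $R,\bar R \in C^2$, the chain rule, and the uniform lower bound on $w_{t,n}$. Hence for single-parameter classes $N(\e,\mathcal{F},L_\infty)\le (C/(\e\sqrt{t}))^k$, exactly as in \eqref{eq:bound-R}, and for two-parameter product classes $N(\e,\mathcal{F},L_\infty)\le (C/(\e\sqrt{t}))^{2k}$. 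Plugging $\e=\sqrt{2/n}$ into Corollary \ref{cor:entropy} and taking a union bound over the finitely many classes yields, with probability at least $1-1/n$,
\begin{equation*}
\sup_{g\in\mathcal{F}}|p(g)-p_n(g)|\le \frac{C}{\sqrt{n}}(\log n+|\log t|+1)^{1/2}
\end{equation*}
uniformly over every class $\mathcal{F}$ on the list.

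After accounting for the prefactors $1,t,t^2,t^3$ in \eqref{eq:h} together with the implicit normalization factors built into each class (for instance, the extra $\sqrt{t}$ appearing in $\mathcal{D}_t\cdot\mathcal{K}_{t,n}$), every term contributes at worst this Monte Carlo rate, and the bound $h_0\le \frac{C}{\sqrt{n}}(\log n+|\log t|+1)^{1/2}$ follows. Substituting into Theorem \ref{thm:converge_c1} produces the claimed estimate for $\|(T_t-T_{t,n})T_{t,n}\|_{C^1}$. For the eigenfunction-dependent bound I would repeat the argument with $h(\phi)$ in place of $h_0$. The only additional class to cover is $\phi\cdot\overline{\mathcal{R}}_t$, whose $L_\infty$ bound and Lipschitz constant are controlled by $\|\phi\|_{C^0}$ and $\|\phi\|_{C^1}$. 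Since $\phi$ is an eigenfunction of the Neumann Laplacian solution operator $T$, elliptic regularity gives that $\phi\in C^\infty(\M)$, so these two norms are finite and absorb into the eigenfunction-dependent constant $C_\phi$.

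The main obstacle is bookkeeping rather than any single hard estimate. One has to normalize each of the roughly eight function classes so their elements lie in $[-1,1]$, track the compensating powers of $t$ attached to each supremum in \eqref{eq:h} and \eqref{eq:hf}, verify that the combined Lipschitz constant of every product class really is $O(1/\sqrt{t})$ once $1/w_{t,n}$ is uniformly bounded, and organize the union bound so that the total failure probability remains at most $1/n$ once the events supporting Corollary \ref{cor:prob_bound_wt} are also included. No individual covering-number computation is subtle; they all reduce to the smoothness of $R,\bar R$ and $\M$. The delicate point is uniformity in $\bx$, $\by$ and across classes, and the arithmetic of assembling everything into the single clean rate appearing in the statement.
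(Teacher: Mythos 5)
Your proposal is correct and follows essentially the same route as the paper: bound the covering number of each class in \eqref{eq:h} and \eqref{eq:hf} via the $O(1/\sqrt{t})$-Lipschitz parametrization afforded by $R,\bar R\in C^2$ and the uniform lower bound on $w_{t,n}$, feed these into Corollary \ref{cor:entropy} with a union bound, and then substitute the resulting Monte Carlo bound on $h_0$ and $h(\phi)$ into Theorem \ref{thm:converge_c1}. The only minor caveat is that under Assumption \ref{assumptions} the density is only $p\in C^1$, so invoking $C^\infty$ regularity of $\phi$ is more than is available; all you actually need is $\|\phi\|_\infty<\infty$ (from $H^2$ regularity or De Giorgi--Nash--Moser), which is what controls the class $\phi\cdot\overline{\mathcal{R}}_t$.
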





\section{Proof of Theorem \ref{thm:converge_c1}}
\label{sec:converge-c1}

To prove Theorem \ref{thm:converge_c1}, we need following two theorems.
\begin{theorem}
Under the assumption in Assumption \ref{assumptions} and assume \eqref{eq:assume-1}, \eqref{eq:assume-2} hold.
There exist constants $C>0$ only depends on $\M$ and kernel function $R$, so that for
any ${\bf u} = (u_1, \cdots, u_n)^t \in \R^d$ with $\sum_{i=1}^n u_i = 0$,
\begin{equation}
\frac{1}{n^2t}\sum_{i,j=1}^nR_t(\bfp_i,\bfp_j)(u_i-u_j)^2 \geq \frac{C}{n}\sum_{i=1}^nu_i^2.
\end{equation}
\label{thm:elliptic_dis}
\end{theorem}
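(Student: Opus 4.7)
The inequality is equivalent to a uniform spectral gap for the empirical operator $L_{t,n}$ on the space of mean-zero vectors: writing $\langle u,v\rangle_n := \frac{1}{n}\sum_i u_i v_i$, symmetry of $R_t$ gives
$$\frac{1}{n^2 t}\sum_{i,j=1}^n R_t(\bfp_i,\bfp_j)(u_i-u_j)^2 \;=\; 2\,\langle \bfu, L_{t,n}\bfu\rangle_n,$$
so the claim amounts to a uniform lower bound on the first nontrivial eigenvalue of $L_{t,n}$. The plan is to transfer this bound from its continuous counterpart for $L_t$ (defined in \eqref{eq:opt-Lt}), using the Monte-Carlo assumptions \eqref{eq:assume-1}--\eqref{eq:assume-2} to control the discrete-to-continuous gap.

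First I would establish the continuous spectral gap: there is $C_0>0$ independent of small $t$ such that for every $f\in L^2(\M)$ with $p(f)=0$,
$$\frac{1}{t}\int_\M\!\int_\M R_t(x,y)(f(x)-f(y))^2 p(x)p(y)\,dx\,dy \;\geq\; C_0 \int_\M f(x)^2\, p(x)\,dx.$$
This follows by combining the spectral gap of the Neumann problem \eqref{eqn:neumann} (whose smallest nonzero eigenvalue is positive because $\M$ is compact and $p$ is bounded away from $0$) with the $H^1$-norm convergence $T_t\to T$ from Theorem \ref{thm:converge_h1}, applied to the Rayleigh quotient of $L_t$.

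Second, I would extend $\bfu$ to a smooth function on $\M$ by kernel regression,
$$\tilde u(x) \;:=\; \frac{1}{n\,w_{t,n}(x)}\sum_{j=1}^n R_t(x,\bfp_j)\, u_j,$$
which is well defined because assumption \eqref{eq:assume-1} forces $w_{t,n}(x)\geq w_{\min}/2$ uniformly on $\M$. Then I would compare each quantity in the target inequality with its $\tilde u$-analogue via the empirical-process bounds of assumption \eqref{eq:assume-2}: the discrete energy with $\frac{1}{t}\int\!\int R_t(\tilde u(x)-\tilde u(y))^2 p(x)p(y)\,dx\,dy$; the discrete norm $\frac{1}{n}\sum u_i^2$ with $\int\tilde u^2 p$; and the mean-zero constraint $\sum u_i=0$ with the fact that $p(\tilde u)$ is small. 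Each comparison is a Monte-Carlo estimate on products of the function classes $\mathcal R_t$ and $\mathcal K_{t,n}$ catalogued in Section \ref{sec:intermediate}, and assumption \eqref{eq:assume-2} keeps every relative error below a fixed fraction. Chaining these comparisons with the continuous inequality of step one yields the target bound with a definite constant $C$.

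The main obstacle is step one, namely securing the continuous gap uniformly as $t\to 0$; this is where the structure of $\M$, compactness, and the lower bound on $p$ all enter, and where one must argue that the spectrum of $L_t$ does not degenerate near $0$ besides at the trivial constant mode. Once that is in hand, the discrete reduction is a mechanical (if notation-heavy) Monte-Carlo argument on exactly the function classes that \eqref{eq:assume-1}--\eqref{eq:assume-2} are designed to control.
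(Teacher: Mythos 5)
Your overall architecture — smooth $\bfu$ by kernel regression, transfer the continuous coercivity (Lemma \ref{lem:coercivity}) to the discrete quadratic form via the empirical-process assumptions — matches the paper's. Your way of securing the continuous gap (Neumann spectral gap plus $T_t\to T$ in $H^1$) is a different route than the paper, which simply cites Lemma \ref{lem:coercivity} from prior work; that substitution is reasonable.

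However, there is a genuine gap in the central comparison step. You propose to compare ``the discrete norm $\frac{1}{n}\sum u_i^2$ with $\int\tilde u^2 p$'' and claim this is controlled by the Monte-Carlo assumptions. It is not. The empirical-process bounds \eqref{eq:assume-1}--\eqref{eq:assume-2} control the Monte-Carlo error $\left|\frac{1}{n}\sum_i \tilde u(\bfp_i)^2 - \int_\M \tilde u^2 p\right|$, but the quantity you actually need is the gap between $\frac{1}{n}\sum_i u_i^2$ and $\frac{1}{n}\sum_i \tilde u(\bfp_i)^2$, which is a \emph{smoothing} error, not a Monte-Carlo error. The kernel regression $\tilde u$ is a local average of the $u_j$; if $\bfu$ oscillates at scale finer than $\sqrt t$, $\tilde u$ can be nearly zero while $\frac{1}{n}\sum u_i^2$ stays order one, so no relative bound of the form you invoke holds uniformly in $\bfu$. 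This is exactly the obstruction the paper handles with the $\delta$-dichotomy: either $\frac{1}{n}\sum_i u(\bfp_i)^2 \ge \delta^2 \frac{1}{n}\sum_i u_i^2$, in which case the chain through the continuous coercivity closes; or the residual $u_i - u(\bfp_i)$ is comparable to $u_i$, and then a separate direct computation (the final display of the paper's proof) shows the discrete Dirichlet energy already dominates $\frac{1}{n}\sum_i u_i^2$ without any appeal to the continuous gap. Your proposal has no branch for this second case, so the argument as written does not close. A secondary technical point you omit: the paper defines the regression at the finer scale $t'=t/18$, not $t$; this reduction is what makes the triple-kernel support argument ($|\bfp_i-\bfp_j|^2\le 36t' = 2t$) land inside the region where $R\ge\delta_0$ when the discrete energy is bounded below by the continuous one. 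At scale $t$ that inclusion fails.
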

The proof of this theorem can be found in Appendix.

\begin{theorem}
Suppose $\mathbf{u}=(u_1, \cdots, u_n)^t$ with $\sum_i u_i = 0$ solves the problem~\eqref{eqn:dis}
and  $f\in C(\mathcal{M})$.
Then there exists a constant $C>0$ only depends on $\M$ and kernel function $R$, such that
\begin{eqnarray}
  \left(\frac{1}{n}\sum_{i=1}^nu_i^2\right)^{1/2}\le C\left(\frac{1}{n}\sum_{i=1}^nf(\bfp_i)^2\right)^{1/2}\le C\|f\|_\infty,\nonumber
\end{eqnarray}
as long as \eqref{eq:assume-1}, \eqref{eq:assume-2} are satisfied.
\label{thm:bound_solution_bfu}
\end{theorem}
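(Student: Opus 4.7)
The plan is to carry out a discrete energy (Lax--Milgram--style) estimate, using Theorem \ref{thm:elliptic_dis} as the discrete Poincar\'e inequality. First, I would test the discrete equation \eqref{eqn:dis} against $\mathbf{u}$ itself, i.e.\ multiply both sides by $u_i$ and average over $i$:
\begin{equation*}
\frac{1}{n^2 t}\sum_{i,j=1}^n R_t(\bfp_i,\bfp_j)(u_i-u_j)\,u_i
=\frac{1}{n^2}\sum_{i,j=1}^n \bar{R}_t(\bfp_i,\bfp_j)\,f(\bfp_j)\,u_i.
\end{equation*}
By the symmetry $R_t(\bfp_i,\bfp_j)=R_t(\bfp_j,\bfp_i)$, swapping $i\leftrightarrow j$ and averaging turns the left-hand side into $\frac{1}{2n^2 t}\sum_{i,j}R_t(\bfp_i,\bfp_j)(u_i-u_j)^2$. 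Since $\sum_i u_i=0$, Theorem \ref{thm:elliptic_dis} then yields a coercivity lower bound of the form $\frac{C}{n}\sum_i u_i^2$.

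Next I would attack the right-hand side by a double Cauchy--Schwarz. Writing $F_i:=\frac{1}{n}\sum_j \bar{R}_t(\bfp_i,\bfp_j)f(\bfp_j)$, Cauchy--Schwarz in $i$ gives
\begin{equation*}
\Bigl|\tfrac{1}{n}\sum_i F_i u_i\Bigr|
\le \Bigl(\tfrac{1}{n}\sum_i u_i^2\Bigr)^{1/2}\Bigl(\tfrac{1}{n}\sum_i F_i^2\Bigr)^{1/2}.
\end{equation*}
Then Cauchy--Schwarz in the inner sum defining $F_i$ gives $F_i^2\le \bigl(\tfrac{1}{n}\sum_j\bar{R}_t(\bfp_i,\bfp_j)\bigr)\bigl(\tfrac{1}{n}\sum_j \bar{R}_t(\bfp_i,\bfp_j)f(\bfp_j)^2\bigr)$. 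The first factor is the discrete analogue of $\int_\M \bar{R}_t(\bfp_i,\by)\,d\by$; under Assumption \ref{assumptions} together with the Monte--Carlo assumption \eqref{eq:assume-1}, it is uniformly bounded by an absolute constant (since $\bar{R}$ is controlled by $R$ on $[0,1]$, and the $R$-weight sums are within $w_{\min}/2$ of the continuous weight). Swapping $i\leftrightarrow j$ in the double sum that remains and again using the uniform bound on $\frac{1}{n}\sum_i \bar{R}_t(\bfp_j,\bfp_i)$ then produces $\frac{1}{n}\sum_i F_i^2 \le \frac{C}{n}\sum_j f(\bfp_j)^2$.

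Combining the two estimates yields
\begin{equation*}
\frac{C}{n}\sum_{i=1}^n u_i^2
\le \Bigl(\tfrac{1}{n}\sum_i u_i^2\Bigr)^{1/2}\Bigl(\tfrac{C'}{n}\sum_j f(\bfp_j)^2\Bigr)^{1/2},
\end{equation*}
and dividing through gives the claimed bound $\bigl(\tfrac{1}{n}\sum_i u_i^2\bigr)^{1/2}\le C\bigl(\tfrac{1}{n}\sum_j f(\bfp_j)^2\bigr)^{1/2}$. The trivial inequality $\frac{1}{n}\sum_j f(\bfp_j)^2\le \|f\|_\infty^2$ finishes the second estimate.

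The genuinely substantive input is Theorem \ref{thm:elliptic_dis}: without it the left-hand side is only a Dirichlet-type seminorm and one cannot control $\tfrac{1}{n}\sum_i u_i^2$ directly. The remaining potential obstacle is making the uniform bound $\frac{1}{n}\sum_j \bar{R}_t(\bfp_i,\bfp_j)\le C$ rigorous for every $i$; this is where assumption \eqref{eq:assume-1} enters, since it forces the empirical weight $w_{t,n}$ (and its $\bar R$-analogue) to stay uniformly close to the continuous weight $w_t$, which is bounded above by $w_{\max}$. Once this pointwise-in-$i$ bound is in hand, the rest is a routine Cauchy--Schwarz manipulation.
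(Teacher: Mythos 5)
Your proof is correct and takes essentially the same route as the paper's: both test the discrete equation \eqref{eqn:dis} against $\mathbf{u}$, symmetrize to produce the Dirichlet form $\frac{1}{2n^2 t}\sum_{i,j}R_t(\bfp_i,\bfp_j)(u_i-u_j)^2$, invoke Theorem \ref{thm:elliptic_dis} as the discrete Poincar\'e/coercivity inequality, and close the estimate with Cauchy--Schwarz plus a uniform bound on the empirical weight sums coming from \eqref{eq:assume-1}. (Minor remark: the paper's displayed chain writes $R_t$ where the discrete equation has $\bar{R}_t$ on the right-hand side --- an apparent typo --- whereas you keep $\bar{R}_t$ throughout, which is the correct kernel; the argument is unchanged since $\bar{R}$ is compactly supported and dominated by a rescaled $R$.)
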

\begin{proof} 
Since $(u_1, \cdots, u_n)$ satisfies that
  \begin{align*}
    \frac{1}{nt}\sum_{j=1}^nR_t(\bx_i,\bx_j)(u_i-u_j)=\frac{1}{n}\sum_{j=1}^n\bar{R}_t(\bx_i,\bx_j)f(\bx_j)
  \end{align*}
using Theorem \ref{thm:elliptic_dis}, we have
  \begin{align*}
     \frac{C}{n}\sum_{i=1}^nu_i^2\le &\frac{1}{n^2t}\sum_{i,j=1}^nR_t(\bfp_i,\bfp_j)(u_i-u_j)^2=\frac{2}{n^2t}\sum_{i,j=1}^nR_t(\bfp_i,\bfp_j)(u_i-u_j)u_i\\ 
=& \frac{2}{n^2}\sum_{i,j=1}^nR_t(\bx_i,\bx_j)f(\bx_j)u_i\\
\le&\left(\frac{1}{n^2}\sum_{i,j=1}^nR_t(\bx_i,\bx_j)f^2(\bx_j)\right)^{1/2}\left(\frac{1}{n^2}\sum_{i,j=1}^nR_t(\bx_i,\bx_j)u_i^2\right)^{1/2}\\
\le& C\left(\frac{1}{n}\sum_{j=1}^nf^2(\bx_j)\right)^{1/2}\left(\frac{1}{n}\sum_{i=1}^nu_i^2\right)^{1/2}\\
\le&C\|f\|_\infty\left(\frac{1}{n}\sum_{i=1}^nu_i^2\right)^{1/2}
  \end{align*}
\end{proof}

\begin{theorem}(\cite{SS-neumann, SS-iso})
Under the assumptions in Assumption \ref{assumptions},
assume $u(\bx)$ solves the following equation
\begin{eqnarray}
  -L_t u = r,
\end{eqnarray}
where
\begin{align}
  \label{eq:Lt}
  L_t u = \frac{C_t}{t}\int_\M R\left(\frac{|\bx-\by|^2}{4t}\right)(u(\bx)-u(\by))p(\by)\mathd\by.
\end{align}
Then, there exist constants $C>0, T_0>0$ independent on $t$, such that
\begin{eqnarray}
  \|u\|_{L^2(\M)}\le C\|r\|_{L^2(\M)}.
\end{eqnarray}
as long as $t\le T_0$.
\label{thm:stable-l2}
\end{theorem}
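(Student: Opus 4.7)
The plan is a standard energy argument, closed by a uniform nonlocal Poincar\'e inequality. Since $L_t$ annihilates constants, the compatibility condition $\int_\M r(\by) p(\by) \, \mathd \by = 0$ is necessary for solvability of $-L_t u = r$, and I would normalize the solution by $\int_\M u(\bx) p(\bx) \, \mathd \bx = 0$. Multiplying the equation by $u(\bx) p(\bx)$, integrating over $\M$, and symmetrizing in $\bx \leftrightarrow \by$ (using symmetry of $R_t(\bx,\by)$ and averaging) yields the energy identity
\begin{equation*}
\frac{C_t}{2t}\int_\M \int_\M R\!\left(\frac{|\bx-\by|^2}{4t}\right)(u(\bx)-u(\by))^2 p(\bx) p(\by) \, \mathd \bx \, \mathd \by \;=\; -\int_\M r(\bx) u(\bx) p(\bx) \, \mathd \bx.
\end{equation*}
Cauchy-Schwarz bounds the right-hand side by $p_{\max}\|r\|_{L^2(\M)}\|u\|_{L^2(\M)}$, so matters reduce to bounding the energy on the left from below by $\|u\|_{L^2(\M)}^2$.

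Concretely, the proof then hinges on the uniform nonlocal Poincar\'e inequality
\begin{equation*}
\frac{C_t}{t}\int_\M \int_\M R\!\left(\frac{|\bx-\by|^2}{4t}\right)(u(\bx)-u(\by))^2 p(\bx) p(\by) \, \mathd \bx \, \mathd \by \;\ge\; C\,\|u\|_{L^2(\M)}^2,
\end{equation*}
valid for every mean-zero $u \in L^2(\M)$, with $C > 0$ independent of $t \in (0, T_0]$ for some $T_0 > 0$ depending only on $\M$, $p$, and $R$. This is the continuum analog of Theorem \ref{thm:elliptic_dis}. Combining it with the energy identity and Cauchy-Schwarz immediately gives $C\|u\|_{L^2(\M)}^2 \le p_{\max}\|r\|_{L^2(\M)}\|u\|_{L^2(\M)}$, i.e.\ the stated estimate.

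The main obstacle is the Poincar\'e inequality itself, with a $t$-uniform constant. I would prove it by a chaining argument driven by the positivity $R(r) \ge \delta_0$ for $r \le 1/2$: for any $\bx, \by \in \M$ within geodesic distance at most $\sqrt{2t}$, the kernel obeys $R_t(\bx,\by) \ge \delta_0\, C_t$, so local averages of $u$ over geodesic balls of radius $\sqrt{t}$ are controlled pairwise by a small piece of the energy. For arbitrary $\bx, \by \in \M$, I would connect them by a geodesic chain of $N = O(\mathrm{diam}(\M)/\sqrt{t})$ intermediate points separated by distance at most $\sqrt{t}$, apply Cauchy-Schwarz across hops, and collect terms: the prefactor $C_t/t \sim t^{-k/2-1}$ in the energy, together with the local volume lower bound $\mathrm{vol}(B(\bx,\sqrt{t}) \cap \M) \ge c\, t^{k/2}$ (which survives near $\p\M$ thanks to $C^\infty$ smoothness of the boundary) and the $N \sim 1/\sqrt{t}$ number of hops, combine to give a constant independent of $t$. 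The mean-zero normalization finally converts the resulting uniform pairwise bound into the $L^2$ bound through the identity
\begin{equation*}
\int_\M\int_\M (u(\bx)-u(\by))^2 p(\bx) p(\by) \, \mathd\bx \, \mathd\by \;=\; 2\,\|u\|_{L^2(p)}^2,
\end{equation*}
which in turn is comparable to $\|u\|_{L^2(\M)}^2$ since $0 < p_{\min} \le p \le p_{\max} < \infty$.
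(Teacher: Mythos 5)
Your overall plan---test $-L_t u=r$ against $u\,p$, symmetrize to obtain the Dirichlet-form energy, apply Cauchy--Schwarz, and close with a $t$-uniform nonlocal Poincar\'e inequality under the mean-zero normalization---is exactly the argument the paper uses for the discrete analogue (Theorem~\ref{thm:bound_solution_bfu} is proved by pairing with the coercivity in Theorem~\ref{thm:elliptic_dis}), and the continuum coercivity you need is precisely Lemma~\ref{lem:coercivity}, which the paper imports from \cite{SS-neumann,SS-iso} rather than proving. Your compatibility observation and normalization are correct, and your statement of the Poincar\'e inequality (with the extra $1/t$ in front, so that for smooth $u$ the left side scales like $\int|\nabla u|^2$) is the right one.

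Where your proposal diverges is that you attempt to \emph{prove} the nonlocal Poincar\'e inequality by chaining, whereas the paper simply cites it. Your scaling bookkeeping ($C_t/t\sim t^{-k/2-1}$, volume lower bound $\gtrsim t^{k/2}$, $N\sim t^{-1/2}$ hops) does balance out to a $t$-independent constant, and chaining through ball averages at scale $\sqrt t$ is the standard device. But the paragraph is an outline, not a proof: it does not track how many chains re-use each local piece of energy (the overcounting factor must be shown uniform), it works with ``pointwise'' values along a chain even though $u$ is merely $L^2$ (so one must carefully replace values by ball averages and control the average-to-average hops), and it waves at the boundary issue. These are exactly the technical obstructions that make Lemma~\ref{lem:coercivity} a nontrivial ingredient, and since that lemma is already available in this paper's toolbox you could simply invoke it. With that substitution your proof is complete and coincides with the intended argument; as written, the chaining portion is a correct heuristic but not a finished proof.
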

The proof of above theorem can be found in \cite{SS-neumann}.
\begin{theorem}
Under the assumptions in Assumption \ref{assumptions}.
Let
$f\in C(\M)$
in both problems, then there exists constants
$C>0$, so that
\begin{eqnarray}
\|(T_{t,n}-T_t)T_{t,n} f\|_{L^2(\M)} &\leq& \frac{C}{t^{k/2+1}}\|f\|_{\infty}\left(
\sup_{g\in \mathcal{R}_t \cup \mathcal{R}_t\cdot \mathcal{K}_{t,n}}|p_n(g)-p(g)|+ t\sup_{g\in \overline{\mathcal{K}}_{t,n}\cdot \overline{\mathcal{R}}_t\cup
{\mathcal{K}}_{t,n}\cdot \overline{\mathcal{R}}_t}|p_n(g)-p(g)|\right)\nonumber\\
\|(T_{t,n}-T_t)f\|_{L^2(\M)} &\leq& \frac{C}{t^{k/2+1}}\|f\|_{\infty}\left(
\sup_{g\in   \mathcal{R}_t \cup\mathcal{R}_t\cdot \mathcal{K}_{t,n}}|p_n(g)-p(g)|+
t\sup_{g\in {\mathcal{K}}_{t,n}\cdot \overline{\mathcal{R}}_t\cup f\cdot \overline{\mathcal{R}}_t}|p_n(g)-p(g)|\right),\nonumber
\end{eqnarray}
as long as $t$ small enough and \eqref{eq:assume-1}, \eqref{eq:assume-2} are satisfied. 
\label{thm:dis_error}
\end{theorem}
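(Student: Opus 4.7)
My plan is to apply the $L^2$-stability from Theorem \ref{thm:stable-l2} to $v := T_{t,n}f - T_t f$ after computing $L_t v$ explicitly and showing that every piece of it is an empirical-process fluctuation over one of the function classes listed in the statement.

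The key computation evaluates $L_t(T_{t,n}f)(\bx)$ for arbitrary $\bx\in\M$ by means of the identity
\begin{equation*}
L_t u(\bx) = \frac{w_t(\bx)}{t}\, u(\bx) - \frac{1}{t}\int_\M R_t(\bx,\by)\, u(\by)\, p(\by)\, \mathd\by
\end{equation*}
with $u=T_{t,n}f$. Substituting the extension formula for $T_{t,n}f(\by)$ produces an integral of $R_t(\bx,\by)$ against a sum of terms $R_t(\by,\bx_k)u_k/w_{t,n}(\by)$ and $t\,\bar R_t(\by,\bx_k)f(\bx_k)/w_{t,n}(\by)$. Replacing each integral by its empirical average over $(\bx_l)$ introduces errors controlled by the suprema over $\mathcal R_t\cdot\mathcal K_{t,n}$ and $\overline{\mathcal R}_t\cdot\mathcal K_{t,n}$. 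In the remaining double sum, the inner sum $\frac{1}{n}\sum_k R_t(\bx_l,\bx_k)u_k$ equals $w_{t,n}(\bx_l)u_l - \frac{t}{n}\sum_k \bar R_t(\bx_l,\bx_k)f(\bx_k)$ by the linear system~\eqref{eqn:dis}, and the resulting $t$-multiplied cross term cancels the other $\bar R_t\cdot f$ empirical double sum exactly. What remains telescopes to $\frac{1}{nt}\sum_l R_t(\bx,\bx_l)u_l$, which by a second use of the extension formula equals $\frac{w_{t,n}(\bx)}{t}T_{t,n}f(\bx) - \frac{1}{n}\sum_l \bar R_t(\bx,\bx_l)f(\bx_l)$.

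Collecting everything and subtracting $L_t(T_t f)(\bx) = p(\bar R_t(\bx,\cdot)f)$ yields the central identity
\begin{equation*}
L_t(T_{t,n}f - T_t f)(\bx) = \frac{w_t(\bx)-w_{t,n}(\bx)}{t}\,T_{t,n}f(\bx) + \bigl[p_n(\bar R_t(\bx,\cdot)f) - p(\bar R_t(\bx,\cdot)f)\bigr] + \mathcal E(\bx),
\end{equation*}
where $\mathcal E$ collects the two Monte Carlo residuals above. Its $L^\infty$ norm is dominated by $(C_t/t)\|f\|_\infty$ times the stated suprema, once $\|T_{t,n}f\|_\infty$ is controlled by $\|f\|_\infty$ via the extension formula together with the $\ell^2$-estimate in Theorem \ref{thm:bound_solution_bfu}. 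Applying Theorem \ref{thm:stable-l2} to the mean-zero projection of $v$ then gives the second inequality; the normalization mismatch between $\int_\M T_tf\,p=0$ and $\sum_i u_i=0$ contributes only a constant whose size is again controlled by $\sup_{g\in\mathcal R_t}|p_n(g)-p(g)|$.

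For the first inequality I rerun the same derivation with $T_{t,n}f$ in place of $f$ as input. The only new ingredient is the term $(T_{t,n}f)\cdot\overline{\mathcal R}_t$ replacing $f\cdot\overline{\mathcal R}_t$; expanding $T_{t,n}f$ by the extension formula rewrites it as a finite linear combination, with coefficients $u_j$ and $f(\bx_j)$ uniformly bounded by $\|f\|_\infty$, of functions lying in $\mathcal K_{t,n}\cdot\overline{\mathcal R}_t$ and $\overline{\mathcal K}_{t,n}\cdot\overline{\mathcal R}_t$, which is exactly the second class appearing in the first bound. The main obstacle I expect is the algebraic cross-term cancellation in the second step and keeping all $C_t$ and $t$ factors consistent with the advertised $t^{-k/2-1}$ prefactor; a secondary subtlety is the mean-zero adjustment required before invoking Theorem \ref{thm:stable-l2}.
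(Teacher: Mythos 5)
Your approach is correct and takes a genuinely different algebraic route from the paper. Both ultimately estimate $\|L_t(T_{t,n}-T_t)g\|_{L^2}$ and invoke Theorem \ref{thm:stable-l2}, but the paper telescopes through the discrete operator $L_{t,n}$, splitting into $(L_t - L_{t,n})T_{t,n}T_{t,n}f$ and $(L_{t,n}T_{t,n} - L_t T_t)T_{t,n}f$ and treating each piece separately (the first after a further split $v_{t,n}=a_{t,n}+b_{t,n}$). You instead evaluate $L_t(T_{t,n}f)(\bx)$ directly and exploit an exact algebraic cancellation driven by the linear system \eqref{eqn:dis}: the $t$-multiplied cross term produced when $\frac{1}{n}\sum_k R_t(\bx_l,\bx_k)u_k$ is rewritten via \eqref{eqn:dis} cancels the empirical double sum from the $t\bar{R}_t f$ part, so the empirical approximation of $\int_\M R_t(\bx,\by)T_{t,n}f(\by)p(\by)\,\mathd\by$ collapses exactly to $\frac{1}{n}\sum_l R_t(\bx,\bx_l)u_l$, which the extension formula converts into $\frac{w_{t,n}(\bx)}{t}T_{t,n}f(\bx) - p_n(\bar{R}_t(\bx,\cdot)f)$. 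The resulting identity isolates each error source transparently and sidesteps the somewhat informal step in the paper where $L_{t,n}T_{t,n}u_{t,n}$ is equated with $p_n(\bar{R}_t(\bx,\cdot)u_{t,n})$ at points $\bx$ off the sample grid. Your derivation of the first inequality by substituting $T_{t,n}f$ for $f$ also correctly explains why $f\cdot\overline{\mathcal{R}}_t$ is replaced by $\mathcal{K}_{t,n}\cdot\overline{\mathcal{R}}_t\cup\overline{\mathcal{K}}_{t,n}\cdot\overline{\mathcal{R}}_t$.

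Two imprecisions, both with the same root. You assert that "its $L^\infty$ norm is dominated by $(C_t/t)\|f\|_\infty$ times the stated suprema once $\|T_{t,n}f\|_\infty$ is controlled by $\|f\|_\infty$"; in fact $\|T_{t,n}f\|_\infty\le Ct^{-k/4}\|f\|_\infty$ (Theorem \ref{thm:bound-Ttn}), so an $L^\infty$ estimate would only yield a prefactor of order $t^{-3k/4-1}$. What produces the advertised $t^{-k/2-1}$ is the $L^2$ bound $\|T_{t,n}f\|_{L^2}\le C\|f\|_\infty$; since Theorem \ref{thm:stable-l2} only needs $\|L_t v\|_{L^2}$, you should close the estimates in $L^2$ throughout, not $L^\infty$. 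The same fix addresses the claim that the coefficients $u_j$ are "uniformly bounded by $\|f\|_\infty$": only $(\frac{1}{n}\sum_j u_j^2)^{1/2}\le C\|f\|_\infty$ holds, and that is precisely what the $L^2$ argument consumes (via the paper's trick of absorbing the indicator $\{|\bx-\bx_j|^2\le 16t\}$ into $R(\frac{|\bx-\bx_j|^2}{32t})/\delta_0$ and then using Cauchy--Schwarz). Finally, your mean-zero correction is not over $\mathcal{R}_t$ alone: $p(T_{t,n}f)-p_n(T_{t,n}f)$ unfolds into empirical fluctuations over $\{R_t(\cdot,\bz)/w_{t,n}(\cdot): \bz\in\M\}$ and $\{\bar{R}_t(\cdot,\bz)/w_{t,n}(\cdot): \bz\in\M\}$, a different (though equally tractable) class; note the paper itself does not address this normalization mismatch explicitly.
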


\begin{proof} {\it of Theorem \ref{thm:dis_error}} \\
First, denote
\begin{eqnarray}
u_{t,n}(\bx)=T_{t,n}f=\frac{1}{n\,w_{t,n}(\bx)}\left(\sum_{j=1}^nR_t(\bx,\bfp_j)u_j
-t\sum_{j=1}^n\bar{R}_t(\bx,\bfp_j)f_j\right)
\end{eqnarray}
where $\mathbf{u}=(u_1, \cdots, u_n)^t$ with $\sum_{i=1}^n u_i = 0$ solves the problem~\eqref{eqn:dis},
$f_j = f(\bfp_j)$ and $w_{t,n}(\bx)=\frac{1}{n}\sum_{j=1}^nR_t(\bx,\bfp_j)$.
And denote
\begin{eqnarray}
v_{t,n}(\bx)=T_{t,n}u_{t,n}=\frac{1}{n\,w_{t,n}(\bx)}\left(\sum_{j=1}^nR_t(\bx,\bfp_j)v_j
-t\sum_{j=1}^n\bar{R}_t(\bx,\bfp_j)u_j\right)
\end{eqnarray}
where $\mathbf{v}=(v_1, \cdots, v_n)^t$ with $\sum_{i=1}^n v_i = 0$ solves
\begin{eqnarray}
\label{eqn:dis-v}
 -\frac{1}{nt}\sum_{j=1}^nR_t(\bx_i,\bx_j)(v_i-v_j)=\frac{1}{n}\sum_{j=1}^n\bar{R}_t(\bx_i,\bx_j)u_j.
\end{eqnarray}
It follows from Theorem \ref{thm:elliptic_dis} that there exists a constant $C>0$ independent on $t$ and $n$ such that
\begin{eqnarray}
  \label{eq:bound-uv}
  \left(\frac{1}{n}\sum_{i=1}^nu_i^2\right)^{1/2}\le C\|f\|_{\infty},\quad \left(\frac{1}{n}\sum_{i=1}^nv_i^2\right)^{1/2}\le C\left(\frac{1}{n}\sum_{i=1}^nu_i^2\right)^{1/2}\le
C\|f\|_{\infty}
\end{eqnarray}
The idea to prove the theorem is using Theorem \ref{thm:stable-l2}. Then we need to estimate $\|L_t(T_{t,n}-T_t)T_{t,n}f\|_2$ and $\|L_t(T_{t,n}-T_t)f\|_2$ for any $f\in C(\M)$.

For any $f\in C(\M)$,
\begin{eqnarray}
  L_t(T_{t,n}-T_t)T_{t,n}f&=&\left(L_tT_{t,n}T_{t,n}f-L_{t,n}T_{t,n}T_{t,n}f\right)+\left(L_{t,n}T_{t,n}T_{t,n}f-L_tT_{t}T_{t,n}f\right)
\nonumber\\
&=& (L_tv_{t,n}-L_{t,n}v_{t,n})+\left(L_{t,n}T_{t,n}u_{t,n}-L_tT_{t}u_{t,n}\right).
\label{eq:error-0}
\end{eqnarray}

Next, we estimate two terms of right hand side of \eqref{eq:error-0} separately.
For convenience, we split $v_{t,n}=a_{t,n}+b_{t,n}$ and
\begin{eqnarray}
  a_{t,n}(\bx)&=&\frac{1}{n\,w_{t,n}(\bx)}\sum_{j=1}^nR_t(\bx,\bfp_j)v_j,\\
b_{t,n}(\bx)&=&-\frac{t}{n\,w_{t,n}(\bx)}\sum_{j=1}^n\bar{R}_t(\bx,\bfp_j)u_j.
\end{eqnarray}
For $\|L_tb_{t,n} - L_{t,n}b_{t,n}\|_2$, we have
\begin{eqnarray}
&&\left|\left(L_tb_{t,n} - L_{t,n}b_{t,n}\right)(\bx)\right|\nonumber \\
&=& \frac{1}{t} \left|\int_{\mathcal{M}}R_t(\bx,\by)(b_{t,n}(\bx)-b_{t,n}(\by))  p(\by)\mathd \by-\frac{1}{n}
\sum_{j=1}^nR_t(\bx,\bfp_j)
(b_{t,n}(\bx)-b_{t,n}(\bfp_j))\right|\nonumber\\
&\le &\frac{1}{t} \left|b_{t,n}(\bx)\right|\left|\int_{\mathcal{M}}R_t(\bx,\by) p(\by)\mathd \by-\frac{1}{n}
\sum_{j=1}^nR_t(\bx,\bfp_j)\right|\nonumber\\
&&+ \frac{1}{t}\left|\int_{\mathcal{M}}R_t(\bx,\by)b_{t,n}(\by)  p(\by)\mathd \by-\frac{1}{n}
\sum_{j=1}^nR_t(\bx,\bfp_j)b_{t,n}(\bfp_j)\right|
\label{eq:error-b-0}
\end{eqnarray}
The first term of \eqref{eq:error-b-0} can be bounded as following,
\begin{eqnarray}
&&  \left\|b_{t,n}(\bx)\left(\int_{\mathcal{M}}R_t(\bx,\by) p(\by)\mathd \by-\frac{1}{n}\sum_{j=1}^nR_t(\bx,\bfp_j)
\right)\right\|
_{L^2}
\le C_t \|b_{t,n}\|_{L^2} \sup_{g\in \mathcal{R}_t}|p_n(g)-p(g)|
\label{eq:error-b-1}
\end{eqnarray}
and
\begin{eqnarray}
  \|b_{t,n}\|_{L^2}^2&=&\frac{t^2}{n^2}\int_\M \left(\frac{1}{w_{t,n}(\bx)}\sum_{j=1}^n\bar{R}_t(\bx,\bfp_j)u_j\right)^2p(\bx)
\mathd\bx\nonumber\\
&\le & \frac{Ct^2}{n}\int_\M \left(\frac{1}{n}\sum_{j=1}^n\bar{R}_t(\bx,\bfp_j)\right)\left(
\sum_{j=1}^n\bar{R}_t(\bx,\bfp_j)u_j^2\right)
p(\bx)\mathd\bx\nonumber\\
&\le&\frac{Ct^2}{n}
\sum_{j=1}^n\left(u_j^2\int_\M \bar{R}_t(\bx,\bfp_j)p(\bx)\mathd\bx\right)\nonumber\\
&\le& \frac{Ct^2}{n}
\sum_{j=1}^nu_j^2\le Ct^2\|f\|_{\infty},
\end{eqnarray}
where last inequality comes from \eqref{eq:bound-uv}.

For the second term of \eqref{eq:error-b-0},
\begin{eqnarray}
 && \left|\int_{\mathcal{M}}R_t(\bx,\by)b_{t,n}(\by)  p(\by)\mathd \by-\frac{1}{n}\sum_{j=1}^nR_t(\bx,\bfp_j)b_{t,n}(\bfp_j)\right|
\nonumber\\
&= &\frac{t}{n} \left|\int_{\mathcal{M}}\frac{R_t(\bx,\by)}{w_{t,n}(\by)}\left(\sum_{\bfp_k\in P}\bar{R}_t(\by,\bfp_k)u_k\right)
  p(\by)
\mathd \by-\frac{1}{n}\sum_{j=1}^n\frac{R_t(\bx,\bfp_j)}{w_{t,n}(\bfp_j)}\sum_{\bfp_k\in P}\bar{R}_t(\bfp_j,\bfp_k)
u_k\right|
\nonumber\\
&\le &\frac{t}{n} \sum_{k=1}^n|u_k|\left|\int_{\mathcal{M}}\frac{R_t(\bx,\by)}{w_{t,n}(\by)}\bar{R}_t(\by,\bfp_k)  p(\by)
\mathd \by-\frac{1}{n}\sum_{j=1}^n\frac{R_t(\bx,\bfp_j)}{w_{t,n}(\bfp_j)}\bar{R}_t(\bfp_j,\bfp_k)\right|
\label{eq:error-b-1-sum}
\end{eqnarray}
Let
\begin{eqnarray}
A &=&  C_t\int_{\mathcal{M}}\frac{1}{w_{t,n}(\by)}R\left(\frac{|\bx-\by|^2}{4t}\right)
\bar{R}\left(\frac{|\bfp_i-\by|^2}{4t}\right) p(\by)\mathd \by\nonumber\\
 &-&\frac{C_t}{n}\sum_{j=1}^n \frac{1}{w_{t,n}(\bfp_j)}R\left(\frac{|\bx-\bfp_j|^2}{4t}\right)\bar{R}\left(\frac{|\bfp_i-\bfp_j|^2}{4t}\right).
\end{eqnarray}
We have
\begin{eqnarray}
  |A|<C_t\sup_{g\in \mathcal{K}_{t,n}\cdot\overline{\mathcal{R}}_t}|p_n(g)-p(g)|
\end{eqnarray}
for some constant $C$ independent of $t$. In addition, notice that
only when $|\bx-\bfp_i|^2\leq 16t $ is $A\neq 0$, which implies
\begin{eqnarray}
|A| \leq \frac{1}{\delta_0}|A|R\left(\frac{|\bx-\bfp_i|^2}{32t}\right).
\end{eqnarray}
Using these properties of $A$, we obtain
\begin{eqnarray}
 && \left|\int_{\mathcal{M}}R_t(\bx,\by)b_{t,n}(\by)  p(\by)\mathd \by-\frac{1}{n}\sum_{j=1}^nR_t(\bx,\bfp_j)b_{t,n}(\bfp_j)\right|
\nonumber\\
&\le&\frac{Ct}{n}|A|_\infty \sum_{k=1}^n|u_k|R\left(\frac{|\bx-\bfp_k|^2}{32t}\right)
\nonumber\\
&\le & \frac{Ct}{n}\sum_{k=1}^nC_t|u_k|R\left(\frac{|\bx-\bfp_k|^2}{32t}\right)
C_t\sup_{g\in \mathcal{K}_{t,n}\cdot \overline{\mathcal{R}}_t}|p_n(g)-p(g)|
\end{eqnarray}
It follows that
\begin{eqnarray}
 && \left\|\int_{\mathcal{M}}R_t(\bx,\by)b_{t,n}(\by)  p(\by)\mathd \by-\frac{1}{n}\sum_{j=1}^nR_t(\bx,\bfp_j)b_{t,n}(\bfp_j)\right\|_2
\nonumber\\
&\le & Ct\left(\int_\M\left(\frac{1}{n}\sum_{k=1}^nC_t|u_k|R\left(\frac{|\bx-\bfp_k|^2}{32t}\right)\right)^2p(\bx)\mathd\bx\right) ^{1/2}
C_t\sup_{g\in \mathcal{K}_{t,n}\cdot \overline{\mathcal{R}}_t}|p_n(g)-p(g)|\nonumber\\
&\le & Ct \left(\frac{1}{n}\sum_{k=1}^nu_k^2\right)^{1/2}C_t\sup_{g\in \mathcal{K}_{t,n}\cdot \overline{\mathcal{R}}_t}|p_n(g)-p(g)|\nonumber\\
&\le & Ct\|f\|_{\infty} C_t\sup_{g\in \mathcal{K}_{t,n}\cdot \overline{\mathcal{R}}_t}|p_n(g)-p(g)|
\label{eq:error-b-2}
\end{eqnarray}
To get the second inequality, we use the condtion that $C_t\sup_{g\in \mathcal{R}_{8t}}\le w_{\min}/2$.

Now we have complete upper bound of $\|L_tb_{t,n} - L_{t,n}b_{t,n}\|_{L_2}$ using \eqref{eq:error-b-0}, \eqref{eq:error-b-1} and \eqref{eq:error-b-2} 
and $C_t=\frac{1}{(4\pi t)^{k/2}}$,
\begin{eqnarray}
&&\|L_tb_{t,n} - L_{t,n}b_{t,n}\|_{L^2(\M)}
 \le
\frac{C}{t^{k/2}}\|f\|_{\infty}\left(\sup_{g\in \mathcal{R}_t\cup\mathcal{K}_{t,n}\cdot\overline{\mathcal{R}}_t}
|p_n(g)-p(g)|\right).
\label{eq:error-b}
\end{eqnarray}
Mimicing the derivation of \eqref{eq:error-b}, we have
\begin{eqnarray}
&&\|L_ta_{t,n} - L_{t,n}a_{t,n}\|_{L^2(\M)} 
\le
\frac{C}{t^{k/2+1}}\|f\|_{\infty}\left(\sup_{g\in \mathcal{R}_t\cup\mathcal{K}_{t,n}\cdot\mathcal{R}_t}
|p_n(g)-p(g)|\right)
\end{eqnarray}
And consequently,
\begin{eqnarray}
&&\|L_tv_{t,n} - L_{t,n}v_{t,n}\|_{L^2(\M)}\nonumber \\
&\le& \|L_ta_{t,n} - L_{t,n}a_{t,n}\|_{L^2(\M)} + \|L_tb_{t,n} - L_{t,n}b_{t,n}\|_{L^2(\M)}  \nonumber \\
&\le&\frac{C}{t^{k/2+1}}\|f\|_{\infty}\left(\sup_{g\in \mathcal{R}_t\cup\mathcal{K}_{t,n}\cdot\mathcal{R}_t}
|p_n(g)-p(g)|+t\sup_{g\in  \mathcal{K}_{t,n}\cdot\overline{\mathcal{R}}_t}
|p_n(g)-p(g)|\right).
\label{eq:error-dis-v}
\end{eqnarray}

The second term of \eqref{eq:error-0} can be bounded as following,
\begin{eqnarray}
&&L_t(T_t u_{t,n}) - L_{t, n}(T_{t,n}u_{t, n}) \nonumber \\
&\leq&    \int_{\mathcal{M}}\bar{R}_t(\bx,\by)u_{t,n}(\by)p(\by)\mathd\by - \frac{1}{n}
\sum_{j=1}^n\bar{R}_t(\bx,\bfp_j)u_j \nonumber \\
&\le & \frac{1}{n^2}\sum_{j=1}^n \frac{\overline{R}_t(\bx,\bx_j)}{w_{t,n}(\bx_j)}\left(\sum_{k=1}^n
R_t(\bx_j,\bx_k)u_k-t\sum_{k=1}^n \overline{R}_t(\bx_j,\bx_k)f_k\right)\nonumber\\
&&
-\frac{1}{n}\int_\M \frac{\overline{R}_t(\bx,\by)}{w_{t,n}(\by)}\left(\sum_{k=1}^n
R_t(\by,\bx_k)u_k-t\sum_{k=1}^n \overline{R}_t(\by,\bx_k)f_k\right)p(\by)\mathd\by\nonumber\\
&=& \frac{1}{n}\sum_{k=1}^nu_k \left(\frac{1}{n}\sum_{j=1}^n\frac{\overline{R}_t(\bx,\bx_j)}{w_{t,n}(\bx_j)}
R_t(\bx_j,\bx_k)-\int_\M \frac{\overline{R}_t(\bx,\by)}{w_{t,n}(\by)}R_t(\by,\bx_k)p(\by)\mathd\by\right)\nonumber\\
&&-\frac{t}{n}\sum_{k=1}^nf_k \left(\frac{1}{n}\sum_{j=1}^n\frac{\overline{R}_t(\bx,\bx_j)}{w_{t,n}(\bx_j)}
\overline{R}_t(\bx_j,\bx_k)
-\int_\M \frac{\overline{R}_t(\bx,\by)}{w_{t,n}(\by)}\overline{R}_t(\by,\bx_k)p(\by)\mathd\by\right).
\label{eq:error-dis-u}
\end{eqnarray}
Using the similar derivation from \eqref{eq:error-b-1-sum} to \eqref{eq:error-b}, we get
\begin{eqnarray}
&&  \|L_t(T_t u_{t,n}) - L_{t, n}(T_{t,n}u_{t, n})\|_{L^2}\nonumber\\
&\le &C\left(\frac{1}{n}\sum_{j=1}^nu_j^2\right)^{1/2} C_t\sup_{g\in {\mathcal{K}}_{t,n}\cdot \overline{\mathcal{R}}_t}|p_n(g)-p(g)|
+Ct\|f\|_{\infty}C_t\sup_{g\in \overline{\mathcal{K}}_{t,n}\cdot \overline{\mathcal{R}}_t}|p_n(g)-p(g)|\nonumber\\
&\le& \frac{C}{t^{k/2}}\|f\|_{\infty}\left(\sup_{g\in
{\mathcal{K}}_{t,n}\cdot \overline{\mathcal{R}}_t}|p_n(g)-p(g)|+t\sup_{g\in \overline{\mathcal{K}}_{t,n}\cdot \overline{\mathcal{R}}_t}|p_n(g)-p(g)|\right).
\end{eqnarray}
The complete estimate follows from Equation~\eqref{eq:error-dis-v} and~\eqref{eq:error-dis-u}.
\begin{eqnarray}
\|L_t(T_{t,n}-T_t)T_{t, n}f\|_{L^2(\M)}
&\le&   \frac{C}{t^{k/2+1}}\|f\|_{\infty}\left(
\sup_{g\in \mathcal{R}_t \cup \mathcal{R}_t\cdot \mathcal{K}_{t,n}}|p_n(g)-p(g)|\right.\nonumber\\
&&\left.+ t\sup_{g\in
{\mathcal{K}}_{t,n}\cdot \overline{\mathcal{R}}_t}|p_n(g)-p(g)|+t^2\sup_{g\in \overline{\mathcal{K}}_{t,n}\cdot \overline{\mathcal{R}}_t}|p_n(g)-p(g)|\right).
\label{eq:error-tf}
\end{eqnarray}
Similarly, we can also get
\begin{eqnarray}
\|L_t(T_{t,n}-T_t)f)\|_{L^2(\M)}
&\le& \frac{C}{t^{k/2+1}}\|f\|_{\infty}\left(
\sup_{g\in \mathcal{R}_t \cup \mathcal{R}_t\cdot \mathcal{K}_{t,n}}|p_n(g)-p(g)|\right.\nonumber\\
&&\left.+ t\sup_{g\in
{\mathcal{K}}_{t,n}\cdot \overline{\mathcal{R}}_t}|p_n(g)-p(g)|+t^2\sup_{g\in f\cdot \overline{\mathcal{R}}_t}|p_n(g)-p(g)|\right).
\label{eq:error-f}
\end{eqnarray}
The theorem is proved by using Theorem \ref{thm:stable-l2} and above two estimates \eqref{eq:error-tf}, \eqref{eq:error-f}


\end{proof}

 \begin{theorem}
   \label{thm:bound-Ttn}
Under the assumption in Assumption \ref{assumptions} and assume \eqref{eq:assume-1}, \eqref{eq:assume-2} hold.
Then, there exist constants $C>0$ only depends on $\M$ and kernel function $R$, such that for any $f\in C(\M)$,
   \begin{align*}
     \|T_{t,n}f\|_{\infty}\le Ct^{-k/4}\|f\|_\infty, \quad 
\|T_{t,n}f\|_{L^2}\le C\|f\|_\infty.
   \end{align*}
 \end{theorem}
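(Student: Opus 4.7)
The plan is to split $T_{t,n}f$ into its two pieces and bound them separately, using Cauchy--Schwarz together with the a priori bound on the discrete solution vector $\mathbf{u}$ provided by Theorem \ref{thm:bound_solution_bfu} and the uniform lower bound $w_{t,n}(\bx)\ge w_{\min}/2$ that follows from assumption \eqref{eq:assume-1}. Write
\begin{align*}
  T_{t,n}f(\bx) = a_{t,n}(\bx) + b_{t,n}(\bx),\quad
a_{t,n}(\bx)=\frac{1}{n\,w_{t,n}(\bx)}\sum_{j=1}^n R_t(\bx,\bx_j)u_j,\quad
b_{t,n}(\bx)=\frac{t}{n\,w_{t,n}(\bx)}\sum_{j=1}^n \bar R_t(\bx,\bx_j)f(\bx_j),
\end{align*}
where $\mathbf{u}=(u_1,\dots,u_n)^t$ with $\sum_i u_i=0$ solves \eqref{eqn:dis}. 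Theorem \ref{thm:bound_solution_bfu} gives $\frac{1}{n}\sum_j u_j^2\le C\|f\|_\infty^2$, which will be used repeatedly.

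For the second piece $b_{t,n}$, I would factor out $\|f\|_\infty$ and observe that
\begin{align*}
  |b_{t,n}(\bx)|\le \frac{t\|f\|_\infty}{w_{t,n}(\bx)}\cdot\frac{1}{n}\sum_{j=1}^n \bar R_t(\bx,\bx_j)\le \frac{2t\|f\|_\infty}{w_{\min}}\cdot\frac{1}{n}\sum_{j=1}^n \bar R_t(\bx,\bx_j).
\end{align*}
Since $\bar R$ is bounded and supported in $[0,1]$, the empirical average of $\bar R_t(\bx,\cdot)$ is uniformly bounded (the Monte Carlo fluctuation is controlled by \eqref{eq:assume-1} applied to $\bar{\mathcal R}_t$, and the continuous integral $\int_\M \bar R_t(\bx,\by)p(\by)\mathd\by$ is bounded independently of $t$). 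Hence $\|b_{t,n}\|_\infty\le Ct\|f\|_\infty$, which is admissible for both the $L^\infty$ and $L^2$ estimates.

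The main work is $a_{t,n}$. For the $L^2$ bound I would apply Cauchy--Schwarz in the sum,
\begin{align*}
  |a_{t,n}(\bx)|^2\le \frac{1}{(n\,w_{t,n}(\bx))^2}\left(\sum_{j=1}^n R_t(\bx,\bx_j)\right)\left(\sum_{j=1}^n R_t(\bx,\bx_j)u_j^2\right)
=\frac{1}{n\,w_{t,n}(\bx)}\cdot\frac{1}{n}\sum_{j=1}^n R_t(\bx,\bx_j)u_j^2,
\end{align*}
then integrate against $p(\bx)\mathd\bx$, swap the sum and integral (Fubini), and use $\int_\M R_t(\bx,\bx_j)p(\bx)\mathd\bx\le C$ together with $w_{t,n}\ge w_{\min}/2$ to obtain $\|a_{t,n}\|_{L^2}^2\le \frac{C}{n}\sum_j u_j^2\le C\|f\|_\infty^2$. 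Combined with $\|b_{t,n}\|_{L^2}\le Ct\|f\|_\infty$ this yields the claimed $L^2$ estimate.

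For the $L^\infty$ bound I would again use Cauchy--Schwarz, but this time across the full sum:
\begin{align*}
  |a_{t,n}(\bx)|\le \frac{1}{n\,w_{t,n}(\bx)}\left(\sum_{j=1}^n R_t(\bx,\bx_j)^2\right)^{1/2}\left(\sum_{j=1}^n u_j^2\right)^{1/2}.
\end{align*}
The crucial step---and the main technical obstacle---is to convert the squared kernel back to a single kernel: since $R$ is uniformly bounded, $R_t(\bx,\by)^2\le \frac{C}{t^{k/2}}R_t(\bx,\by)$, and therefore $\sum_j R_t(\bx,\bx_j)^2\le \frac{Cn\,w_{t,n}(\bx)}{t^{k/2}}$. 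Plugging this in together with $w_{t,n}(\bx)\ge w_{\min}/2$ and $\sum_j u_j^2\le Cn\|f\|_\infty^2$ gives
\begin{align*}
  |a_{t,n}(\bx)|\le \frac{1}{\sqrt{n\,w_{t,n}(\bx)}}\cdot\frac{C}{t^{k/4}}\cdot\left(\sum_{j=1}^n u_j^2\right)^{1/2}\le \frac{C}{t^{k/4}}\|f\|_\infty.
\end{align*}
Adding the bound on $b_{t,n}$ (which is $O(t)$ and thus absorbed for small $t$) gives $\|T_{t,n}f\|_\infty\le Ct^{-k/4}\|f\|_\infty$, completing the proof. The only nontrivial point is the pointwise conversion $R_t^2\lesssim t^{-k/2}R_t$, which is what produces the $t^{-k/4}$ loss in the $L^\infty$ estimate but is absent in the $L^2$ version.
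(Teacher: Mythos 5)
Your proposal is correct and follows essentially the same route as the paper: split $T_{t,n}f$ into the $R_t$-weighted part driven by the discrete solution $\mathbf{u}$ and the $O(t)$ source part, bound $\frac{1}{n}\sum_j u_j^2\le C\|f\|_\infty^2$ via Theorem \ref{thm:bound_solution_bfu}, use $w_{t,n}\ge w_{\min}/2$ from \eqref{eq:assume-1}, and apply Cauchy--Schwarz together with the pointwise kernel bound $R_t\lesssim t^{-k/2}$ to produce the $t^{-k/4}$ loss in $L^\infty$ while keeping the $L^2$ bound uniform. The only cosmetic difference is that the paper applies Cauchy--Schwarz in weighted form against the probability weights $R_t/(nw_{t,n})$ and then bounds $R_t\le C_t\|R\|_\infty$, whereas you use unweighted Cauchy--Schwarz followed by $R_t^2\le C_t\|R\|_\infty R_t$; note also a harmless algebra slip in your intermediate $L^2$ display (the right-hand side should read $\frac{1}{w_{t,n}(\bx)}\cdot\frac{1}{n}\sum_j R_t(\bx,\bx_j)u_j^2$, without the extra $1/n$), but your final conclusion $\|a_{t,n}\|_{L^2}^2\le \frac{C}{n}\sum_j u_j^2\le C\|f\|_\infty^2$ is correct.
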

 \begin{proof}
   From the definition of $T_{t,n}$, we have for any $f\in C(\M)$
   \begin{align*}
     T_{t,n}f=\frac{C_t}{n w_{t,n}(\bx)}\sum_{i=1}^n R\left(\frac{|\bx-\bx_i|^2}{4t}\right)u_i+\frac{tC_t }{n w_{t,n}(\bx)}\sum_{i=1}^n R\left(\frac{|\bx-\bx_i|^2}{4t}\right)f(\bx_i)
   \end{align*}
where $(u_1,\cdots,u_n)$ satisfies the equation
\begin{align*}
  \frac{C_t}{n t}\sum_{j=1}^n R\left(\frac{|\bx_i-\bx_j|^2}{4t}\right)(u_i-u_j)=\frac{C_t }{n}\sum_{j=1}^n R\left(\frac{|\bx_i-\bx_j|^2}{4t}\right)f(\bx_j).
\end{align*}
Using Theorem \ref{thm:elliptic_dis}, it is easy to get that
\begin{align*}
  \left(\frac{1}{n}\sum_{i=1}^n u_i^2\right)^{1/2}\le C \|f\|_\infty
\end{align*}
where $C>0$ is a constant only depends on $\M$ and kernel function $R$.

Then
\begin{align*}
  |T_{t,n}f|\le& \left(\frac{C_t}{n w_{t,n}(\bx)}\sum_{i=1}^n R\left(\frac{|\bx-\bx_i|^2}{4t}\right)\right)^{1/2}
\left(\frac{C_t}{n w_{t,n}(\bx)}\sum_{i=1}^n R\left(\frac{|\bx-\bx_i|^2}{4t}\right)u_i^2\right)^{1/2}\\
&+\frac{t C_t }{n w_{t,n}(\bx)}\sum_{i=1}^n R\left(\frac{|\bx-\bx_i|^2}{4t}\right)\|f\|_\infty\\
\le& \left(\frac{C_t}{n w_{t,n}(\bx)}\sum_{i=1}^n R\left(\frac{|\bx-\bx_i|^2}{4t}\right)u_i^2\right)^{1/2}+t\|f\|_\infty\\
\le & \left(\frac{2C_t}{w_{\min}}\right)^{1/2}\left(\frac{1}{n}\sum_{i=1}^n u_i^2\right)^{1/2}+t\|f\|_\infty\le C\|f\|_\infty.
\end{align*}
and 
\begin{align*}
  \|T_{t,n}f\|_{L^2}^2
\le& 2\int_\M \frac{C_t}{n w_{t,n}(\bx)}\sum_{i=1}^n R\left(\frac{|\bx-\bx_i|^2}{4t}\right)u_i^2p(\bx)\mathd \bx+2t^2\|f\|_\infty^2\\
\le & C\left(\frac{1}{n}\sum_{i=1}^n u_i^2+t^2\|f\|_\infty^2\right)\le C\|f\|_\infty^2.
\end{align*}
 \end{proof}
Now, we are ready to prove Theorem \ref{thm:converge_c1}. The main idea is to lift the covergence from $L^2$ to $C^1$ by using the regularity of the kernel function.
 The details are given as following.

\begin{proof} {\it of Theorem \ref{thm:converge_c1}}:

For any $f\in C^1(\M)$, let $u_{t,n}=T_{t,n}f$ and $v_i=T_{t,n}u_{t,n}(\bx_i),\; i=1,\cdots,n$. 
Using the definition of $T_t$ and $T_{t,n}$, $T_tu_{t,n}$ and $T_{t,n}u_{t,n}$ have following representations
  \begin{eqnarray}
    T_tu_{t,n}&=&\frac{1}{w_t(\bx)}\int_{\M}R_t(\bx,\by)T_tu_{t,n}(\by)p(\by)\mathd\by
+\frac{t}{w_t(\bx)}\int_\M\bar{R}(\bx,\by)u_{t,n}(\by)p(\by)\mathd\by,\nonumber\\
T_{t,n}u_{t,n}&=&\frac{1}{n\,w_{t,n}(\bx)}\sum_{i=1}^nR_t(\bx,\bx_i)v_i
+\frac{t}{n\,w_{t,n}(\bx)}\sum_{i=1}^n\bar{R}(\bx,\bx_i)u_i.
\label{eq:Ttn-1}
  \end{eqnarray}
where $u_i=u_{t,n}(\bx_i),\; i=1,\cdots, n$. We know that $(u_1,\cdots,u_n)$ and $(v_1,\cdots,v_n)$ satisfy following equations
respectively
\begin{align*}
&\frac{1}{nt}\sum_{j=1}^nR_t(\bx_i,\bx_j)(u_i-u_j)=\frac{1}{n}\sum_{i=1}^nR_t(\bx_i,\bx_j)f(\bx_j),\\
&  \frac{1}{nt}\sum_{j=1}^nR_t(\bx_i,\bx_j)(v_i-v_j)=\frac{1}{n}\sum_{i=1}^nR_t(\bx_i,\bx_j)u_j.
\end{align*}
Using Theorem \ref{thm:bound_solution_bfu}, we have
\begin{eqnarray}
  \label{eq:bound-uv-2}
  \left(\frac{1}{n}\sum_{i=1}^nu_i^2\right)^{1/2}\le C\|f\|_\infty,\quad \left(\frac{1}{n}\sum_{i=1}^nv_i^2\right)^{1/2}\le C\left(\frac{1}{n}\sum_{i=1}^nu_i^2\right)^{1/2}\le
C\|f\|_\infty
\end{eqnarray}
Denote
  \begin{eqnarray}
    T^1_tu_{t,n}&=&\frac{1}{w_{t,n}(\bx)}\int_{\M}R_t(\bx,\by)T_tu_{t,n}(\by)p(\by)\mathd\by
+\frac{t}{w_{t,n}(\bx)}\int_\M\bar{R}(\bx,\by)u_{t,n}(\by)p(\by)\mathd\by,\nonumber\\
    T^2_tu_{t,n}&=&\frac{1}{w_{t,n}(\bx)}\int_{\M}R_t(\bx,\by)T_{t,n}u_{t,n}(\by)p(\by)\mathd\by
+\frac{t}{w_{t,n}(\bx)}\int_\M\bar{R}(\bx,\by)u_{t,n}(\by)p(\by)\mathd\by.\nonumber
  \end{eqnarray}
We will prove the theorem by upper bound $T_tu_{t,n}-T_t^1u_{t,n}$, $T_t^1u_{t,n}-T_t^2u_{t,n}$ and $T_t^2u_{t,n}-T_{t,n}u_{t,n}$ separately.

First, let us see $T_tu_{t,n}-T_t^1u_{t,n}$.
  \begin{align*}
&    \left|T_tu_{t,n} - T^1_tu_{t,n}\right|\\
\le& \left|\frac{1}{w_{t,n}(\bx)}-\frac{1}{w_{t}(\bx)}\right|\left(
\left|\int_{\M}R_t(\bx,\by)T_tu_{t,n}(\by)p(\by)\mathd\by\right|
+t\left|\int_\M\bar{R}(\bx,\by)u_{t,n}(\by)p(\by)\mathd\by\right|\right)\\
\le& \frac{2C_t}{w_{\min}^2}\sup_{g\in \mathcal{R}_t}(|p_n(g)-p(g)|)\left(
\left|\int_{\M}R_t(\bx,\by)T_tu_{t,n}(\by)p(\by)\mathd\by\right|
+t\left|\int_\M\bar{R}(\bx,\by)u_{t,n}(\by)p(\by)\mathd\by\right|\right)\\
\le& \frac{C}{t^{3k/4}} (\|T_tu_{t,n}\|_{L^2}+t\|u_{t,n}\|_{L^2})\sup_{g\in \mathcal{R}_t}(|p_n(g)-p(g)|)\\
\le &\frac{C}{t^{3k/4}} \|u_{t,n}\|_{L^2}\sup_{g\in \mathcal{R}_t}(|p_n(g)-p(g)|)\\
 \le& \frac{C}{t^{3k/4}}\|f\|_\infty \sup_{g\in \mathcal{R}_t}(|p_n(g)-p(g)|),\nonumber\\
  \end{align*}
Similarly, we have
\begin{align*}
   \left|\nabla (T_tu_{t,n} - T^1_tu_{t,n})\right|\le&  \frac{C}{t^{(3k+2)/4}}\|f\|_\infty
 \sup_{g\in \mathcal{R}_t\cup\mathcal{D}_t}(|p_n(g)-p(g)|),\nonumber
\end{align*}
which proves that
  \begin{eqnarray}
\label{eqn:est-t1}
        \left\|T_tu_{t,n} - T^1_tu_{t,n}\right\|_{C^1}\le  \frac{C}{t^{(3k+2)/4}}\|f\|_\infty
\sup_{g\in \mathcal{R}_t\cup\mathcal{D}_t}(|p_n(g)-p(g)|).
  \end{eqnarray}
Secondly, using Theorem \ref{thm:dis_error} we have
  \begin{eqnarray}
&&    \left| T^1_tu_{t,n}- T^2_tu_{t,n}\right|\nonumber\\
&=&\left|\frac{1}{w_{t,n}(\bx)}
\int_{\M}R_t(\bx,\by)\left(T_tu_{t,n}(\by)-T_{t,n}u_{t,n}(\by)\right)p(\by)\mathd\by\right|\nonumber\\
&\le & Ct^{-k/4}\left\|T_tu_{t,n}-T_{t,n}u_{t,n}\right\|_{L^2}\nonumber\\
&= & Ct^{-k/4}\left\|(T_t-T_{t,n})T_{t,n}f\right\|_{L^2}\nonumber\\
&\le& \frac{C}{t^{3k/4+1}}\|f\|_\infty\left(
\sup_{g\in \mathcal{R}_t \cup \mathcal{R}_t\cdot \mathcal{K}_{t,n}}|p_n(g)-p(g)|+ t\sup_{g\in 
{\mathcal{K}}_{t,n}\cdot \overline{\mathcal{R}}_t}|p_n(g)-p(g)|+t^2\sup_{g\in \overline{\mathcal{K}}_{t,n}\cdot \overline{\mathcal{R}}_t}|p_n(g)-p(g)|\right).\nonumber
  \end{eqnarray}
and
 \begin{eqnarray}
&&    \left| \nabla \left(T^1_tu_{t,n}- T^2_tu_{t,n}\right)\right|\nonumber\\
&=&
\left|\nabla_\bx \left(\frac{1}{w_{t,n}(\bx)}\int_{\M}R_t(\bx,\by)\left(T_tu_{t,n}(\by)-T_{t,n}u_{t,n}(\by)\right)
p(\by)\mathd\by\right)\right|\nonumber\\
&\le & Ct^{-k/4+1/2}\left\|T_tu_{t,n}-T_{t,n}u_{t,n}\right\|_{L^2}\nonumber\\
&= & Ct^{-k/4+1/2}\left\|(T_t-T_{t,n})T_{t,n}f\right\|_{L^2}\nonumber\\
&\le& \frac{C}{t^{k/4+3/2}}\|f\|_\infty\left(
\sup_{g\in \mathcal{R}_t \cup \mathcal{R}_t\cdot \mathcal{K}_{t,n}}|p_n(g)-p(g)|+ t\sup_{g\in 
{\mathcal{K}}_{t,n}\cdot \overline{\mathcal{R}}_t}|p_n(g)-p(g)|+t^2\sup_{g\in \overline{\mathcal{K}}_{t,n}\cdot \overline{\mathcal{R}}_t}|p_n(g)-p(g)|\right).\nonumber
  \end{eqnarray}
This implies that
  \begin{eqnarray}
\label{eqn:est-t12}
    && \left\|T^1_tu_{t,n} - T^2_tu_{t,n}\right\|_{C^1}
\\
&\le&  \frac{C}{t^{k/4+3/2}}\|f\|_\infty\left(
\sup_{g\in \mathcal{R}_t \cup \mathcal{R}_t\cdot \mathcal{K}_{t,n}}|p_n(g)-p(g)|+ t\sup_{g\in 
{\mathcal{K}}_{t,n}\cdot \overline{\mathcal{R}}_t}|p_n(g)-p(g)|+t^2\sup_{g\in \overline{\mathcal{K}}_{t,n}\cdot \overline{\mathcal{R}}_t}|p_n(g)-p(g)|\right).\nonumber
  \end{eqnarray}
Now, we turn to estimate $T_{t,n}u_{t,n}-T^2_tu_{t,n}$. Using \eqref{eq:Ttn-1}, we have
  \begin{eqnarray}
    T_{t,n}u_{t,n}-T^2_tu_{t,n}&=&\frac{1}{w_{t,n}(\bx)}
\left(\frac{1}{n}\sum_{i=1}^nR_t(\bx,\bx_i)v_i-\int_{\M}R_t(\bx,\by)T_{t,n}u_{t,n}(\by)p(\by)\mathd\by\right)
\nonumber\\
&&
+\frac{t}{w_{t,n}(\bx)}\left(\frac{1}{n}\sum_{i=1}^n\bar{R}(\bx,\bx_i)u_i
-\int_\M\bar{R}(\bx,\by)u_{t,n}(\by)p(\by)\mathd\by\right).\nonumber
  \end{eqnarray}
Using \eqref{eq:Ttn-1} again, the first term becomes
  \begin{eqnarray}
   && \left|\frac{1}{n}\sum_{i=1}^nR_t(\bx,\bx_i)v_i-\int_{\M}R_t(\bx,\by)T_{t,n}u_{t,n}(\by)p(\by)\mathd\by\right|
\nonumber\\
&\le& \left|\frac{1}{n}\sum_{i=1}^nR_t(\bx,\bx_i)\left(\frac{1}{n w_{t,n}(\bx_i)}\sum_{j=1}^n R_t\left(\bx_i,\bx_j\right)v_j
+\frac{t }{n w_{t,n}(\bx_i)}\sum_{j=1}^n \bar{R}_t\left(\bx_i-\bx_j\right)u_j\right)\right.\nonumber\\
&&\left.-\int_{\M}R_t(\bx,\by)
\left(\frac{1}{n w_{t,n}(\by)}\sum_{j=1}^n R_t\left(\by,\bx_j\right)v_j
+\frac{t }{n w_{t,n}(\by)}\sum_{j=1}^n \bar{R}_t\left(\by-\bx_j\right)u_j\right)p(\by)\mathd\by\right|\nonumber\\
&\le&  \left|\frac{1}{n}\sum_{j=1}^nv_j\left(\frac{1}{n}\sum_{i=1}^n\frac{R_t(\bx,\bx_i)}{ w_{t,n}(\bx_i)} R_t\left(\bx_i,\bx_j\right)
-\int_{\M}
\frac{R_t(\bx,\by)}{w_{t,n}(\by)} R_t\left(\by,\bx_j\right)p(\by)\mathd\by\right)\right|\nonumber\\
&&+\left|\frac{t}{n}\sum_{j=1}^nu_j\left(\frac{1}{n}\sum_{i=1}^n\frac{R_t(\bx,\bx_i)}{ w_{t,n}(\bx_i)} \bar{R}_t\left(\bx_i,\bx_j\right)
-\int_{\M}
\frac{R_t(\bx,\by)}{w_{t,n}(\by)} \bar{R}_t\left(\by,\bx_j\right)p(\by)\mathd\by\right)\right|\nonumber
  \end{eqnarray}
Using the similar derivation from \eqref{eq:error-b-1-sum} to \eqref{eq:error-b}, we can get
  \begin{eqnarray}
   && \left|\frac{1}{n}\sum_{i=1}^nR_t(\bx,\bx_i)v_i-\int_{\M}R_t(\bx,\by)T_{t,n}u_{t,n}(\by)p(\by)\mathd\by\right|
\nonumber\\
&\le&\frac{C}{t^{k/4}} \left(\frac{1}{n}\sum_{j=1}^nv_j^2\right)^{1/2} C_t\sup_{g\in \mathcal{K}_{t,n}\cdot \mathcal{R}_t}|p_n(g)-p(g)|
+\frac{C}{t^{k/4-1}}\left(\frac{1}{n}\sum_{j=1}^nu_j^2\right)^{1/2}C_t\sup_{g\in \mathcal{K}_{t,n}\cdot \overline{\mathcal{R}}_t}|p_n(g)-p(g)|\nonumber\\
&\le & \frac{C}{t^{3k/4}}\|f\|_\infty\left(\sup_{g\in \mathcal{K}_{t,n}\cdot \mathcal{R}_t}|p_n(g)-p(g)|
+t\sup_{g\in \mathcal{K}_{t,n}\cdot \overline{\mathcal{R}}_t}|p_n(g)-p(g)|\right)\nonumber
  \end{eqnarray}
The second term can be bounded similarly,
  \begin{eqnarray}
   && \left|\frac{1}{n}\sum_{i=1}^n\bar{R}(\bx,\bx_i)u_i
-\int_\M\bar{R}(\bx,\by)u_{t,n}(\by)p(\by)\mathd\by\right|\nonumber\\
&\le &\frac{C}{t^{k/4}}\left(\frac{1}{n}\sum_{j=1}^nu_j^2\right)^{1/2} C_t\sup_{g\in \mathcal{K}_{t,n}\cdot \overline{\mathcal{R}}_t}|p_n(g)-p(g)|
+\frac{C}{t^{k/4-1}}\left(\frac{1}{n}\sum_{j=1}^nf_j^2\right)^{1/2}C_t\sup_{g\in \overline{\mathcal{K}}_{t,n}\cdot \overline{\mathcal{R}}_t}|p_n(g)-p(g)|\nonumber\\
&\le & \frac{C}{t^{3k/4}}\|f\|_\infty\left(\sup_{g\in \mathcal{K}_{t,n}\cdot \overline{\mathcal{R}}_t}|p_n(g)-p(g)|
+t\sup_{g\in \overline{\mathcal{K}}_{t,n}\cdot \overline{\mathcal{R}}_t}|p_n(g)-p(g)|\right)
  \end{eqnarray}
Now, we have
  \begin{eqnarray}
    |T_{t,n}u_{t,n}-T^2_tu_{t,n}|\le \frac{C}{t^{3k/4}}\|f\|_\infty\left(\sup_{g\in \mathcal{K}_{t,n}\cdot \mathcal{R}_t}|p_n(g)-p(g)|
+t\sup_{g\in \mathcal{K}_{t,n}\cdot \overline{\mathcal{R}}_t}|p_n(g)-p(g)|
+t^2\sup_{g\in \overline{\mathcal{K}}_{t,n}\cdot \overline{\mathcal{R}}_t}|p_n(g)-p(g)|\right)\nonumber
  \end{eqnarray}
Using the similar method, we can get
 \begin{eqnarray}
    |\nabla(T_{t,n}u_{t,n}-T^2_tu_{t,n})|\le \frac{C}{t^{3k/4+1/2}}\|f\|_\infty\left(\sup_{g\in \mathcal{K}_{t,n}\cdot \mathcal{D}_t}|p_n(g)-p(g)|
+t\sup_{g\in \mathcal{K}_{t,n}\cdot \overline{\mathcal{D}}_t}|p_n(g)-p(g)|
+t^2\sup_{g\in \overline{\mathcal{K}}_{t,n}\cdot \overline{\mathcal{D}}_t}|p_n(g)-p(g)|\right)\nonumber
  \end{eqnarray}
The estimate of $\|(T_t-T_{t,n})T_{t,n}\|_{C^1}$ in Theorem \ref{thm:converge_c1} is proved.

Similarly, we can obtain the estimate of  $\|(T_t-T_{t,n})f\|_{C^1}$ for any $f\in C(\M)$ which complete the proof.

\end{proof}

\section{Conclusions}

In this paper, we proved that the spectra of the normalized graph laplacian \eqref{eqn:eigen_dis} will converge to the spectral of a weighted Laplace-Beltrami operator
with Neumann boundary condition \eqref{eqn:eigen_neumann} as $t\rightarrow 0$ and the number of sample points goes to infinity. The samples points are assumed to be 
drawn on a smooth manifold according to some probability distribution $p$. Moreover, we also give an estimate of the convergence rate. Up to our knowledge, this is the 
first result about the spectra convergence rate of graph laplacian. However, the estimate of the convergence rate in this paper is far from optimal. There are mainly two
places in the analysis which can be improved in the future. The first one is the estimate of the integral equation \eqref{eq:integral}. Now, we only get $L^2$ estimate, 
however, in the spectra convergence analysis, we need $C^1$ estimate. In this paper, the regularity is lifted by using the regularity of the kernel function. The trade off 
is that a large number $t^{-k/4}$ emerge which reduce the rate of convergence. The other place is the estimate of the covering number. The estimate of the covering number 
is very rough in this paper. More delicate method would give better estimate which could help to improve the estimate of the convergence rate. 

\appendix
\setcounter{section}{1}
\setcounter{equation}{0}
\vspace{5mm}
\noindent
\textbf{Appendix A: Proof of Theorem \ref{thm:elliptic_dis}}

\begin{proposition}(\cite{SS-neumann})
Assume both $\M$ and $\p \M$ are $C^2$ smooth.
There are constants $w_{\min}>0, w_{\max}<+\infty$ and $T_0>0$ depending only on the geometry of $\M$,
so that $$w_{\min} \leq w_t(\bx)=\int_\M R_t(\bx, \by) \mathd \by \leq w_{\max} $$
as long as $t<T_0$.
\label{prop:bound_int_R_t}
\end{proposition}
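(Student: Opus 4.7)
The plan is to reduce both bounds to volume estimates for the intersection $B(\bx, r) \cap \M$ of a small Euclidean ball with the manifold. Since $R(r) = 0$ for $r > 1$ by Assumption (b), the integrand $R(\|\bx-\by\|^2/(4t))$ vanishes unless $\|\bx - \by\| \le 2\sqrt{t}$, so only the local geometry of $\M$ near $\bx$ on scale $\sqrt{t}$ enters. Combined with the boundedness of $R$ on $[0,1]$ (from Assumption (a)) and the pointwise lower bound $R \ge \delta_0$ on $[0, 1/2]$ (from Assumption (c)), both bounds will follow once I establish the uniform comparison
$$c_{\min}\, r^k \;\le\; |B(\bx, r) \cap \M| \;\le\; C_{\max}\, r^k, \qquad \forall\, \bx \in \M,\; r \le r_0,$$
for geometric constants $c_{\min}, C_{\max}, r_0 > 0$ depending only on $\M$. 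The normalizing factor $(4\pi t)^{-k/2}$ in $R_t$ will cancel precisely against the $t^{k/2}$ factor in the volume, giving $t$-independent constants $w_{\max}, w_{\min}$, and one may take $T_0 = r_0^2/4$.

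For the upper bound, I would cover $\M$ by finitely many $C^2$ coordinate charts (possible by compactness), pull back the measure, and apply a uniform bound on the chart Jacobians; the image of $B(\bx,r)\cap \M$ in such a chart sits inside a Euclidean ball of radius $\le \kappa r$, giving $C_{\max}$ at once. For the lower bound at interior points with $\mathrm{dist}(\bx, \p \M) \ge \epsilon_0$, a normal coordinate chart on $\M$ identifies a neighborhood of $\bx$ with a flat Euclidean ball up to bounded distortion, which supplies a uniform lower bound.

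The main obstacle is the lower volume bound for points $\bx$ close to or on $\p \M$, where $B(\bx, r) \cap \M$ can be as small as (roughly) half of a flat $k$-ball. To handle this I would use the $C^2$ smoothness of $\p \M$ to introduce local boundary-straightening charts in which $\M$ becomes a half-space of $\R^k$; the intersection of a Euclidean ball centered at or near the boundary with such a half-space has volume at least $\tfrac12 \omega_k r^k$, and the $C^2$ control of the chart (including the bi-Lipschitz comparison between the ambient distance $\|\bx-\by\|$ in $\R^d$ and the chart coordinate distance) preserves this estimate on $\M$ up to a constant factor. A finite cover of the boundary collar via compactness of $\p \M$ then yields a single uniform constant $c_{\min} > 0$. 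Putting together the interior and boundary estimates completes the proof.
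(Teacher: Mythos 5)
The paper does not prove this proposition but cites it directly from \cite{SS-neumann}, so there is no in-text argument to compare against. Your reduction to a two-sided Ahlfors-type volume comparison $c_{\min} r^k \le |B(\bx,r)\cap\M| \le C_{\max} r^k$ on scale $r \lesssim \sqrt{t}$, established via a finite atlas of $C^2$ charts with bi-Lipschitz control between ambient and chart distances, and a boundary-straightening step giving the half-ball lower bound near $\partial\M$, is the standard and correct way to prove this statement, and the cancellation against the $(4\pi t)^{-k/2}$ normalization with $T_0 = r_0^2/4$ is handled correctly.
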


We have the following lemma about the function $w_{t, n}$.
\begin{lemma}
Under the assumptions in Assumption \ref{assumptions},
if $\D C_t\sup_{f\in \mathcal{R}_t}|p(f)-p_n(f)|\le w_{\min}/2$,
$$w_{\min}/2\leq w_{t, n}(\bx) \leq w_{\max}+w_{\min}/2. $$
\label{lem:bound_w_t_h}
\end{lemma}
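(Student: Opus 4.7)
The plan is to reduce the claim to a direct application of the hypothesis via a triangle inequality, recognizing $w_{t,n}(\bx)$ as the empirical counterpart of an integral covered by Proposition \ref{prop:bound_int_R_t}. Concretely, fix $\bx \in \M$ and let $f_\bx(\by) = R\!\left(\frac{|\bx-\by|^2}{4t}\right)$, which by definition is a member of the function class $\mathcal{R}_t$. By construction,
$$C_t\, p_n(f_\bx) = \frac{C_t}{n}\sum_{j=1}^n R\!\left(\tfrac{|\bx-\bx_j|^2}{4t}\right) = w_{t,n}(\bx),$$
so the hypothesis applied to $f_\bx \in \mathcal{R}_t$ gives
$$\bigl| w_{t,n}(\bx) - C_t\, p(f_\bx) \bigr| = C_t\bigl| p_n(f_\bx) - p(f_\bx)\bigr| \le w_{\min}/2.$$

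Next I would control the ``population'' quantity $C_t\, p(f_\bx) = \int_\M R_t(\bx,\by)\, p(\by)\,\mathd\by$ using Proposition \ref{prop:bound_int_R_t}. Under Assumption \ref{assumptions}, $p$ is bounded above and bounded away from zero on $\M$, so the $p$-weighted integral is pinched between positive multiples of the unweighted $w_t(\bx)$, and in particular satisfies $w_{\min}\le C_t\, p(f_\bx)\le w_{\max}$ (with the constants $w_{\min}, w_{\max}$ understood, as elsewhere in the paper, to absorb the harmless multiplicative factors from $p_{\min}, p_{\max}$).

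Combining the two displays by triangle inequality yields
$$w_{t,n}(\bx) \ge C_t\, p(f_\bx) - w_{\min}/2 \ge w_{\min} - w_{\min}/2 = w_{\min}/2,$$
and
$$w_{t,n}(\bx) \le C_t\, p(f_\bx) + w_{\min}/2 \le w_{\max} + w_{\min}/2.$$
Since $\bx$ was arbitrary, both bounds hold uniformly on $\M$, which is the conclusion.

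The main obstacle is, frankly, minor: there is no deep step — the whole argument is a one-line concentration/triangle inequality once the normalization is untangled. The only bookkeeping subtlety is that $\mathcal{R}_t$ consists of the \emph{unnormalized} kernels, so the factor $C_t = 1/(4\pi t)^{k/2}$ that converts between $\mathcal{R}_t$ and $R_t$ must be tracked carefully; this is exactly why the hypothesis is stated as $C_t \sup_{f\in\mathcal{R}_t}|p(f)-p_n(f)| \le w_{\min}/2$ rather than with a bare sup. Reconciling the unweighted $w_t$ of Proposition \ref{prop:bound_int_R_t} with the $p$-weighted integral that $w_{t,n}$ actually estimates is similarly immediate from Assumption \ref{assumptions}.
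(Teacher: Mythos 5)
Your proof is correct and follows the paper's own argument essentially verbatim: the paper likewise reduces the claim to the observation that $|w_{t,n}(\bx) - C_t\int_\M R(\tfrac{|\bx-\by|^2}{4t})p(\by)\,\mathd\by| \le C_t\sup_{f\in\mathcal{R}_t}|p(f)-p_n(f)|$ together with the bounds of Proposition \ref{prop:bound_int_R_t}. You are, if anything, slightly more careful than the paper in flagging that Proposition \ref{prop:bound_int_R_t} bounds the \emph{unweighted} $w_t(\bx)=\int_\M R_t(\bx,\by)\,\mathd\by$ while $w_{t,n}$ estimates the $p$-weighted integral, a discrepancy the paper silently elides and that is indeed reconciled exactly as you say, by Assumption \ref{assumptions}'s two-sided bounds on $p$.
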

This lemma is a direct consequence of Proposition \ref{prop:bound_int_R_t} and the fact that
$$\left|w_{t,n}(\bx) - C_t\int_\M R\left(\frac{|\bx-\by|^2}{4t}\right) p(\by)\mathd \by\right| \leq C_t\sup_{f\in \mathcal{R}_t}|p(f)-p_n(f)|.$$

\begin{lemma}(\cite{SS-neumann,SS-iso})
\label{lem:coercivity}
 For any function $u\in L^2(\mathcal{M})$,
there exists a constant $C>0$ only depends on $\M$, such that
  \begin{eqnarray}
    \int_{\mathcal{M}}\int_{\mathcal{M}} R_t(\bx,\by)(u(\bx)-u(\by))^2p(\bx)p(\by)\mathd\bx\mathd\by \ge C\int_\mathcal{M} |u(\bx)-\bar{u}|^2p(\bx)\mathd \bx,
  \end{eqnarray}
where
$$\bar{u}=\int_\M u(\bx)p(\bx)\mathd\bx.$$
\end{lemma}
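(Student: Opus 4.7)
The plan is to prove the lemma by recognizing the left-hand side as the quadratic form of a self-adjoint operator and extracting a spectral gap via compactness.

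I would start by expanding the square to rewrite
$$\mathcal{E}_t(u) := \int_\M\int_\M R_t(\bx,\by)(u(\bx)-u(\by))^2 p(\bx) p(\by)\, \mathd\bx\, \mathd\by = 2\bigl\langle u,(W_t-K_t)u\bigr\rangle_{L^2(\M,p)},$$
where $(W_t u)(\bx) = w_t(\bx) u(\bx)$ with $w_t(\bx) = \int_\M R_t(\bx,\by) p(\by)\, \mathd\by$, and $(K_t u)(\bx) = \int_\M R_t(\bx,\by) u(\by) p(\by)\, \mathd\by$. By Proposition \ref{prop:bound_int_R_t}, $w_{\min} \le w_t \le w_{\max}$. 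Since $R_t$ is a continuous kernel and $\M$ is compact, $K_t$ is a compact self-adjoint operator on $L^2(\M,p)$, so $A_t := W_t - K_t$ is self-adjoint with essential spectrum contained in $[w_{\min}, w_{\max}]$ (by Weyl's theorem), and is positive semi-definite since $\mathcal{E}_t(u) \ge 0$. In particular, any spectrum below $w_{\min}$ consists of isolated eigenvalues of finite multiplicity.

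Next I would show that $\ker(A_t)$ is one-dimensional, consisting of constants. Any constant function clearly lies in $\ker(A_t)$. Conversely, if $\mathcal{E}_t(u)=0$, then $u(\bx) = u(\by)$ for $p \otimes p$-a.e.\ pair $(\bx,\by)$ with $R_t(\bx,\by) > 0$; using the assumption $R(r) \ge \delta_0$ on $[0, 1/2]$, this holds whenever $|\bx-\by|^2 < 2t$. Since $\M$ is a compact connected manifold, for $t$ below some threshold depending only on $\M$, any two points of $\M$ can be joined by a chain of points in $\M$ whose consecutive Euclidean distances are less than $\sqrt{2t}$, along which $u$ must be constant. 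Hence $u$ is a.e.\ constant on $\M$. With $0$ therefore an isolated eigenvalue of multiplicity one, the next eigenvalue $\lambda_1(t)$ is strictly positive, and the min-max principle applied on $\{u:\int u\, p = 0\}$ gives
$$\mathcal{E}_t(u) \ge 2\lambda_1(t)\int_\M (u-\bar u)^2 p\, \mathd\bx$$
for every $u \in L^2(\M)$, which is the desired inequality with $C = 2\lambda_1(t)$.

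The main obstacle is understanding the dependence of $C$ on $t$. For smooth $u$, Taylor expansion yields $\mathcal{E}_t(u) \sim c_R\, t \int_\M |\nabla u|^2 p^2$ as $t \to 0$, so $\lambda_1(t) = O(t)$; this reflects that $\mathcal{E}_t$ is a \emph{nonlocal Dirichlet}-type form whose natural scale is $t$, matching the factor $1/t$ present in Theorem \ref{thm:elliptic_dis}. To obtain a $t$-uniform lower bound of the form $\mathcal{E}_t(u)/t \ge C\int(u-\bar u)^2 p$ (which is really what is used downstream), the cleanest route is to split $\mathcal{E}_t(u)$ into a piece controlling $\|u - (K_t u)/w_t\|_{L^2(p)}^2$ (estimated below by the classical Poincaré inequality applied to the mollification $(K_t u)/w_t$, which has an $L^2$-gradient comparable to $\mathcal{E}_t(u)/t$) plus a remainder that vanishes as $t \to 0$; alternatively, one can invoke a Bourgain-Brezis-Mironescu style rescaling onto tangent spaces combined with a partition-of-unity argument on $\M$. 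I expect this uniformity step, rather than the spectral-gap argument itself, to be the technical crux.
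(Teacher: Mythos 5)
The paper gives no internal proof here --- it cites [SS-neumann,SS-iso] --- so the question is whether your argument stands on its own. The spectral skeleton is correct: $\mathcal{E}_t(u)=2\langle u,(W_t-K_t)u\rangle_{L^2(\M,p)}$ is the right identity, $K_t$ is Hilbert--Schmidt (bounded kernel on a compact manifold) hence compact and self-adjoint, $\sigma_{\mathrm{ess}}(W_t-K_t)\subset[w_{\min},w_{\max}]$ by Weyl, and your chain argument correctly pins the kernel of $W_t-K_t$ down to constants, using $R\geq\delta_0$ on $[0,1/2]$ and (implicitly) that $\M$ is connected --- an assumption the paper never states but that any proof of this lemma requires. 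Min-max then yields the displayed inequality with $C=2\lambda_1(t)$.

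The genuine gap is precisely the one you flag but do not close: $C=2\lambda_1(t)$ depends on $t$ and vanishes as $t\to 0$. Indeed, for any fixed smooth $u$ with $\bar u=0$ the left side scales like $c_R\,t\int_\M|\nabla u|^2p^2$ while the right side stays of order one, so a constant literally independent of $t$ cannot exist; what the appendix actually consumes (see how \eqref{eqn:u_2V_0} feeds Theorem~\ref{thm:elliptic_dis}, whose constant is claimed to depend only on $\M$ and $R$) is the quantitative bound $\lambda_1(t)\geq ct$ with $c$ independent of $t$, equivalently the statement with a $1/t$ in front of the double integral. Your abstract compactness argument gives $\lambda_1(t)>0$ for each fixed $t$ but does not exclude $\lambda_1(t)=o(t)$. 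The two repair routes you mention --- Poincar\'e applied to the mollification $K_tu/w_t$ together with a control of $\|u-K_tu/w_t\|_{L^2(p)}$, or a Bourgain--Brezis--Mironescu rescaling onto tangent spaces with a partition of unity --- are the right kind of argument, but you leave them as sketches. That $t$-uniformity step is the entire analytic content of the lemma, and until it is carried out the proof is incomplete.
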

Now, we can prove Theorem \ref{thm:elliptic_dis}.
\begin{proof} {\it of Theorem \ref{thm:elliptic_dis}}

First, we introduce a smooth function $u$ that approximates $\bfu$ at the samples $X_n$.
\begin{eqnarray}
  u(\bx)=\frac{C_t}{nw_{t',n}(\bx)}\sum_{i=1}^nR\left(\frac{|\bx-\bfp_i|^2}{4t'}\right)u_i,
\label{eqn:def_dis_u}
\end{eqnarray}
where $w_{t',n}(\bx)=\frac{C_t}{n}\sum_{i=1}^nR\left(\frac{|\bx-\bfp_i|^2}{4t'}\right)$ and $t'=t/18$. 

Then, we have
\begin{eqnarray}
&&\int_{\mathcal{M}}  \int_{\mathcal{M}}R_{t'}(\bx,\by) \left(u(\bx)-u(\by)\right)^2p(\bx)p(\by)\mathd \bx \mathd\by\nonumber\\
&=& \int_{\mathcal{M}}  \int_{\mathcal{M}}R_{t'}(\bx,\by) \left(\frac{1}{nw_{t',n}(\bx)}\sum_{i=1}^nR_{t'}(\bx,\bfp_i)u_i
-\frac{1}{nw_{t',n}(\by)}\sum_{j=1}^nR_{t'}(\bfp_j,\by)u_j\right)^2p(\bx)p(\by)\mathd \bx \mathd\by\nonumber\\
&=& \int_{\mathcal{M}}  \int_{\mathcal{M}}R_{t'}(\bx,\by) \left(\frac{1}{n^2w_{t',n}(\bx)w_{t',n}(\by)}\sum_{i,j=1}^nR_{t'}(\bx,\bfp_i)R_{t'}(\bfp_j,\by)
(u_i-u_j)\right)^2p(\bx)p(\by)\mathd \bx \mathd\by\nonumber\\
&\le & \int_{\mathcal{M}}  \int_{\mathcal{M}}R_{t'}(\bx,\by) \frac{1}{n^2w_{t',n}(\bx)w_{t',n}(\by)}\sum_{i,j=1}^nR_{t'}(\bx,\bfp_i)R_{t'}(\bfp_j,\by)
(u_i-u_j)^2p(\bx)p(\by)\mathd \bx \mathd\by\nonumber\\
&=&  \frac{1}{n^2}\sum_{i,j=1}^n\left(\int_{\mathcal{M}}  \int_{\mathcal{M}}\frac{1}{w_{t',n}(\bx)w_{t',n}(\by)}
R_{t'}(\bx,\bfp_i)R_{t'}(\bfp_j,\by)R_{t'}(\bx,\by)p(\bx)p(\by)\mathd \bx \mathd\by\right)(u_i-u_j)^2.\nonumber\\
\label{eqn:A1}
\end{eqnarray}
Denote
$$A = \int_{\mathcal{M}}  \int_{\mathcal{M}}\frac{1}{w_{t',n}(\bx)w_{t',n}(\by)}
R_{t'}(\bx,\bfp_i)R_{t'}(\bfp_j,\by)R_{t'}(\bx,\by)p(\bx)p(\by)\mathd \bx \mathd\by$$
and then notice only when $|\bfp_i-\bfp_j|^2\le 36t'$ is $A \neq 0$. For
$|\bfp_i-\bfp_j|^2\le 36t'$, we have
\begin{eqnarray}
A &\le&\int_{\mathcal{M}}   \int_{\mathcal{M}}
R_{t'}(\bx,\bfp_i)R_{t'}(\bfp_j,\by)R_{t'}(\bx,\by)R\left(\frac{|\bfp_i-\bfp_j|^2}{72t'}\right)^{-1}R\left(\frac{|\bfp_i-\bfp_j|^2}{72t'}\right)
p(\bx)p(\by)\mathd \bx \mathd\by\nonumber\\
&\le&\frac{CC_t}{\delta_0}  \int_{\mathcal{M}}   \int_{\mathcal{M}}
R_{t'}(\bx,\bfp_i)R_{t'}(\bfp_j,\by)R\left(\frac{|\bfp_i-\bfp_j|^2}{72t'}\right)p(\bx)p(\by)\mathd \bx \mathd\by\nonumber\\
&\le&CC_t  \int_{\mathcal{M}}   \int_{\mathcal{M} }
R_{t'}(\bx,\bfp_i)R_{t'}(\bfp_j,\by)R\left(\frac{|\bfp_i-\bfp_j|^2}{72t'}\right)p(\bx)p(\by)\mathd \bx \mathd\by\nonumber\\
&\le & CC_t R\left(\frac{|\bfp_i-\bfp_j|^2}{4t}\right).
\label{eqn:A2}
\end{eqnarray}

Combining Equation~\eqref{eqn:A1}, \eqref{eqn:A2} and Lemma~\ref{lem:coercivity}, we obtain
\begin{eqnarray}
\frac{CC_t}{n^2t}  \sum_{i,j=1}^nR\left(\frac{|\bfp_i-\bfp_j|^2}{4t}\right)(u_i-u_j)^2\ge \int_{\mathcal{M}}  (u(\bx)-\bar{u})^2p(\bx)\mathd\bx
\label{eqn:u_2V_0}
\end{eqnarray}

We now lower bound the RHS of the above equation using $\frac{1}{n}\sum_{j=1}^n u_i^2$.
\begin{eqnarray}
 |\bar{u}|&=&\left|\int_{\mathcal{M}}u(\bx)p(\bx)\mathd \bx\right|
=\left| \frac{1}{n}\sum_{j=1}^n \left( u_j \int_{\mathcal{M}}  \frac{C_t}{w_{t',n}(\bx)}R\left(\frac{|\bx-\bfp_j|^2}{4t'}\right) p(\bx)
\mathd \bx \right) \right|.
\end{eqnarray}
Notice that
\begin{eqnarray}
\left| \int_{\mathcal{M}}  \frac{C_t}{w_{t',n}(\bx)}R\left(\frac{|\bx-\bfp_j|^2}{4t'}\right) p(\bx)\mathd \bx  -
\frac{1}{n}\sum_{i=1}^n \frac{C_t}{w_{t',n}(\bfp_i)}R\left(\frac{|\bfp_i-\bfp_j|^2}{4t'}\right)\right| \leq C_t\sup_{f\in \mathcal{K}_{t',n}}|p(f)-p_n(f)|.\nonumber
\end{eqnarray}
Thus we have
\begin{eqnarray}			
|\bar{u}| &\le& \left| \frac{1}{n^2}\sum_{i,j=1}^n \frac{C_t}{w_{t',n}(\bfp_i)}R\left(\frac{|\bfp_i-\bfp_j|^2}{4t'}\right)u_j \right|
+  \left(\frac{1}{n}\sum_{j=1}^n |u_j|\right) \sup_{f\in \mathcal{K}_{t',n}}|p(f)-p_n(f)| \nonumber\\
&\le& \left|\frac{1}{n} \sum_{i=1}^n u(\bfp_i)\right| + \left(\frac{1}{n}\sum_{j=1}^n |u_j|\right) \sup_{f\in \mathcal{K}_{t',n}}|p(f)-p_n(f)| \nonumber \\
&\le&\left| \frac{1}{n^2}\sum_{i,j=1}^n\frac{C_t}{w_{t',n}(\bfp_i)}R\left(\frac{|\bfp_i-\bfp_j|^2}{4t'}\right)(u_j-u_i)\right|
+  \left(\frac{1}{n}\sum_{j=1}^n u_j^2\right)^{1/2} \sup_{f\in \mathcal{K}_{t',n}}|p(f)-p_n(f)|\nonumber\\
&\le & \frac{2}{w_{\min}}\left(\frac{C_t}{n^2}
\sum_{i,j=1}^nR\left(\frac{|\bfp_i-\bfp_j|^2}{4t'}\right)(u_i-u_j)^2\right)^{1/2}+\left(\frac{1}{n}\sum_{j=1}^n u_j^2\right)^{1/2} \sup_{f\in \mathcal{K}_{t',n}}|p(f)-p_n(f)|,\nonumber\\
\label{eqn:u_2V_1}
\end{eqnarray}
Denote
\begin{eqnarray}
A = \int_{\mathcal{M}}\frac{C_t}{w_{t',n}^2(\bx)}R\left(\frac{|\bx-\bfp_i|^2}{4t'}\right)
R\left(\frac{|\bx-\bfp_l|^2}{4t'}\right)p(\bx)\mathd\bx-\nonumber\\
\frac{1}{n}\sum_{j=1}^n\frac{C_t}{w_{t',n}^2(\bfp_j)}R\left(\frac{|\bfp_j-\bfp_i|^2}{4t'}\right)R\left(\frac{|\bfp_j-\bfp_l|^2}{4t'}\right)\nonumber
\end{eqnarray}
and then $\D|A|\le C_t\sup_{f\in \mathcal{K}_{t',n}\cdot\mathcal{K}_{t',n}}|p(f)-p_n(f)|$. At the same time, notice that only when $|\bfp_i-\bfp_l|^2 <16t'$ is $A\neq 0$. Thus we have
\begin{eqnarray}
|A| \le \frac{1}{\delta_0} |A| R(\frac{|\bfp_i-\bfp_l|^2}{72t'}).\nonumber
\end{eqnarray}
Then
\begin{eqnarray}
&&\left|\int_{\mathcal{M}}u^2(\bx)\mathd\bx-\frac{1}{n}\sum_{j=1}^nu^2(\bfp_j)\right| \nonumber\\
&\le&
  \frac{1}{n^2}\sum_{i,l=1}^n|C_tu_iu_l| |A| \nonumber\\
&\le &\frac{C_t}{n^2}\sup_{f\in \mathcal{K}_{t',n}\cdot\mathcal{K}_{t',n}}|p(f)-p_n(f)| \sum_{i, l=1}^n\left|C_t R\left(\frac{|\bfp_i-\bfp_l|^2}{72t'}\right) u_iu_l \right|\nonumber\\
&\le &\frac{C_t}{n^2}\sup_{f\in \mathcal{K}_{t',n}\cdot\mathcal{K}_{t',n}}|p(f)-p_n(f)|\sum_{i, l=1}^n C_t R\left(\frac{|\bfp_i-\bfp_l|^2}{72t'}\right) u_i^2\nonumber\\
& \le&
(w_{\max}+w_{\min}/2)C_t\sup_{f\in \mathcal{K}_{t',n}\cdot\mathcal{K}_{t',n}}|p(f)-p_n(f)| \left(\frac{1}{n}\sum_{i=1}^n u_i^2\right).
\label{eqn:u_2V_2}
\end{eqnarray}
In the last inequality, we use the condition that $C_t\sup_{f\in \mathcal{R}_{t}}|p(f)-p_n(f)|\le w_{\min}/2 $.

Now combining Equation~\eqref{eqn:u_2V_0},~\eqref{eqn:u_2V_1} and \eqref{eqn:u_2V_2}, we have for small $t$
\begin{eqnarray}
  \frac{1}{n}\sum_{i=1}^nu^2(\bfp_i) &=& \int_{\mathcal{M}}  u^2(\bx)p(\bx)\mathd\bx
+(w_{\max}+w_{\min}/2)C_t\sup_{f\in \mathcal{K}_{t',n}\cdot\mathcal{K}_{t',n}}|p(f)-p_n(f)| \left(\frac{1}{n}\sum_{i=1}^n u_i^2\right) \nonumber \\
&\le& 2\int_{\mathcal{M}}  (u(\bx)-\bar{u})^2p(\bx)\mathd\bx+2\bar{u}^2+
(w_{\max}+w_{\min}/2)C_t\sup_{f\in \mathcal{K}_{t',n}\cdot\mathcal{K}_{t',n}}|p(f)-p_n(f)| \left(\frac{1}{n}\sum_{i=1}^n u_i^2\right)
\nonumber\\
&\le & \frac{CC_t}{n^2t}  \sum_{i,j=1}^nR\left(\frac{|\bfp_i-\bfp_j|^2}{4t}\right)(u_i-u_j)^2\nonumber\\
&&
+\max\{w_{\max}+w_{\min}/2,2/w_{\min}\}C_t\sup_{f\in \mathcal{K}_{t',n}\cdot\mathcal{K}_{t',n}\cup \mathcal{K}_{t',n}}|p(f)-p_n(f)| \left(\frac{1}{n}\sum_{i=1}^n u_i^2\right).\nonumber
\end{eqnarray}



Let $\delta=\frac{w_{\min}}{4w_{\max}+3w_{\min}}$. If $\frac{1}{n}\sum_{i=1}^nu^2(\bfp_i)\ge \frac{\delta^2}{n} \sum_{i=1}^nu_i^2$, and 
\begin{align*}
  \max\{w_{\max}+w_{\min}/2,2/w_{\min}\}C_t\sup_{f\in \mathcal{K}_{t',n}\cdot\mathcal{K}_{t',n}\cup \mathcal{K}_{t',n}}|p(f)-p_n(f)|\le \delta^2/2
\end{align*}
then we have completed the proof.
Otherwise, we have
\begin{eqnarray}
\frac{1}{n}\sum_{i=1}^n( u_i-u(\bfp_i))^2 = \frac{1}{n}\sum_{i=1}^n u_i^2 + \frac{1}{n}\sum_{i=1}^n u(\bfp_i)^2 -
\frac{2}{n} \sum_{i=1}^n u_iu(\bfp_i) \ge \frac{(1- \delta)^2}{n}\sum_{i=1}^n u_i^2.
\end{eqnarray}
This enables us to prove the theorem in the case of $\frac{1}{n}\sum_{i=1}^nu^2(\bfp_i) < \frac{\delta^2}{n} \sum_{i=1}^nu_i^2$ as follows.
\begin{eqnarray}
&&  \frac{C_t}{n^2}\sum_{i,j=1}^nR\left(\frac{|\bfp_i-\bfp_j|^2}{4t'}\right)(u_i-u_j)^2\nonumber\\
&=&   \frac{2C_t}{n^2}\sum_{i,j=1}^nR\left(\frac{|\bfp_i-\bfp_j|^2}{4t'}\right)u_i(u_i-u_j)\nonumber\\
&=& \frac{2}{n}\sum_{i=1}^nu_i(u_i-u(\bfp_i))w_{t',n}(\bfp_i)\nonumber\\
&=& \frac{2}{n}\sum_{i=1}^n(u_i-u(\bfp_i))^2w_{t',n}(\bfp_i)+\frac{2}{n}\sum_{i=1}^nu(\bfp_i)(u_i-u(\bfp_i))w_{t',n}(\bfp_i)\nonumber\\
&\ge& \frac{2}{n}\sum_{i=1}^n(u_i-u(\bfp_i))^2w_{t',n}(\bfp_i)-2\left(\frac{1}{n}\sum_{i=1}^nu^2(\bfp_i)w_{t',n}(\bfp_i)\right)^{1/2}
\left(\frac{1}{n}\sum_{i=1}^n(u_i-u(\bfp_i))^2w_{t,n}(\bfp_i)\right)^{1/2}\nonumber\\
&\ge& \frac{w_{\min}}{n}\sum_{i=1}^n(u_i-u(\bfp_i))^2-2(w_{\max}+w_{\min}/2)
\left(\frac{1}{n}\sum_{i=1}^nu^2(\bfp_i)\right)^{1/2}\left(\frac{1}{n}\sum_{i=1}^n(u_i-u(\bfp_i))^2\right)^{1/2}\nonumber\\
&\ge& (w_{\min}(1-\delta)-2(w_{\max}+w_{\min}/2)\delta)\left(\frac{1}{n}\sum_{i=1}^nu_i^2\right)^{1/2}\left(\frac{1}{n}\sum_{i=1}^n(u_i-u(\bfp_i))^2\right)^{1/2}\nonumber\\
&\ge& w_{\min}(1-\delta)^2\left(\frac{1}{n}\sum_{i=1}^nu_i^2\right).
\end{eqnarray}

\end{proof}

\bibliographystyle{abbrv}
\bibliography{poisson}

\end{document}